\documentclass[11pt]{article}
\usepackage[margin=1in]{geometry}

\usepackage{xspace}
\usepackage{xcolor}
\usepackage{amscd}
\usepackage{amsmath}
\usepackage{amssymb}
\usepackage{amstext}
\usepackage{amsthm}
\usepackage{thmtools}
\usepackage{thm-restate}
\usepackage{bbold}
\usepackage{bm}
\usepackage{colonequals}
\usepackage{tikz}
\usepackage{booktabs}
\usepackage{array}
\usepackage{tabularx}
\usepackage{enumitem}
\usepackage[pagebackref=true,colorlinks,urlcolor=blue,linkcolor=purple,citecolor=blue,bookmarks,bookmarksopen,bookmarksnumbered]{hyperref}
\usepackage[capitalise,nameinlink]{cleveref}

\usepackage{mathtools}
\usepackage{amssymb, amsfonts, amsthm}

\usepackage{array}
\usepackage{verbatim}
\usepackage{mathrsfs}
\usepackage{graphicx}
\usepackage{algorithm}
\usepackage{algpseudocode}
\usepackage{caption}

\makeatletter
\newenvironment{breakablealgorithm}
{%
	\begin{center}
		\refstepcounter{algorithm}
		\hrule height.8pt depth0pt \kern2pt
		\renewcommand{\caption}[2][\relax]{
			{\raggedright\textbf{\fname@algorithm~\thealgorithm} ##2\par}%
			\ifx\relax##1\relax
				\addcontentsline{loa}{algorithm}{\protect\numberline{\thealgorithm}##2}%
			\else
				\addcontentsline{loa}{algorithm}{\protect\numberline{\thealgorithm}##1}%
			\fi
			\kern2pt\hrule\kern2pt
		}
		}{%
		\kern2pt\hrule\relax	\end{center}
}
\makeatother

\newcommand*{\alglineref}[1]{\hyperref[#1]{Line~\ref*{#1}}}

\usepackage{tikz}
\usetikzlibrary{decorations.pathreplacing}

\usepackage[toc,page]{appendix}

\usepackage{titlesec}
\titleformat{\subsubsection}[runin]
{\normalfont\normalsize\bfseries}
{\thesubsubsection}{1em}{}

\theoremstyle{plain}
\newtheorem{theorem}{Theorem}[section]
\newtheorem{lemma}[theorem]{Lemma}

\newtheorem{corollary}[theorem]{Corollary}
\newtheorem{assumption}[theorem]{Assumption}

\theoremstyle{definition}
\newtheorem{definition}[theorem]{Definition}

\newtheorem{problem}{Problem}

\newtheorem{claim}[theorem]{Claim}

\newtheorem{property}[theorem]{Property}
\theoremstyle{remark}
\newtheorem{remark}[theorem]{Remark}

\newcommand{\sA}{\mathcal{A}}
\newcommand{\sB}{\mathcal{B}}

\newcommand{\sD}{\mathcal{D}}
\newcommand{\sE}{\mathcal{E}}

\newcommand{\sO}{\mathcal{O}}

\newcommand{\sS}{\mathcal{S}}

\newcommand{\sU}{\mathcal{U}}

\newcommand{\NN}{\mathbb{N}}

\newcommand{\N}{\mathbb{N}}

\newcommand{\R}{\mathbb{R}}

\newcommand{\F}{\mathbb{F}}

\DeclareSymbolFont{bbold}{U}{bbold}{m}{n}
\DeclareSymbolFontAlphabet{\mathbbold}{bbold}

\DeclareMathOperator*{\E}{\mathbb{E}}

\newcommand{\Ber}{\mathrm{Ber}}

\newcommand{\poly}{\mathrm{poly}}

\newcommand{\LPN}{\mathsf{LPN}}

\newcommand{\defeq}{\coloneqq}

\newcommand{\eps}{\varepsilon}
\renewcommand{\epsilon}{\varepsilon}

\newcommand{\enc}{\mathsf{Enc}}
\newcommand{\dec}{\mathsf{Dec}}
\newcommand{\key}{\mathsf{KeyGen}}
\newcommand{\sk}{\mathsf{sk}}
\newcommand{\pk}{\mathsf{pk}}
\newcommand{\negl}{\mathsf{negl}}
\newcommand{\ed}{\Delta_E}
\newcommand{\ham}{\Delta_H}
\newcommand{\emb}{\mathsf{Emb}}

\newcommand{\prcz}{\mathsf{PRC}_0}
\newcommand{\keyz}{\mathsf{KeyGen}_0}
\newcommand{\encz}{\mathsf{Enc}_0}
\newcommand{\decz}{\mathsf{Dec}_0}
\newcommand{\pz}{p_0}

\newcommand{\ecc}{\mathsf{ECC}}

\newcommand{\prg}{\mathsf{PRG}}
\newcommand{\prc}{\mathsf{PRC}}

\newcommand{\syn}{\mathsf{syn}}

\newcommand{\inn}{\mathrm{in}}

\newcommand{\eccseed}{\ecc_{\mathrm{seed}}}
\newcommand{\pseed}{p_{\mathrm{seed}}}
\newcommand{\eccmsg}{\ecc_{\mathrm{msg}}}
\newcommand{\pmsg}{p_{\mathrm{msg}}}
\newcommand{\rmsg}{R_{\mathrm{msg}}}
\newcommand{\eccload}{\ecc_{\mathrm{load}}}
\newcommand{\pload}{p_{\mathrm{load}}}
\newcommand{\rload}{R_{\mathrm{load}}}

\newcommand{\shiftpk}{\mathsf{shift}_{\mathsf{pk}}}
\newcommand{\shiftmsg}{\mathsf{shift}_{\mathrm{msg}}}
\newcommand{\shiftsyn}{\mathsf{shift}_{\mathrm{syn}}}

\newcommand{\sigmsg}{\Sigma_{\mathrm{msg}}}

\newcommand{\sigsyn}{\Sigma_{\mathrm{syn}}}

\newcommand{\prgmsg}{\prg_{\mathrm{msg}}}
\newcommand{\prgsyn}{\prg_{\mathrm{syn}}}

\newcommand{\loc}{\mathsf{loc}}

\newcommand{\msg}{\mathrm{msg}}
\newcommand{\lcs}{\mathsf{LCS}}
\newcommand{\seed}{\mathrm{seed}}

\newif\ifshownotes
\shownotestrue

\ifshownotes
	\newcommand{\znote}[1]{\textcolor{blue}{\textbf{[Zeh: #1]}}}
	\newcommand{\songtaonote}[1]{\textcolor{red}{\textbf{[Songtao: #1]}}}
	\newcommand{\hstnote}[1]{\textcolor{olive}{\textbf{[Shengtang: #1]}}}
	\newcommand{\xnote}[1]{\textcolor{purple}{\textbf{[Xin: #1]}}}
\else
	\newcommand{\znote}[1]{}
	\newcommand{\songtaonote}[1]{}
	\newcommand{\hstnote}[1]{}
	\newcommand{\xnote}[1]{}
\fi

\title{High-Rate Public-Key Pseudorandom Codes for Edit Errors}
\author{
Shengtang Huang\thanks{\texttt{peanuttang@mail.ustc.edu.cn}, School of the Gifted Young, University of Science and Technology of China. Work done in part while visiting Johns Hopkins University.}
\and
Xin Li\thanks{\texttt{lixints@cs.jhu.edu}, Department of Computer Science, Johns Hopkins University.}
\and
Songtao Mao\thanks{\texttt{smao13@jhu.edu}, Department of Computer Science, Johns Hopkins University.}
\and
Zhaienhe Zhou\thanks{\texttt{zhaienhezhou@gmail.com}, School of the Gifted Young, University of Science and Technology of China. Work done in part while visiting Johns Hopkins University.}
}
\date{}
\begin{document}

\maketitle
\begin{abstract}
    Pseudorandom codes (PRCs), introduced by Christ and Gunn (CRYPTO '2024), are error-correcting codes whose codewords are computationally indistinguishable from uniformly random strings, while still being decodable by someone holding the key. They provide a natural primitive for robust and undetectable watermarking, particularly in applications to AI-generated content. Although recent works have obtained strong results for substitution errors, the edit-error setting remains much less understood, especially in the high-rate regime and over small alphabets.

    We study public-key pseudorandom codes against edit errors. First, we give a new reduction showing that binary zero-bit PRCs robust against a constant fraction of substitution errors can be transformed into binary zero-bit PRCs robust against edit errors. Consequently, under any assumption that yields zero-bit Hamming-robust PRCs, one also obtains zero-bit PRCs for edit channels, albeit only for the weaker class of sublinear polynomial edit channels, namely channels with edit error rate $1/n^{\gamma}$ for any constant $\gamma>0$.

    In the high-rate regime, we construct public-key PRCs with rate arbitrarily close to $1$ over sufficiently large constant alphabets, and with rate arbitrarily close to $1/2$ over the binary alphabet. Moreover, if we allow the alphabet size to be $\poly(\lambda)$, where $\lambda$ is the security parameter, then our public-key PRCs can attain the Singleton bound for insertion-deletion channels. Taken together, these results yield the first high-rate public-key binary PRC constructions for edit channels, under the same assumption that yields zero-bit Hamming-robust PRCs.
\end{abstract}
\thispagestyle{empty}
\newpage
\tableofcontents
\thispagestyle{empty}
\newpage
\setcounter{page}{1}

\section{Introduction}
The rapid spread of AI-generated content has made reliable provenance identification increasingly important. A natural approach is watermarking: embedding a hidden signal into generated content so that its origin can later be verified. For such a watermark to be useful, it should satisfy two basic requirements. First, it should provide \emph{undetectability}: even an efficient adversary making many adaptive queries cannot distinguish the watermarked model from the original one. In particular, watermarking should cause no computationally detectable degradation in output quality. Second, it should provide \emph{robustness}: given the secret key, the hidden signal should remain efficiently detectable even after the generated content has been modified. Reconciling these two requirements is the central challenge of watermarking. Starting with \cite{Aaronson_blog_AT_safety, kirchenbauer2023watermark}, a growing line of work has explored watermarking schemes for generative models, including approaches based on modifying the randomness used during generation \cite{CGZ24, zhao2024provable, kirchenbauer2024on, kuditipudi2024robust, fairoze2025publicly}. These works pursue different notions of watermarking and different trade-offs between quality and robustness, and together they highlight the difficulty of achieving both strong undetectability and strong robustness simultaneously.

A recent cryptographic approach to this problem is based on \emph{pseudorandom codes} (PRCs), introduced by Christ and Gunn \cite{christPseudorandomErrorCorrectingCodes2024}. Informally, a PRC is a keyed error-correcting code whose codewords are computationally indistinguishable from uniformly random strings to any efficient observer without the decoding key, while still being decodable by someone holding the key even after the codeword has been corrupted. In addition to robustness and pseudorandomness, PRCs satisfy a soundness property: strings unrelated to genuine codewords should decode to $\bot$ except with negligible probability. These properties make PRCs a natural primitive for building watermarks that are both robust and undetectable.

Since their introduction, pseudorandom codes have been studied from several complementary directions. On the positive side, prior works have given constructions under a range of assumptions, clarified their connections to watermarking and related cryptographic primitives, strengthened robustness guarantees in the \emph{substitution-error} setting, and developed stronger security notions such as adaptive robustness, ideal security, and CCA-style security \cite{golowich2024edit, GG25_RANDOM, AAC+24_stoc25, christ2025improved}. They have also led to concrete watermarking applications for images, video, and other generative models \cite{zhao2025sok,gunn2025an, cohen2025watermarking}. On the negative side, recent works have established strong black-box barriers, showing that PRCs tolerating a constant error rate cannot, in general, be based solely on generic cryptographic primitives \cite{garg2025black,dottling2025separating}. Together, these results suggest that PRCs form a rich and subtle primitive at the intersection of coding theory, cryptography, and watermarking.

For \emph{edit errors} (insertions, deletions, and substitutions), however, the picture is substantially more challenging. Insertions and deletions destroy synchronization, so one must cope not only with corrupted symbols but also with the loss of alignment. Despite recent progress, the theory of pseudorandom codes under edit errors remains far from satisfactory. The original work of \cite{christPseudorandomErrorCorrectingCodes2024} handled only a constant fraction of \emph{random} substitutions and deletions. Golowich and Moitra \cite{golowich2024edit} obtained a public-key PRC robust to a constant fraction of edit errors, but over a \emph{polynomial-sized alphabet}, which translates to a substantially stronger entropy requirement in watermarking applications. More recently, Christ, Golowich, Gunn, Moitra and Wichs \cite{christ2025improved} gave the first \emph{binary-alphabet} PRC robust to a constant fraction of edit errors, but only in the secret-key setting and under the less standard \emph{permuted codes conjecture}. Moreover, both \cite{golowich2024edit} and \cite{christ2025improved} construct only zero-bit PRCs in the edit-error setting. In summary, to the best of our knowledge, prior to this work there was no public-key PRC for adversarial edit channels over any constant-sized alphabet, or any compelling high-rate public-key PRC for edit channels.

From the application side, edit errors naturally model transformations that arise in modern AI pipelines, including paraphrasing, rewriting, cropping, and other operations that disrupt synchronization \cite{krishna2023paraphrasing,sadasivan2023can,qu2025provably,kuhn2024measuring,shu2024visual,awasthi2019parallel}. This is especially relevant because such transformations are often exactly the kinds of post-processing steps an adversary, editor, or downstream system may apply to generated content. Moreover, in watermarking applications, the relevant efficiency parameter is not only a quantitative concern but also affects applicability. In PRC-based watermarking, the codeword is used to drive the sampling randomness of the generative process, and detection requires recovering a noisy version of this codeword from the generated content. Thus, highly redundant or low-rate constructions may require longer or higher-entropy outputs before enough of the embedded structure can be recovered and decoded, which can significantly restrict applicability in low-entropy regimes \cite{kuditipudi2024robust, fairoze2025publicly}. Thus, constructing edit-robust PRCs with high rate is a central goal both theoretically and in practice.

In this work, we develop a general framework for high-rate public-key pseudorandom codes for edit channels over constant-sized alphabets. In particular, we give the first construction of a public-key binary zero-bit PRC with suitable edit robustness, and then use it as a building block to obtain high-rate public-key PRCs over small alphabets.

\subsection{Main Results}

We first introduce the channel notions and assumptions used in our results.

\begin{definition}
    Let $p \in (0,1)$. A channel $\sE:\Sigma^\ast \to \Sigma^\ast$ is called a \emph{$p$-bounded substitution (edit) channel} if, for every string $x \in \Sigma^n$, $\sE(x)$ differs from $x$ in at most $pn$ substitutions (respectively, is within edit distance at most $pn$ from $x$). The related $p$-bounded \emph{insertion-deletion} channel only allows at most $pn$ insertions and deletions.
    
    A channel $\sE:\Sigma^\ast \to \Sigma^\ast$ is called a \emph{sublinear polynomial substitution (edit) channel} if there exists a constant $0 < c<1$ such that, for every string $x \in \Sigma^n$, $\sE(x)$ differs from $x$ in at most $n^{1 - c}$ substitutions (respectively, is within edit distance at most $n^{1 - c}$ from $x$).
\end{definition}

\begin{assumption}[Informal version of \Cref{asp:CG24_comb}, \cite{christPseudorandomErrorCorrectingCodes2024}]\label{asp:informal}
At least one of the following holds:
\begin{itemize}
    \item \textbf{Subexponential LPN:} $\LPN$ is hard for every $2^{O(\sqrt n)}$-time adversary;
    \item \textbf{Polynomial LPN + planted XOR:} Both $\LPN$ and $\mathsf{XOR}$ are hard for every polynomial-time adversary.
\end{itemize}
\end{assumption}
We stress that this is the same assumption that yields zero-bit Hamming-robust PRCs, and in fact, for our constructions we can use any such assumption.

We begin by showing how to obtain zero-bit PRCs that are robust to edit errors from zero-bit PRCs over the binary alphabet that are robust to substitution errors. This yields a generic reduction from Hamming robustness to edit robustness, at the cost of a subpolynomial loss in the tolerable error rate.

\begin{theorem}[Reduction from Hamming to edit PRCs, informal version of \Cref{thm: CGK_construction}]\label{thm:intro-zero}
    Let $\prc_h$ be a binary zero-bit public-key (resp., secret-key) PRC that is robust against every $p$-bounded substitution channel for some constant $p>\frac{1}{4}$. Then there exists a zero-bit public-key (resp., secret-key) PRC over the binary alphabet that is robust against every sublinear polynomial edit channel.
\end{theorem}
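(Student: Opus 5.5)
The plan is to compose the Hamming-robust PRC with the CGK random-walk embedding (Chakraborty--Goldenberg--Kouck\'y), which converts edit distance into Hamming distance, in the form
\[
\Delta_H(\text{CGK}(x),\text{CGK}(y)) \lesssim \Delta_E(x,y)^2 \quad\text{with good probability.}
\]
The label \texttt{CGK\_construction} suggests this is the intended route.

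\textbf{Construction.} Let $c=\prc_h.\enc(\pk)$ have length $n$. We emit a binary string $z$ of length $m$, with $m$ a fixed polynomial in $n$, such that a CGK-style walk driven by public randomness $r$ over $z$ reproduces (a noisy version of) $c$.

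Concretely, the encoder samples $z$ so that reading $z$ with a pointer that advances when the current bit agrees with $r_t$ (as in CGK) outputs $c$ coordinatewise. Equivalently, $z$ is chosen so that $\text{CGK}_r(z)$ contains $c$ in designated positions. The natural choice is to let $z$ itself be the codeword, padded with fresh random bits, and have the decoder compute $\text{CGK}_r(z')$. Here $r$ is hardwired in $\pk$, or in $\sk$ for the secret-key version.

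\textbf{Pseudorandomness.} Pseudorandomness follows by a hybrid argument. First replace $c$ by uniform bits, using the pseudorandomness of $\prc_h$. Then observe that the remaining transformation maps uniform to uniform (each output bit is a fresh random bit or a uniformly masked copy), so the output is exactly uniform. Soundness is inherited: on an unrelated string $z'$, the value $\text{CGK}_r(z')$ is independent of $\sk$, so $\prc_h.\dec$ outputs $\bot$ except with negligible probability.

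\textbf{Robustness.} This is the main step. An edit channel introduces $k\le m^{1-\gamma}$ edits. By the CGK analysis, the two walks on $z$ and $z'$ behave like a lazy random walk on the offset. Each edit shifts the offset, and the walk resynchronizes after $O(k^2)$ steps with constant probability. A Markov or tail bound then gives $O(k^2\log\lambda)$ Hamming disagreements with probability $1-\negl$.

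Choosing $m = n^{C}$ large enough makes $k^2 = m^{2-2\gamma}$. Hence the fraction of disagreement relative to the $n$ positions read by $\prc_h$ is small, provided we sample $n$ positions out of the $m$ walk steps using a secret or public random subset. Those positions are then corrupted at a rate below $p$.

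\textbf{Main obstacle.} The hard part is that the CGK guarantee concerns \emph{fixed} strings against oblivious randomness, whereas the adversarial channel sees $z$. I would handle this as follows. The channel's edits depend only on $z$, which is pseudorandom, so their effect on the walk, whose randomness $r$ sits in the key (or is independent of $z$ and the edits), can be analyzed as if $z$ were uniform. Then the CGK walk-offset argument applies with $r$ independent of the edit pattern.

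Random disagreements in the unsynchronized stretches behave like coin flips, giving error about $1/2$ there. This is where the threshold $p>1/4$ seems to enter: stretches in which the walk is desynchronized contribute disagreement rate about $1/2$, and overall we need a corruption rate below $p$. I would budget $p>1/4$ for the interplay between the desynchronized windows and the subsampling, and I expect the careful concentration argument here to be the crux.
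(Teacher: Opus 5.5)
The CGK direction is right, but the proposal has two genuine gaps: the encoder is never actually constructed, and the robustness argument cannot reach negligible failure probability.

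\textbf{The encoder.} Your ``natural choice'' does not work: taking $z$ to be the codeword padded with random bits and handing $\mathrm{CGK}_r(z')$ to $\prc_h$ fails because $\emb(z,r)$ is a stuttered copy of $z$ (each bit repeated a random number of times). It is far from $c$ in Hamming distance even with no edits at all.

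What is needed is an inversion, which the paper calls $\emb^{-1}$. Run the walk dictated by $r$. Whenever the pointer reaches a new index, copy the next symbol of the (masked) Hamming codeword. At ``stay'' steps where the codeword symbol conflicts with the already-fixed bit, overwrite it. In other words, project the codeword onto the set of valid CGK images and output the preimage.

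For a uniform input, about half of the $1.5L$ steps are stays and about half of those conflict. So roughly $3L/8$ positions are flipped, a $1/4$ fraction. This projection cost, not desynchronized windows, is exactly why $p>1/4$ is required. The injectivity of first-visit times also gives $\emb^{-1}(U_{1.5L},r)\equiv U_L$, which is what the pseudorandomness proof uses.

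Once $z$ is computed from $r$, the CGK distortion lemma no longer applies directly, because it needs both strings to be independent of $r$. (Channels are oblivious here, so the issue is the dependence between $z$ and $r$, not between the edits and $r$.) The paper restores independence exactly with a public one-time pad $e$: $\emb^{-1}(c\oplus e,r)$ is uniform for every fixed $r$. Your informal ``$z$ is pseudorandom, so treat it as uniform'' does not provide this.

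\textbf{Robustness.} The CGK bound $\ham\le c\,k^2$ holds only with probability $1-6/\sqrt{c}$, and this is essentially tight. The offset between the two walks is a lazy random walk with heavy-tailed return time. Even a single edit near the start leaves the walks desynchronized for the whole string with probability $\Omega(1/\sqrt{m})$, corrupting about half of all positions. So ``$O(k^2\log\lambda)$ disagreements with probability $1-\negl(\lambda)$'' is false, and no single embedding gives negligible failure.

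Your parameters also do not close. With $k=m^{1-\gamma}$ edits and $\gamma<1/2$ (the regime the theorem targets), $k^2/m=m^{1-2\gamma}\ge 1$, so the quadratic bound is vacuous. Subsampling positions does not change the error fraction.

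The missing idea is to exploit zero-bitness. Concatenate $\ell=n^{1/(2\gamma)-1}$ independent instances of length $n$, each with its own Hamming key pair, CGK seed and mask.
\begin{itemize}
\item By Markov's inequality, at least $\ell/2$ blocks receive at most $2c_2\sqrt{n}$ edits.
\item For such a block, CGK turns these into $O(c_2^2 n)$ Hamming errors. This is a small constant fraction on top of the $1/4$ projection cost, which is why a constant-fraction Hamming PRC is needed.
\item Each such block decodes with probability about $1/2$, independently across blocks.
\item The decoder tries every substring against every $(\sk_t,r_t,e_t)$ and accepts on any success, so it fails with probability at most $2^{-\Omega(\ell)}$.
\end{itemize}
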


Combining \Cref{thm:intro-zero} with the zero-bit Hamming PRC construction of \cite{christPseudorandomErrorCorrectingCodes2024} under \Cref{asp:informal}, we obtain the following corollary.

\begin{corollary}[Binary zero-bit edit PRC, informal version of \Cref{thm: edit_PRC_zero_bit}]\label{cor:intro-zero-cor}
    Under \Cref{asp:informal}, there exists a binary zero-bit public-key (secret-key) PRC that is robust against every sublinear polynomial edit channel.
\end{corollary}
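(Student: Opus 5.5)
The corollary follows by composing \Cref{thm:intro-zero} with the zero-bit Hamming-robust PRC of \cite{christPseudorandomErrorCorrectingCodes2024}. The only real work is checking that the Christ--Gunn construction meets the hypothesis of \Cref{thm:intro-zero}: it must be a binary zero-bit PRC, public-key or secret-key as needed, robust against every $p$-bounded substitution channel for some constant $p>\frac14$.

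The steps are as follows.

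First, I will invoke the Christ--Gunn result under \Cref{asp:informal}. In either branch of the assumption (subexponential $\LPN$, or polynomial $\LPN$ together with planted $\mathsf{XOR}$), their LDPC-based construction yields a binary zero-bit public-key PRC robust against every $p$-bounded substitution channel for any constant $p<\frac12$. The secret-key version comes either from the same construction with the public key kept secret, or from their secret-key variant.

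Second, I will fix a concrete constant, say $p=\frac13$. This satisfies $\frac14<p<\frac12$, so the resulting $\prc_h$ meets the hypothesis of \Cref{thm:intro-zero} (formally, \Cref{thm: CGK_construction}).

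Third, I will apply \Cref{thm:intro-zero} to $\prc_h$. This gives a binary zero-bit PRC of the same key type that is robust against every sublinear polynomial edit channel. Pseudorandomness, soundness and the key setting are all inherited, because the reduction preserves them.

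The main point to verify is the quantitative robustness. The Christ--Gunn theorem must be stated for adversarial (worst-case) $p$-bounded substitution channels with $p$ a constant above $\frac14$, not only for random $\mathrm{BSC}_p$ noise, and it must hold at the parameter regime that \Cref{thm: CGK_construction} requires, such as the codeword length relative to $\lambda$.

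If their robustness is stated only for the BSC, I would first handle the gap with a standard fix: apply a secret random permutation and one-time pad to the codeword. This is part of their construction and makes any bounded-weight adversarial error pattern look random to the decoder, at error rate at most $p$. Using it requires a short argument that pseudorandomness is preserved under the pad, which holds because a uniformly padded pseudorandom string is still pseudorandom.
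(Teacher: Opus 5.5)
Your proposal is correct and follows the paper's argument: instantiate \Cref{thm:CG24_PRC} at a constant $p = \frac14 + c_1$ with $0 < c_1 < \frac14$ (your $p = \frac13$, i.e.\ $c_1 = \frac1{12}$, works), then feed the resulting $\prc_h$ into \Cref{thm: CGK_construction}, which preserves the public-key/secret-key setting. Your fallback for BSC-only robustness is unnecessary, because the paper's \Cref{thm:CG24_PRC} is already stated for every $p$-bounded Hamming channel, and all channels in the paper are oblivious.
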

We note that this gives the first edit-robust public-key zero-bit PRC over the binary alphabet. Starting from the zero-bit construction above, we then develop a general framework for boosting the rate. Our next result shows that, over a larger alphabet, one can obtain pseudorandom codes with rate arbitrarily close to $1$ while still tolerating either a constant fraction of edit errors or a sublinear polynomial number of edit errors, depending on the underlying zero-bit PRC.

\begin{theorem}[Multi-bit to zero-bit reduction, informal version of \cref{cor:multi_zero_redu_sync}]\label{thm:reduction_to_zero_bit_1}
For every sufficiently small $\varepsilon>0$ and every alphabet $\Sigma$ with $|\Sigma| \ge \poly(1/\varepsilon)$, the following holds.

If there exists a zero-bit public-key (resp., secret-key) PRC over $\Sigma$ that is robust against every $p$-bounded edit channel (resp., sublinear polynomial edit channel), then there exists a public-key (resp., secret-key) PRC over the same alphabet $\Sigma$ with rate $1-\varepsilon$ that is robust against every $p'$-bounded edit channel (resp., sublinear polynomial edit channel) for some $p'=\Omega(\min\{p,\varepsilon\})$.
\end{theorem}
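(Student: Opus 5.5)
The plan is a hybrid construction: use the zero-bit PRC as a short pseudorandom "header", and send the message through a high-rate insdel code whose output is masked by a pseudorandom pad derived from the header. Concretely, to encode $m$ under $\pk$:
\begin{enumerate}
\item Sample a fresh seed $s$.
\item Produce a zero-bit PRC codeword $c_0 \leftarrow \encz(\pk)$ of length $\varepsilon n/2$ that is cryptographically bound to $s$.
\item Encode $m$ with an explicit insdel code of rate $1-\varepsilon/2$ and relative distance $\Omega(\varepsilon)$ over $\Sigma$. Such codes exist for $|\Sigma|\ge \poly(1/\varepsilon)$, e.g.\ Haeupler--Shahrasbi synchronization-string codes.
\item Mask this codeword symbol-wise (addition mod $|\Sigma|$) with $\prg(s)$.
\end{enumerate}

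Binding $s$ to $c_0$ publicly is the delicate part. I would let $s$ be encoded inside the header itself. The header is a concatenation of $k=|s|$ zero-bit blocks, where each block is either a PRC codeword (bit $1$) or uniformly random (bit $0$). Soundness makes random blocks decode to $\bot$, and robustness makes PRC blocks decode to "yes". This gives a public-key encoding of a short seed of length $\lambda$ with rate $O(\lambda/\text{block length})$. That is negligible in the total length once $n\gg \lambda\cdot\poly(\lambda)/\varepsilon$, so the total rate is $1-\varepsilon$.

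\textbf{Pseudorandomness.} This follows by hybrids. First, each header block is replaced by a uniform string, using PRC pseudorandomness of the zero-bit scheme; after this step $s$ is independent of the view. Then $\prg(s)$ is replaced by uniform, which makes the masked payload uniform.

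\textbf{Decoding and soundness.} Decoding in the (sub)polynomial-edit setting proceeds in three stages. First, locate block boundaries. To make the header parse robustly, I would interleave/index blocks with a synchronization string, or place each header block at a fixed position and tolerate misalignment via a sliding window of $\pm pn$ shifts. Second, decode each block by running $\decz$ to recover $s$. Third, unmask the payload region with $\prg(s)$ and run the insdel decoder.

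Unmasking commutes with edits only up to misalignment: an inserted or deleted symbol shifts the pad. I would handle this by decoding the payload in chunks, each with its own indexed pad, using a synchronization-string index so that edits cause only $O(\text{edits})$ mismatched chunks. Soundness follows because on an unrelated string every header block decodes to $\bot$ with overwhelming probability, and we output $\bot$ if fewer than $k$ blocks decode consistently.

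\textbf{Error rate.} The $\min\{p,\varepsilon\}$ bound arises as follows. Edits in the header of rate $p$ are tolerated blockwise, with each block facing at most a $p$ fraction of edits after averaging; by Markov, only a small constant fraction of blocks exceed $2p$. To survive those blocks, the seed bits are themselves protected by a binary ECC. Edits in the payload are corrected provided they number at most $\Omega(\varepsilon) n$.

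\textbf{Main obstacle.} I expect the hardest step to be the alignment issue. Block boundaries and pad offsets must be recovered after adversarial insertions and deletions without spending rate, and the auxiliary structure (synchronization indices) must itself be pseudorandom. I would try hiding the synchronization string under the PRG mask and recovering alignment by trial decoding over candidate offsets, with cost polynomial in $n$.
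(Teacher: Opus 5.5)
Your proposal has a genuine gap in the robustness of the header, and it is not the alignment issue you single out. The step ``each block facing at most a $p$ fraction of edits after averaging; by Markov, only a small constant fraction of blocks exceed $2p$'' is unjustified. The channel's budget is $p'N$, measured against the \emph{whole} codeword length $N$. Your header, however, sits at fixed, publicly known positions (a prefix, or fixed slots), and a channel fixed in advance can spend its entire budget there.

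If the header has negligible length $\lambda\cdot\poly(\lambda)$, as you later assume, then $p'N$ (or $N^{1-c}$ in the sublinear regime) exceeds its length. The channel can then simply delete the whole header, and no ECC on the seed bits helps. If instead the header has length $\varepsilon n/2$, as in your step 2, concentrating all edits there gives a header edit rate of about $2p'/\varepsilon$. The blockwise argument then only yields $p'=O(p\varepsilon)$, not $\Omega(\min\{p,\varepsilon\})$. This is exactly the fragility of the Christ--Gunn layout $\prc_\lambda(s)\circ(\prg(s)\oplus c)$ that the paper points out.

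The missing idea is to hide where the seed-carrying blocks are. The paper encodes $s'=\eccseed(s)\circ 1^{d'}$ into $2d'$ zero-bit blocks of length $\ell=\Theta(n^{\gamma})$. It inserts each block contiguously at uniformly random, pairwise-disjoint locations inside the payload codeword. To make the channel's edit pattern independent of these locations, the whole codeword is made exactly uniform: the key contains a public uniform shift $\shiftpk$, so $\eccload(m,s,\shiftpk)$ is uniform for every fixed $s$, and the zero-bit PRC must have the uniformity property.

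With this in place, the edits charged to each block have expectation at most $(1+o(1))p(N)\ell$. If $p(N)\le(1-\kappa)\pz(\ell)\pseed$ for a constant $\kappa>0$, Markov bounds each block's failure probability by $(1-\kappa+o(1))\pseed$. Hoeffding then shows that, with overwhelming probability, at most $\pseed d'$ blocks in each half are bad, which $\eccseed$ and the padding check absorb. No synchronization is needed for the header: for each $\sk_i$ the decoder runs $\decz$ on every substring. Afterwards the $2d'\ell=o(N)$ marker symbols are treated as extra insertions for the payload decoder.

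Two smaller points. First, your soundness rule fails as stated. A genuine seed has $0$-bits whose uniform blocks decode to $\bot$, so requiring all $k$ blocks to decode rejects genuine codewords, while relaxing it accepts the all-zero seed on unrelated strings. The all-ones padding $1^{d'}$ is what separates genuine codewords from unrelated strings.

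Second, the payload alignment is easier than you fear. Once $s$ is known, the decoder regenerates $\sigma=\prgsyn(s_1)+\shiftsyn$ exactly; it is uniform, hence self-matching for a small constant parameter except with probability $e^{-\Omega(n)}$. The decoder aligns the received $\tilde\sigma$ against it with the Haeupler--Shahrasbi algorithm, and only then subtracts $\prgmsg(s_2)+\shiftmsg$ at the aligned positions before AG decoding. No chunked pads or offset search are required.
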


We then plug in different zero-bit PRC constructions to obtain:

\begin{corollary}[PRC with rate close to $1$, informal version of \cref{cor:body-rateone-result1,cor:body-rateone-result2}]\label{thm:intro-onerate}

For every sufficiently small constant $\varepsilon>0$ and any security 
parameter $\lambda\in\N$, there exist public-key PRCs with rate $1-\varepsilon$ 
such that:
\begin{itemize}
    \item Over alphabets of size $\poly(\lambda,1/\eta)$, for any constant $0<\eta<1$, that are robust against every $(\varepsilon-\eta)$-bounded edit channel, under the same cryptographic assumptions as~\cite{golowich2024edit}
	(local weak PRFs).
    
    \item Over alphabets of size $\poly(1/\varepsilon)$ that are robust against every sublinear polynomial edit channel, under \cref{asp:informal}.
\end{itemize}
\end{corollary}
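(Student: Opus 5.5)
The corollary follows by instantiating \Cref{thm:reduction_to_zero_bit_1} twice, once per bullet, with a suitable zero-bit edit-robust PRC over the appropriate alphabet.

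\textbf{Second bullet.} Start from \Cref{cor:intro-zero-cor}. Under \Cref{asp:informal} it gives a binary zero-bit public-key PRC robust against every sublinear polynomial edit channel. \Cref{thm:reduction_to_zero_bit_1} needs the zero-bit PRC to be over the same alphabet $\Sigma$ with $|\Sigma|\ge\poly(1/\varepsilon)$. So I first lift the binary code to $\Sigma$.

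\begin{itemize}
\item \emph{Lifting.} Take $\Sigma=\{0,1\}^k$ with $k=O(\log(1/\varepsilon))$. A binary PRC codeword of length $kn$ is read as $n$ symbols of $\Sigma$. Pseudorandomness is preserved, since a uniform binary string read in blocks is uniform over $\Sigma^n$.
\item \emph{Decoding.} Map the received $\Sigma$-string back to bits. Each $\Sigma$-edit becomes at most $k$ bit-level edits (an insertion or deletion of a symbol inserts or deletes $k$ bits; a substitution changes at most $k$ bits).
\item \emph{Error budget.} An edit budget of $n^{1-c}$ over $\Sigma$ becomes at most $k n^{1-c}\le (kn)^{1-c/2}$ bit edits for large $n$, since $k$ is constant. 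This is still a sublinear polynomial edit channel, so the binary decoder succeeds.
\item \emph{Soundness.} This is inherited directly, because decoding a string unrelated to the key reduces to the binary decoder on an unrelated binary string.
\end{itemize}

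Applying the sublinear-polynomial version of \Cref{thm:reduction_to_zero_bit_1} then yields a rate-$(1-\varepsilon)$ public-key PRC over $\Sigma$ robust to sublinear polynomial edit channels.

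\textbf{First bullet.} Instantiate the zero-bit PRC with the Golowich--Moitra construction \cite{golowich2024edit}. Under local weak PRFs, it is a public-key zero-bit PRC over alphabet size $\poly(\lambda)$ robust against a $p$-bounded edit channel for every constant $p<1$, with alphabet blowup polynomial in $1/(1-p)$.

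The main obstacle is the parameter: the generic bound $p'=\Omega(\min\{p,\varepsilon\})$ in \Cref{thm:reduction_to_zero_bit_1} loses a constant factor, but we need robustness $\varepsilon-\eta$. I would therefore not use the statement as a black box. Instead I would open the construction behind \cref{cor:multi_zero_redu_sync}, which combines a zero-bit PRC with a high-rate insdel code over a large alphabet, and re-tune it for large alphabets.

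\begin{itemize}
\item \emph{Zero-bit part.} Use the zero-bit PRC only to hide or derive a pseudorandom mask or seed, of length a $\eta/2$-fraction of the total.
\item \emph{Payload.} Encode the payload with a Singleton-approaching insdel code (Haeupler--Shahrasbi style synchronization strings) over alphabet $\poly(1/\eta)$, with rate $1-\varepsilon$ and correcting a $(\varepsilon-\eta/2)$ fraction of insdel errors. Mask it with a PRG output seeded by the zero-bit part, so the whole string is pseudorandom.
\item \emph{Error accounting.} An adversary spending a $(\varepsilon-\eta)$ fraction of edits hits the zero-bit segment with at most that many errors. Since GM tolerates any constant fraction, this segment decodes; the remaining errors fall within the outer code's radius.
\end{itemize}

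I expect the delicate steps to be two things. First, locating the segment boundary under edits: I would interleave positionally or use the GM decoder's tolerance to find alignment. Second, showing that the masked payload, after unmasking, suffers only edits counted in the same budget. The combined alphabet size stays $\poly(\lambda,1/\eta)$, and pseudorandomness follows by a hybrid: replace the zero-bit codeword with random, then the PRG output with random.
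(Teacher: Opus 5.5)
Your second bullet is correct and follows the paper: compose \Cref{cor:intro-zero-cor} with the sublinear-polynomial case of \Cref{thm:reduction_to_zero_bit_1}. Your explicit lift from bits to $\Sigma=\{0,1\}^k$ is a detail the paper leaves implicit. It also preserves the uniformity property that \Cref{thm:const-reduction} requires.

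The first bullet has a genuine gap in the error accounting. You put the zero-bit part in a segment of length about $(\eta/2)N$ and argue that it sees at most $(\varepsilon-\eta)N$ edits, hence decodes. Relative to the segment, that is a $2(\varepsilon-\eta)/\eta$ fraction, which exceeds $1$ once $\eta<2\varepsilon/3$. That is exactly the near-Singleton regime the statement is about, so an oblivious channel can simply delete the segment. Randomizing the position of a single segment does not fix this: the segment is then destroyed with constant probability rather than negligible probability. This is the same fragility of $\prc_\lambda(s)\circ(\prg(s)\oplus c)$ that the paper points out. Two further problems: a zero-bit PRC only signals presence, so it cannot by itself carry a seed; and you never address soundness. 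If every zero-bit check rejects, some default seed gets used, and the payload decoder may then output a message instead of $\bot$.

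The missing idea is the generic reduction of \Cref{alg:const-rate}:
\begin{itemize}
\item Encode $s$ with a constant-rate binary code $\eccseed$ that corrects a $\pseed<1/4$ fraction of errors, and append $1^{d'}$ as a validity check.
\item Encode each bit under its own key, as an $\encz$ codeword for a $1$ or a uniform string for a $0$.
\item Insert these $2d'$ blocks, of total length $2d'\ell=o(N)$, as contiguous intervals at uniformly random disjoint positions in the payload codeword.
\end{itemize}
Every component is exactly uniform: $\prcz$ has the uniformity property, and the payload carries the public shift $\shiftpk$. Hence the oblivious channel's edit pattern is independent of the placement, and each block receives only about $p(N)\ell$ edits in expectation. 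Markov plus Hoeffding then leave at most $\pseed d'$ bad blocks in each half with probability $1-\negl(\lambda)$ (\Cref{claim:bad-int}). The seed code $\eccseed$ and the padding check tolerate this many bad blocks. The decoder finds the blocks by running $\decz(\sk_i,\cdot)$ on every substring. Once $s$ is known, the markers are just $o(N)$ extra insertions, so the radius is that of the payload code of \Cref{lem:payload-good-bound}, namely $\varepsilon-\eta$.

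In particular, you did not need to open the construction. \Cref{thm:const-reduction} gives robustness $(1-\varepsilon')\min\{\varepsilon-\eta,\,p_0\pseed\}$. The zero-bit term is non-binding once $\varepsilon$ is small relative to the Golowich--Moitra constant $p_0$, and the factor $(1-\varepsilon')$ is absorbed by shrinking $\eta$. The constant-factor loss you were worried about in the informal statement applies only to the zero-bit term.
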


In particular, we formulate the first item for Insdel channels because in that setting the achieved tradeoff is essentially optimal: it matches the Singleton bound.

Finally, by instantiating our framework with binary zero-bit PRCs, we obtain a binary construction with positive constant rate. In particular, we achieve rate arbitrarily close to $1/2$ while retaining robustness against a sublinear polynomial number of edit errors.

\begin{theorem}[Binary PRC with rate close to $1/2$, informal version of \Cref{thm:body-binary-reduction}]\label{thm:intro-binary}
For every security parameter $\lambda>0$ and every sufficiently small $\varepsilon>0$, the following holds.

If there exists a zero-bit public-key (resp., secret-key) binary PRC that is robust against every $p$-bounded edit channel (resp., sublinear polynomial edit channel), then there exists a public-key (resp., secret-key) binary PRC with rate $1/2-\varepsilon$ that is robust against every $p'$-bounded edit channel (resp., sublinear polynomial edit channel), for some $p' = \Omega\left(\min\left\{p,\frac{\varepsilon^3}{\log(1/\varepsilon)}\right\}\right)$.
\end{theorem}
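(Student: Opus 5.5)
The plan is to reduce to the large-alphabet framework of \Cref{thm:reduction_to_zero_bit_1} and then binarize the alphabet. The factor-$2$ loss in rate comes from interleaving the binary zero-bit PRC codeword with the payload, bit by bit. The large-alphabet construction has the following shape. A short zero-bit PRC codeword carries the public key, or acts as a pseudorandom "seed/synchronization" part. The message is encrypted with a pseudorandom mask derived from a PRG seed, and that seed is itself protected by the zero-bit PRC. The ciphertext is then encoded with a high-rate insdel error-correcting code, for example a Haeupler--Shahrasbi-style code built from synchronization strings. In the binary case I would pair each symbol position $i$ with two bits $(c_i,m_i)$, where $c$ is the binary zero-bit PRC part and $m$ is the masked payload encoded by a binary insdel code of rate $1-\varepsilon$. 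The overall rate is then about $(1-\varepsilon)/2 \approx 1/2-\varepsilon$.

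The encoder proceeds as follows. It samples a PRG seed $s$ and encodes $s$ together with randomness using the zero-bit, or short-message, PRC. It computes $y=\ecc(\text{msg})\oplus \prg(s)$, where $\ecc$ is a binary insdel code of rate $1-\varepsilon$ with list or unique decoding radius $\Theta(\varepsilon^3/\log(1/\varepsilon))$; this is where that term in $p'$ comes from, since efficient binary insdel codes in the near-capacity regime achieve this. The codeword is the interleaving $x=(c_1,y_1,c_2,y_2,\dots)$. Pseudorandomness follows by a hybrid argument. First, replace $c$ by uniform using pseudorandomness of the zero-bit PRC; in the public-key case the adversary holds $\pk$, which is fine. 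Second, replace $\prg(s)$ by uniform; this is legitimate once $s$ is hidden inside a pseudorandom codeword, which requires the zero-bit PRC to be the multi-bit-capable version from the previous step or a key-encapsulation style PRC. Then $y$ is uniform.

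For decoding, an edit channel at rate $p'$ on $x$ induces, after de-interleaving, at most $O(p')$ edits on each of $c$ and $y$. The catch is that de-interleaving is not well defined after an odd number of insdels. I would handle this by positional averaging: a single insdel shifts parity, and after it the even-position subsequence becomes the odd one. Since the edits are few, I would split into blocks and argue that the parity is locally consistent between error locations. The cleaner approach is to have the decoder try both parities within each window, using the zero-bit PRC decoder on windows (sliding or aligned) to locate the parity of $c$. The robustness of the zero-bit code to $p$-bounded edits absorbs the misaligned regions, giving the $\min\{p,\cdot\}$ term. Once $s$ is recovered, the decoder unmasks $y$ by XORing $\prg(s)$ position-wise. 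This needs alignment, so I would rely on an indexing/synchronization layer, which the large-alphabet framework already handles, and then run the insdel decoder. Soundness is inherited from the zero-bit PRC: on unrelated strings, its decoder outputs $\bot$ except with negligible probability.

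I expect the main obstacle to be this realignment. XORing a pseudorandom mask is incompatible with insertions and deletions, because a shift scrambles the unmasking. My first attempt would be to make the mask local, choosing the block-wise mask $\prg(s,\text{block index})$, and to recover block boundaries via the zero-bit PRC windows. Failing that, I would use a binary synchronization-string encoding of indices folded into $\ecc$, accepting the $\varepsilon^3/\log(1/\varepsilon)$ loss.
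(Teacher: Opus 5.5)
There is a genuine gap, and it sits exactly at the step you flag as the main obstacle: nothing in your plan yields a binary payload that is both pseudorandom and decodable after insertions and deletions. With $y=\ecc(\mathrm{msg})\oplus\prg(s)$, the decoder needs the alignment to strip the mask. But the alignment is what the insdel decoder is supposed to find, and that decoder cannot run on the masked word, which is a codeword of the unstructured coset $\ecc\oplus\prg(s)$.

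Your fallbacks do not fix this:
\begin{itemize}
\item A block-local mask still needs exact block boundaries. The accept/reject output of a zero-bit decoder on sliding windows does not provide them, and any edit inside a block scrambles the rest of that block.
\item Binary strings cannot be $\varepsilon$-self-matching for small $\varepsilon$; a random one already matches its shift by one on about half its positions. An unmasked synchronization layer would also be detectable.
\item The bit-interleaving $(c_1,y_1,c_2,y_2,\dots)$ fails for the same reason. A single insdel near the start swaps the two streams for the entire remainder, so the de-interleaved $c$-stream is at linear edit distance from $c$, not $O(p')$. With $\Theta(p'N)$ (or $N^{1-c}$) unknown parity switches, the pattern cannot be searched efficiently, since one long zero-bit codeword offers no local test.
\item A zero-bit PRC cannot by itself carry $s$. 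Invoking the multi-bit PRC of \Cref{thm:reduction_to_zero_bit_1} is circular, because that scheme lives over an alphabet of size $\poly(1/\varepsilon)$.
\end{itemize}

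The paper avoids both problems.
\begin{itemize}
\item \emph{Payload.} The seed never masks codeword bits. Instead, the effective seed $\prg(s)\oplus\shiftpk$ selects independent random binary linear inner codes in the concatenated construction of \Cref{subsec:base-bianry} (outer Reed--Solomon code). Once $s$ is known, insdels are handled directly by sliding windows and brute-force inner decoding against the seed-determined codes, so no alignment step has to precede unmasking.
\item \emph{Source of the parameters.} The factor $1/2$ and the radius $\Omega(\varepsilon^3/\log(1/\varepsilon))$ come from the rate $1/2-\gamma'$ of these random linear inner codes (\Cref{lem:random-inner-property,lem:prop2-rate-decode}), not from interleaving.
\item \emph{Amplification.} The inner codes are good only with probability $1-1/\poly(n)$. \Cref{thm:amplification} therefore adds another concatenation layer with a sequential window decoder, built on \Cref{lem:random-reject-robust}, to push failure probability to $2^{-\Omega(N')}$.
\item \emph{Carrying the seed.} \Cref{alg:const-rate} uses $2d'$ independent zero-bit instances to signal the bits of $\eccseed(s)\circ 1^{d'}$: a genuine codeword for $1$, a uniform string for $0$. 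These are placed as contiguous blocks at random disjoint positions, occupying only $o(N)$ symbols.
\item \emph{Robustness and soundness.} Uniformity of every component makes the oblivious channel's error pattern independent of the placement, so only few marker blocks are hit hard. $\eccseed$ corrects those, and the $1^{d'}$ padding gives soundness.
\item \emph{Payload decoding.} After $s$ is recovered, the markers count as $o(N)$ extra insertions for the payload decoder. This is where the $\min\{p,\cdot\}$ in $p'$ comes from.
\end{itemize}
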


\begin{corollary}[Informal version of \cref{cor:body-binary-reduction}]\label{cor:intro-binary-cor}
Under \Cref{asp:informal}, we have binary public-key PRCs with rate $1/2-\varepsilon$ that are robust against every sublinear polynomial edit channel.
\end{corollary}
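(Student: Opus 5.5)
The plan is to obtain \Cref{cor:intro-binary-cor} by direct instantiation: chain \Cref{cor:intro-zero-cor} into \Cref{thm:intro-binary}. We do not build a new code. First, we fix the underlying Hamming-robust object. Under \Cref{asp:informal}, the construction of \cite{christPseudorandomErrorCorrectingCodes2024} yields a binary zero-bit public-key PRC that is robust against every $p$-bounded substitution channel. Its robustness comes from an LPN-style noisy parity-check decoder, so $p$ can be any constant below $1/2$. In particular we may take some $p>1/4$, which is the hypothesis \Cref{thm:intro-zero} needs. This is exactly what \Cref{cor:intro-zero-cor} packages: a binary zero-bit public-key PRC $\prcz$ robust against every sublinear polynomial edit channel. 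We therefore invoke \Cref{cor:intro-zero-cor} as stated rather than re-derive it.

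Second, we feed $\prcz$ into \Cref{thm:intro-binary}, using the parenthetical ``sublinear polynomial edit channel'' branch of both its hypothesis and its conclusion. For any sufficiently small $\varepsilon>0$ and security parameter $\lambda$, this produces a binary PRC with rate $1/2-\varepsilon$ that is robust against every sublinear polynomial edit channel. The branch is public-key because $\prcz$ is public-key. Pseudorandomness, soundness and public-key-ness are all preserved by the theorem, so nothing beyond \Cref{asp:informal} is assumed.

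The only step requiring care, and the one I expect to be the main obstacle, is checking that the sublinear regime survives the reduction with the right quantifiers. The statement $p'=\Omega(\min\{p,\varepsilon^3/\log(1/\varepsilon)\})$ is phrased for constant $p$. In the sublinear branch, the exponent $c$ of the output channel must be a positive constant that may depend on $\varepsilon$ and on the zero-bit code's exponent, but not on $n$. Concretely, I would verify three points:
\begin{enumerate}
\item The zero-bit blocks inside the construction have length $\poly(n)$, say $n^{\delta}$.
\item An edit budget of $n^{1-c}$ on the whole codeword translates into at most $(n^{\delta})^{1-c'}$ errors in all but a small fraction of blocks, for some constant $c'>0$.
\item The outer (Hamming/insdel) error correction at rate $1/2-\varepsilon$ tolerates that small fraction of failed blocks.
\end{enumerate}
These points are internal to \Cref{thm:intro-binary}, which already asserts the sublinear branch, so here we only need to confirm that the zero-bit code from \Cref{cor:intro-zero-cor} meets its hypothesis for every length at which the construction invokes it. That holds because \Cref{cor:intro-zero-cor} gives robustness for every $n$ and every constant $c$. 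With that checked, the corollary follows.
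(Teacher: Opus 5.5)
Your proposal is correct and matches the paper's argument: the corollary follows by plugging the CGK-based binary zero-bit public-key PRC of \Cref{cor:intro-zero-cor} (built from the Hamming PRC with some $p>1/4$) into the sublinear-polynomial branch of \Cref{thm:body-binary-reduction}. Two small remarks: the underlying reduction (\Cref{thm:const-reduction}) also requires the zero-bit code to satisfy the uniformity property of \Cref{def:prc_uniformity}, which the CGK construction has because of the public one-time pads $e_t$; and in your third check, the failed marker blocks are absorbed by the seed code $\eccseed$ together with the padding check, not by the rate-$1/2$ payload code.
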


For comparison, we summarize prior PRC constructions together with our results in \Cref{tab:prc-comparison}.

\begin{table}[t!]
    \centering
    \renewcommand{\arraystretch}{1.35}
    \small
    \begin{tabular}{
    >{\centering\arraybackslash}m{2.0cm}
    >{\centering\arraybackslash}m{1.6cm}
    >{\centering\arraybackslash}m{1.9cm}
    >{\centering\arraybackslash}m{1.9cm}
    >{\centering\arraybackslash}m{2.2cm}
    >{\centering\arraybackslash}m{1.5cm}
    >{\centering\arraybackslash}m{3.2cm}}
    \toprule
    \textbf{Work} & \textbf{Rate} & \textbf{Alphabet} & \textbf{Error type} & \textbf{Error bound} & \textbf{PK/SK} & \textbf{Assumptions} \\
    \bottomrule
    \cite{christPseudorandomErrorCorrectingCodes2024}
    & constant
    & binary
    & substitution / random deletion
    & $O(n)$
    & PK / SK
    & LPN or LPN+XOR
    \\
    \midrule
    \cite{golowich2024edit}
    & zero-bit
    & $\poly(\lambda)$
    & edit
    & $O(n)$
    & PK
    & Local weak PRFs
    \\
    \midrule
    \cite{GG25_RANDOM}
    & zero-bit
    & binary
    & substitution
    & $O(n)$
    & PK
    & Planted hyperloop,
    \\
    \midrule
    \cite{GG25_RANDOM}
    & zero-bit
    & binary
    & substitution
    & $O(n)$
    & PK
    & Weak planted XOR
    \\
    \midrule
    \cite{GG25_RANDOM}
    & zero-bit
    & binary
    & substitution
    & $O(n)$
    & SK
    & Unconditional (against polytime, space-bounded adversaries)
    \\
    \midrule
    \cite{AAC+24_stoc25}
    & constant
    & binary
    & substitution
    & $O(n)$
    & PK / SK
    & LPN or LPN+XOR; random oracle for CCA-security
    \\
    \midrule
    \cite{christ2025improved}
    & constant
    & binary
    & edit
    & $O(n)$
    & SK
    & Permuted codes conjecture
    \\
    \midrule
    \midrule
    \cref{cor:intro-zero-cor}
    & zero-bit
    & binary
    & edit
    & $O(n^{1 - c})$
    & PK / SK
    & LPN or LPN+XOR
    \\
    \midrule
    \cref{thm:intro-onerate}
    & $1-\varepsilon$
    & $\poly(\lambda)$
    & edit
    & $O(n)$
    & PK / SK
    & LPN or LPN+XOR
    \\
    \midrule
    \cref{thm:intro-onerate}
    & $1-\varepsilon$
    & $\poly(1/\varepsilon)$
    & edit
    & $O(n^{1 - c})$
    & PK / SK
    & LPN or LPN+XOR
    \\
    \midrule
    \cref{cor:intro-binary-cor}
    & $\frac{1}{2}-\varepsilon$
    & binary
    & edit
    & $O(n^{1 - c})$
    & PK / SK
    & LPN or LPN+XOR
    \\
    \bottomrule
    \end{tabular}
    \caption{Comparison of prior PRC constructions and our new edit-robust PRCs. $n$ is the codeword length of PRCs, and $\lambda$ is the security parameter. Let $c \in (0, 1)$ be any constant. For the precise LPN and LPN$+$XOR assumptions, see \cref{asp:CG24_comb}.}
    \label{tab:prc-comparison}
\end{table}

\subsection{Technical Overviews}

In our later proofs, we work with insertion-deletion (Insdel) errors only, without allowing substitutions. This is without loss up to a factor of $2$, since each substitution can be simulated by one deletion and one insertion.

A key difficulty in constructing PRCs for edit errors is that one can no longer directly apply sparse parity checks, which are naturally resistant to Hamming errors and therefore have been used extensively in Hamming-robust (public-key) PRC constructions. Sparse parity checks are also closely related to linear codes and make it easy to derive the pseudorandom property from well-established hardness assumptions such as LPN and planted XOR. However, sparse parity checks break down completely in the edit-error setting, since even a single edit operation can cause shifts in the entire codeword, and thus the operation does not need to affect the exact index involved in the parity check in order to corrupt it.

One natural approach would then be to transform a Hamming-robust PRC into an edit-robust PRC, for example by attempting to attach a synchronization string \cite{haeuplerSynchronizationStringsCodes2021} to a Hamming-robust PRC. This approach has been very successful in transforming standard codes for Hamming errors to codes for edit errors. However, in the setting of PRCs, it is unclear how to choose the synchronization string. On the one hand, if we use a fixed synchronization string, then part of the codeword remains identical across different queries, which violates the pseudorandom property. On the other hand, random strings are also good synchronization strings, so one can try to use random strings here. However, then the encoder would need to share with the decoder the randomness used to generate the synchronization string, which requires an edit-robust PRC in the first place. Indeed, to overcome this, the work of \cite{golowich2024edit} needs to use a polynomial-sized alphabet to record the index, together with some extra effort to make it pseudorandom.

Constructing public-key edit-robust PRCs poses additional challenges. The construction in \cite{christ2025improved} appears difficult to adapt to the public-key setting, because it relies on randomly permuting the indices and alphabet during code generation to make it edit-robust. These permutations must be used during encoding, and therefore in the public-key setting would need to be included in the public key. However, once they are revealed to the adversary, the Permuted Code assumption used in the pseudorandomness proof no longer seems applicable.

In this paper, we present a novel approach that indeed transforms a Hamming-robust PRC into an edit-robust PRC. In addition, our transformation preserves the alphabet size and the public-key security of the Hamming-robust PRC. We describe the details below.

\paragraph{Zero-bit public-key edit PRCs over binary alphabet.} The crucial tool we use here is an \emph{embedding} from edit distance to Hamming distance. Specifically, the CGK embedding \cite{CGK_embedding} is a randomized mapping $\emb$ that transforms edit distance into Hamming distance with quadratic distortion: for strings $x, y$ with edit distance $k$, $\ham\big(\emb(x, r), \emb(y, r)\big) = O(k^2)$ with constant probability over the randomness of $r$. Furthermore, this function is reversible. Given an embedding string $e$ and the random seed $r$, we can recover $x = \emb^{-1}(e, r)$.

To construct a zero-bit edit-robust PRC, we start from a Hamming-robust $\prc_h$ (e.g., from \cite{christPseudorandomErrorCorrectingCodes2024, GG25_RANDOM}) that tolerates a $(1 / 4 + \varepsilon)$ fraction of substitution errors. The high level idea is that we want to use the CGK embedding to map our edit-robust PRC codewords to codewords in the Hamming-robust PRC. This suggests the following encoding and decoding procedure for a single-block construction:
\begin{itemize}
    \item We first generate the public/secret key pair $(\pk_h, \sk_h)$ via the key generator of $\prc_h$, and also sample a random seed $r$ in the public key for the CGK embedding.
    
    \item For encoding, first compute a Hamming codeword $a \leftarrow \enc_h(1^{\lambda}, \pk_h, 1)$, and then output an edit-robust PRC codeword $y = \emb^{-1}(a, r)$.
    
    \item For decoding, the decoder takes the received string $y'$ and computes $\hat a = \emb(y', r)$. If $\dec_h(\sk_h, \hat a) = 1$, it outputs $1$, otherwise $\bot$.
\end{itemize}

However, one issue with the above is that in the second step, not every binary string is a valid embedding under seed $r$, hence we may not be able to find such a $y = \emb^{-1}(a, r)$. To address this, we ``project'' $a$ to a valid embedding $b$ under seed $r$ and output $y = \emb^{-1}(b, r)$ instead.

Note that the CGK embedding is generated by a random walk on the input indices, and copying the bits of the input string to bits of the output string at corresponding indices. At each step of the random walk, with probability $1/2$ it stays at the same index and with probability $1/2$ it moves to the next index. Whenever the walk stays at the same index, the corresponding output bits must be consistent, because they come from the same input bit. If $a$ violates this, then $a$ is not a valid embedding, but we can repair it by flipping the conflicting bits. We show that doing this will add at most $1 / 4$ fraction of Hamming errors, 
which is still fine because the Hamming-robust PRC can tolerate any $(1/4+\eps)$ fraction of substitution errors.

For pseudorandomness, the key observation here is that for any fixed $r$, if we first ``project'' $a$ to a valid embedding $b$ under $r$ and then set $y = \emb^{-1}(b, r)$, this process actually induces an injective mapping from the indices of $y$ to those of $a$, that is, each coordinate of $y$ is determined by a distinct coordinate of $a$. Hence, when $a$ is a uniformly random string, $y$ is also distributed uniformly. Consequently, by the pseudorandomness of $\prc_h$, the Hamming code $a$ is computationally indistinguishable from uniform, and thus the resulting edit codeword $y$ is also computationally indistinguishable from uniform. Note that the above reasoning holds for any fixed $r$, so $r$ can be made public and treated as part of the public key.

The above single-block construction works whenever the CGK embedding succeeds, but the embedding succeeds only with constant probability. To amplify, we take $\ell = \poly(n)$ independent copies, where $n$ is the length of each block. Each copy uses independent keys and seeds, and the final codeword is the concatenation $y = y_1 \circ \cdots \circ y_\ell$. Edit errors may shift block boundaries, but because this is a zero-bit PRC, which has only one possible message, for decoding we only need to hit a single good block. The decoder enumerates all substrings of the received string via a sliding window, treating each substring as a candidate block. By Markov's inequality at least a constant fraction of the blocks suffer few edit errors. For such a good block the distortion bound guarantees that $\hat a = \emb(y', r)$ lies within the Hamming correction radius with constant probability. Since the blocks are independent, the probability that no good block succeeds is at most $\Theta(1)^{\Theta(\ell)} = \negl(\lambda)$.

Finally, note that for sublinear polynomial edit channels, choosing the final codeword length appropriately allows us to tolerate any polynomially bounded sublinear number of edits. Combining these ingredients yields a zero-bit public-key PRC that is robust to any sublinear polynomial edit channel, under exactly the same cryptographic assumptions required for the underlying Hamming PRC.

\paragraph{Building high-rate PRCs via randomly interleaving seed markers.}
The main challenge in turning a zero-bit PRC into a high-rate PRC is that the decoder must recover not only whether a codeword is present (which is enough for a zero-bit PRC), but also enough of the encoder’s randomness to decode the actual payload. In particular, in LPN-based constructions, merely knowing the parity-check matrix is no longer sufficient. This forces us to introduce an additional payload code, together with extra random or pseudorandom masking to hide its structure. To avoid sacrificing the rate too much, however, we can afford to use only a short truly random seed. Our solution is to protect this short seed using the zero-bit PRC, and then expand it via a pseudorandom generator so that the resulting pseudorandom bits mask the payload code and make the overall codeword look pseudorandom.

In \cite{christPseudorandomErrorCorrectingCodes2024}, this issue is addressed by using a codeword of the form
\[
\prc_{\lambda}(s)\circ\bigl(\prg(s)\oplus c\bigr),
\]
where $c$ is the payload codeword carrying the message information, $s\in\{0,1\}^{\lambda}$ is a short random seed, and $\prc_{\lambda}(s)$ is an encoding of $s$ built from a zero-bit PRC. Concretely, each bit of $s$ is represented by either a genuine zero-bit PRC codeword, indicating that the bit is $1$, or by an independent uniform string, indicating that the bit is $0$. In this way, the decoder can first recover $s$ from $\prc_{\lambda}(s)$, and then use $\prg(s)$ to unmask the payload part $\prg(s)\oplus c$. However, this basic structure is fragile under adversarial errors: the protected seed part $\prc_{\lambda}(s)$ is much shorter than the payload part; consequently, an adversary may concentrate many of the allowed errors on $\prc_{\lambda}(s)$ without violating the overall error budget, causing the recovery of $s$ to fail and thus breaking the entire decoding procedure.

To address this issue, \cite{christPseudorandomErrorCorrectingCodes2024} introduced two additional layers of randomization: a random permutation $\pi$ over the coordinates and an additive mask $z$. Together, these ensure that both the locations of the blocks in $\prc_{\lambda}(s)$ and the symbols themselves are uniformly distributed. As a result, for any fixed error channel, the corruption is spread nearly evenly across the different parts of the codeword, so that with probability $1-\negl(\lambda)$, every individual part incurs only a small fraction of the total errors.

For edit errors, however, this approach no longer applies directly, because insertions and deletions destroy synchronization. While a random permutation can easily hide the positions of the protected blocks, it becomes much harder to recover those positions once insertions and deletions have shifted the coordinates. Similarly, a global mask is difficult to remove without first recovering the correct alignment, since one must know which received symbol corresponds to which original position.

To solve these two issues, we use the following two methods: 
\begin{itemize}
    \item To locate and recover the zero-bit PRC codewords, we insert each entire zero-bit codeword into the payload string as a contiguous block, rather than partitioning it into smaller pieces. The insertion locations of the different zero-bit codewords are sampled uniformly at random, subject to the constraint that the corresponding intervals are pairwise disjoint. The decoder first searches over all candidate intervals in the received word and attempts to recover each zero-bit PRC block separately. However, this interleaving alone does not guarantee that every block can be decoded with high probability. For example, if the edit channel concentrates a constant fraction of its deletions on one interval, then any zero-bit PRC block inserted there may be heavily corrupted, causing the corresponding seed bit to be decoded incorrectly. In general, we can only expect that a constant fraction of the zero-bit PRC blocks are recovered correctly. To overcome this, we add an additional layer of binary Hamming coding: instead of embedding the seed $s$ directly, we first encode it with a binary error-correcting code and then protect the resulting bits using the zero-bit PRC blocks. This ensures that the encoder’s random seed $s$ can still be recovered even if some of the zero-bit PRC blocks fail to decode. In addition, we append a suffix of $1$’s to the seed encoding, which serves as a validation string and prevents confusion between genuine encodings of zero-bits and null codewords, that is, random strings not generated using the public key $\pk$.
    \item To prevent the adversarial channel $\sE$ from exploiting statistical information about the codeword in order to target specific parts of the encoding, we also need the final codeword distribution to be uniform and independent of the encoding procedure. To achieve this, we require that the zero-bit PRC codewords are independently and uniformly distributed over the randomness of the public key $\pk$ and the encoder. For the payload part, instead of applying a single global mask, we mask the output of $\prg(s)$ directly. Thus, once the seed $s$ is recovered, the corresponding mask can also be removed. These two components are independent, and consequently the channel cannot gain any statistical advantage from the encoding procedure when deciding how to corrupt the codeword.
\end{itemize}

After recovering $s$, the inserted $\prc_{\lambda}$ blocks can be viewed simply as additional insertions, whose total contribution is only an $o(1)$ fraction of the overall codeword length. Therefore, if the underlying channel introduces a $p$ fraction of edit errors and the payload code can tolerate a $p+\varepsilon$ fraction of edit errors for some constant $\varepsilon>0$, then the original message can still be recovered correctly.

\paragraph{High-rate code via synchronization strings.}
Given the reduction above, it remains to instantiate the payload code with a randomly encoded code that is robust to insertion-deletion errors (insdel errors for short). For this purpose, we use a standard synchronization-string-based construction of insdel codes \cite{haeuplerSynchronizationStringsCodes2021}. A key fact, noted in \cite{haeuplerSynchronizationStringsCodes2021}, is that for every sufficiently small constant $\varepsilon>0$, a uniformly random string is $\varepsilon$-self-matching with probability $1-\exp(-\Omega(n))$. This is exactly the level of reliability we need, since in our application the decoding guarantee must hold with probability $1-\negl(n)$.

To make the construction both uniform and pseudorandom, we split the seed $s$ into two parts, say $s=(s_1,s_2)$. The first part is used to generate the synchronization string through $\prgsyn(s_1)$, while the second part is used to generate the masking string through $\prgmsg(s_2)$, which hides the payload codeword and makes the overall encoding pseudorandom. In addition, we apply public shift vectors to both components. Thus, conditioned on the public shifts, each component is uniformly distributed, while the decoder can still recover them once the seed is reconstructed.

The role of the synchronization string is to convert insdel errors into ordinary symbol corruptions after alignment. Concretely, suppose $\sigma'$ is the synchronization string embedded in the codeword, and $\tilde{\sigma}'$ is its corrupted version after at most $k$ insdel errors. If $\sigma'$ is $\varepsilon$-self-matching, then the synchronization algorithm of \cite{haeuplerSynchronizationStringsCodes2021} produces an approximate matching between positions of $\sigma'$ and positions of $\tilde{\sigma}'$. After placing the matched symbols into their intended coordinates and treating unmatched positions as erasures, the misalignment introduces at most $k+3\sqrt{\varepsilon}n$ corrupted or erased positions in total. In other words, once synchronization succeeds, the insdel channel is reduced to a Hamming-type error pattern with only an additional $3\sqrt{\varepsilon}n$ loss beyond the original $k$ edits.

Finally, we use a standard algebraic-geometry code as the base code in the synchronization-string construction. We choose it to have sufficiently high rate and a sufficiently large alphabet so that appending the synchronization string incurs only a negligible loss in the overall rate.

\paragraph{High-rate binary code via concatenation with random linear inner code.}
The approach with synchronization string discussed above no longer works well in the binary setting, since the synchronization needs a sufficiently large (albeit constant-sized) alphabet. More generally, in the theory of binary insdel codes, one typically relies on concatenation in order to obtain efficiently decodable constructions. To address the synchronization problem, many prior works insert long runs of $0$'s between adjacent inner codewords, using these buffers as additional markers for the decoder, such as the Schulman-Zuckerman code \cite{schulman2002asymptotically} and its subsequent improvements \cite{guruswami2016efficiently,guruswami2017deletion}. Another approach is to append additional redundancy derived from a document exchange protocol, as in \cite{cheng2018deterministic}. However, such additional markers are not only easy for the decoder to identify, but also easy for an adversarial distinguisher to detect, and therefore they destroy pseudorandomness. In our construction, we use the insdel binary code in \cite{cheng2023linear}, which does not rely on any additional markers. Instead, the construction uses different inner codes for different blocks. For our purposes, these inner codes do not need to be fixed in advance to satisfy any particular structural property; instead, independently sampled random linear codes already suffice. Moreover, since the randomness specifying the inner codes is generated as $\prg(s)\oplus z$, where $z$ is the public shifting vector, the resulting codeword is naturally uniformly distributed over the randomness of $z$.

The only issue with this construction is that, in order to maintain both high rate and efficient decoding, the inner codeword length is only $\Theta(\log n)$, since the decoding guarantee for each inner code is obtained by brute force over logarithmic-length blocks. As a result, randomly chosen inner codes are good with probability only $1-1/\poly(n)$, rather than with probability $1-\negl(n)$. To overcome this, we introduce an additional layer of concatenation, using $n^\beta$ such seeded codewords as building blocks for some $0<\beta<1$. This amplifies the probability that the overall construction is good from $1-1/\poly(n)$ to $1-\negl(n)$. The decoder still proceeds hierarchically, from the inner level to the outer level, and by choosing the parameters appropriately, we ensure that the final decoding succeeds with required probability.

\subsection{Open Problems and Discussions}

Our reduction based on the CGK embedding has the advantage that it does not require any additional cryptographic assumptions beyond those already needed for the underlying Hamming-robust PRC. However, it only yields robustness against a sublinear polynomial fraction of edit errors, because the CGK embedding maps $k$ edit errors to roughly $k^2$ Hamming errors. This quadratic loss is inherent to the CGK approach and cannot, in general, be improved, even in average-case settings. It is therefore an important open problem to find a different reduction from edit robustness to Hamming robustness with substantially smaller distortion. Such a reduction could plausibly lead to public-key PRCs of constant rate against constant-fraction edit channels.

Another important direction is to identify compelling computational assumptions under which one can construct public-key pseudorandom codes that are directly robust to edit errors. At present, known constructions of edit-robust PRCs either proceed by reducing edit robustness to Hamming robustness, as in our work and \cite{golowich2024edit}, or else obtain edit robustness only in the secret-key setting under less standard assumptions, such as in \cite{christ2025improved}. It is therefore a natural open problem to formulate and justify assumptions that directly support public-key edit-robust PRCs. Such assumptions could lead to conceptually simpler constructions and potentially to stronger guarantees, such as constant rate over small alphabets against a constant fraction of edit errors.

\subsection{Paper Organization}

In \Cref{sec:preliminary}, we review the necessary preliminaries. In \Cref{sec:CGK_edit_PRC}, we apply the CGK embedding to construct a binary zero-bit public-key PRC that is robust against every sublinear polynomial edit channel, thereby proving \Cref{thm:intro-zero}. In \Cref{sec:generic_reduction_to_zero_bit}, we develop a general black-box framework for transforming a zero-bit PRC into a higher-rate PRC. In \Cref{sec:rate-one}, we instantiate this framework using a synchronization-string-based payload code to prove \Cref{thm:reduction_to_zero_bit_1}. Finally, in \Cref{sec:binary}, we combine the framework with our binary concatenation-based payload code to prove \Cref{thm:intro-binary}.

\section{Preliminaries}\label{sec:preliminary}

For an integer $n\in\N$, we write $[n]\defeq\{1,2,\dots,n\}$.
For a string $x\in\Sigma^{\ast}$ over an alphabet $\Sigma$, let $|x|$ denote its length.
For $i\in[|x|]$, let $x_i$ be the $i$-th symbol.
For an interval $I = [i , j]\ (1\leq i\leq j\leq |x|)$, let $x_I$ or $x_{[i:j]}$ denote the substring $x_i x_{i+1}\cdots x_j$.
For strings $x,y$, we write $x\circ y$ for concatenation.
We write $U_n(\Sigma)$ for the uniform distribution over $\Sigma^n$. When the alphabet $\Sigma$ is clear in the context, we may abbreviate it as $U_n$.
Unless otherwise stated, all algorithms are probabilistic polynomial time (PPT).
We write $y \gets \sA(x)$ for running a randomized algorithm $\sA$ on input $x$ and outputting $y$.
We write $x \gets \Omega$ or $x \gets \sD$ for sampling $x$ uniformly from a finite set $\Omega$ or from a distribution $\sD$, respectively.

A function $\negl:\N\to\R_{\ge 0}$ is \emph{negligible} if for every constant $c>0$ there exists $N$ such that
$\negl(n) \leq n^{-c}$ for all $n\ge N$.

\begin{lemma}[Hoeffding's inequality]\label{lem:Hoeffding_ineq}
	Let $X_1, \ldots, X_n$ be independent random variables such that $X_i \in [a_i, b_i]$. Let $X = X_1 + \cdots + X_n$ and $\mu = \E[X]$. Then for any $t > 0$, we have
	$$
		\Pr[|X - \mu| \ge t] \le 2 \exp\left( -\frac{2 t^2}{\sum_{i = 1}^n (b_i - a_i)^2} \right).
	$$
\end{lemma}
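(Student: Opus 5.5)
This is the standard Hoeffding inequality, and the plan is the Chernoff--Cramér method with Hoeffding's lemma bounding each moment generating function. By symmetry, it suffices to prove the one-sided bound
\[
\Pr[X-\mu\ge t]\le \exp\left(-\frac{2t^2}{\sum_{i=1}^n (b_i-a_i)^2}\right).
\]
Applying the same bound to the variables $-X_i\in[-b_i,-a_i]$ controls the lower tail, and a union bound over the two tails gives the factor $2$.

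For the upper tail, fix $s>0$ and apply Markov's inequality to $e^{s(X-\mu)}$:
\[
\Pr[X-\mu\ge t]\le e^{-st}\,\E\bigl[e^{s(X-\mu)}\bigr]=e^{-st}\prod_{i=1}^n \E\bigl[e^{s(X_i-\E X_i)}\bigr].
\]
The factorization into a product uses the independence of the $X_i$.

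The key step, and the only one with real content, is Hoeffding's lemma. It says that if $Y$ is a random variable with $\E Y=0$ and $Y\in[a,b]$, then $\E e^{sY}\le e^{s^2(b-a)^2/8}$. To prove it, first use convexity of $y\mapsto e^{sy}$ on $[a,b]$:
\[
e^{sy}\le \frac{b-y}{b-a}e^{sa}+\frac{y-a}{b-a}e^{sb}.
\]
Taking expectations and using $\E Y=0$ bounds $\E e^{sY}$ by $e^{\varphi(u)}$, where $u=s(b-a)$, $\theta=-a/(b-a)$, and $\varphi(u)=-\theta u+\log(1-\theta+\theta e^{u})$. One checks that $\varphi(0)=\varphi'(0)=0$ and that $\varphi''(u)=q(1-q)\le 1/4$ for some $q\in[0,1]$. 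Taylor's theorem then gives $\varphi(u)\le u^2/8$. I expect verifying this second-derivative bound to be the main (though routine) obstacle.

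Applying the lemma to $Y_i=X_i-\E X_i$, which has mean zero and ranges over an interval of length $b_i-a_i$, and substituting into the product gives
\[
\Pr[X-\mu\ge t]\le \exp\Bigl(-st+\tfrac{s^2}{8}\textstyle\sum_i (b_i-a_i)^2\Bigr).
\]
Optimizing over $s>0$, the choice $s=4t/\sum_i(b_i-a_i)^2$ yields exactly the claimed one-sided bound, which completes the proof.
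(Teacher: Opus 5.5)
Your proof is correct. It is the standard Chernoff--Cram\'er argument with Hoeffding's lemma, and the choice $s=4t/\sum_i(b_i-a_i)^2$ does give the exponent $-2t^2/\sum_i(b_i-a_i)^2$. The paper quotes this inequality as a standard preliminary without proof, so there is no alternative route to compare against.

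Two small points are worth making explicit. First, $\E Y=0$ forces $a\le 0\le b$, and that is what places $\theta$, and hence $q$, in $[0,1]$. Second, any $X_i$ with $a_i=b_i$ is almost surely constant and should be set aside before the convexity step, since that step divides by $b-a$.
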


\subsection{Metric and Channel Models}

We measure errors using either Hamming distance or edit distance.
\begin{itemize}
	\item For two strings $x, y \in \Sigma^n$ of equal length, the Hamming distance $\ham(x,y)$ is the number of positions at which they differ.

	\item For two strings $x \in \Sigma^n$ and $y \in \Sigma^*$ of possibly different lengths, the edit distance $\ed(x,y)  = |x| + |y| - 2 \cdot \lcs(x, y)$ is the minimum number of insertions and deletions required to transform $x$ into $y$, where $\lcs(x, y)$ is the length of the longest common subsequence between $x$ and $y$.

    This definition differs slightly from the classical edit distance: substitutions are not allowed, or equivalently, each substitution is counted as two operations. This formulation is chosen for technical convenience in our analysis and proofs.
\end{itemize}

\begin{definition}[$p(\cdot)$-bounded channels]
	Let $p : \N \to [0,1]$ be a function.
	\begin{itemize}
		\item A channel $\sE : \Sigma^* \to \Sigma^*$ is called a \emph{$p(\cdot)$-bounded Hamming channel} if for every input $x \in \Sigma^*$, the output $x' = \sE(x)$ satisfies $|x'| = |x|$ and
          $$
              \ham(x, x') \le p(|x|) \cdot |x|.
          $$

		\item A channel $\sE : \Sigma^* \to \Sigma^*$ is called a \emph{$p(\cdot)$-bounded edit channel} if for every input $x \in \Sigma^*$, the output $x' = \sE(x)$ satisfies
          $$
              \ed(x, x') \le p(|x|) \cdot |x|.
          $$
	\end{itemize}

    In particular, when $p : \N \to [0, 1]$ is a constant function, we may abbreviate it as the $p$-bounded Hamming/edit channel. And when $p(x) = c / x^\gamma$ for some constant $c > 0$ and $0 < \gamma < 1$, we also call such channel as a sublinear polynomial Hamming/edit channel.
\end{definition}
In this paper, all channels are assumed to be oblivious, namely, their error patterns are chosen independently of the sampled keys and codewords.

\subsection{Pseudorandom Code}

\begin{definition}[Secret-key PRC]
	Let $\Sigma$ be a fixed alphabet and let $\sE:\Sigma^{\ast}\to\Sigma^{\ast}$ be a channel.
	A secret-key pseudorandom error-correcting code (secret-key PRC) with robustness to $\sE$ is a triple of randomized polynomial-time algorithms
	$(\key,\enc,\dec)$,
	together with functions $\ell,n,k:\NN\to\NN$, such that for every $\lambda\in\NN$, $\key(1^\lambda)\in\{0,1\}^{\ell(\lambda)}$, $\enc:\{1^\lambda\}\times\{0,1\}^{\ell(\lambda)}\times \Sigma^{k(\lambda)}\to \Sigma^{n(\lambda)}$, $\dec:\{1^\lambda\}\times\{0,1\}^{\ell(\lambda)}\times \Sigma^{\ast}\to \Sigma^{k(\lambda)}\cup\{\bot\}$,
	and the following conditions hold:

	\begin{itemize}
		\item \textbf{Robustness.}
		      For every $\lambda\in\NN$ and every $m\in\Sigma^{k(\lambda)}$,
		      $$
			      \Pr_{\sk\gets \key(1^\lambda)}\Big[
				      \dec(1^\lambda,\sk,\sE(x))=m :x\gets \enc(1^\lambda,\sk,m)
				      \Big] \ge 1-\negl(\lambda).
		      $$

		\item \textbf{Soundness.}
		      For every fixed $c\in\Sigma^{\ast}$,
		      $$
			      \Pr_{\sk\gets \key(1^\lambda)}\big[\dec(1^\lambda,\sk,c)=\bot\big]
			      \ge 1-\negl(\lambda).
		      $$

		\item \textbf{Pseudorandomness.}
		      For any probabilistic polynomial-time adversary $\sA$,
		      $$
			      \left|
			      \Pr_{\sk\gets \key(1^\lambda)}\big[\sA^{\enc(1^\lambda,\sk,\cdot)}(1^\lambda)=1\big]
			      -
			      \Pr_\sU\big[\sA^{\sU}(1^\lambda)=1\big]
			      \right|
			      \leq \negl(\lambda),
		      $$
		      where $\sU$ is an oracle that answers each query with an independent uniform sample from $\Sigma^{n(\lambda)}$.
	\end{itemize}
\end{definition}

\begin{definition}[Public-key PRC]
	Let $\Sigma$ be a fixed alphabet and let $\sE:\Sigma^{\ast}\to\Sigma^{\ast}$ be a channel.
	A public-key pseudorandom error-correcting code (public-key PRC) with robustness to $\sE$
	is a triple of randomized polynomial-time algorithms
	$(\key,\enc,\dec)$,
	together with functions $\ell_{\pk},\ell_{\sk},n,k:\NN\to\NN$, such that for every $\lambda\in\NN$,
	$$
		(\pk,\sk)\gets \key(1^\lambda)\in \{0,1\}^{\ell_{\pk}(\lambda)}\times \{0,1\}^{\ell_{\sk}(\lambda)},
	$$
	$$
		\enc:\{1^\lambda\}\times\{0,1\}^{\ell_{\pk}(\lambda)}\times \Sigma^{k(\lambda)}\to \Sigma^{n(\lambda)},
		\qquad
		\dec:\{1^\lambda\}\times\{0,1\}^{\ell_{\sk}(\lambda)}\times \Sigma^{\ast}\to \Sigma^{k(\lambda)}\cup\{\bot\},
	$$
	and the following conditions hold:

	\begin{itemize}
		\item \textbf{Robustness.}
		      For every $\lambda\in\NN$ and every $m\in\Sigma^{k(\lambda)}$,
		      $$
			      \Pr_{(\pk,\sk)\gets \key(1^\lambda)}\Big[
				      \dec(1^\lambda,\sk,\sE(x))=m : x\gets \enc(1^\lambda,\pk,m)
				      \Big] \ge 1-\negl(\lambda).
		      $$

		\item \textbf{Soundness.}
		      For every fixed $c\in\Sigma^{\ast}$,
		      $$
			      \Pr_{(\pk,\sk)\gets \key(1^\lambda)}\big[\dec(1^\lambda,\sk,c)=\bot\big]
			      \ge 1-\negl(\lambda).
		      $$

		\item \textbf{Pseudorandomness.}
		      For any probabilistic polynomial-time adversary $\sA$,
		      $$
			      \left|
			      \Pr_{(\pk,\sk)\gets \key(1^\lambda)}\big[\sA^{\enc(1^\lambda,\pk,\cdot)}(1^\lambda,\pk)=1\big]
			      -
			      \Pr_{\substack{(\pk,\sk)\gets \key(1^\lambda) \\ \sU}}\big[\sA^{\sU}(1^\lambda,\pk)=1\big]
			      \right|
			      \leq \negl(\lambda),
		      $$
		      where $\sU$ is an oracle that answers each query with an independent uniform sample from $\Sigma^{n(\lambda)}$.
	\end{itemize}
\end{definition}

The \emph{codeword length} of a (secret-key or public-key) PRC is $n(\lambda)$ and the \emph{message length} is $k(\lambda)$. The \emph{rate} is the function $R(\lambda) := k(\lambda) / n(\lambda)$. We often drop the dependence on $\lambda$ when it is clear from context.

For both secret-key and public-key PRCs, if there is only one possible message  (i.e. $k(\lambda) = 0$), we denote this unique message by $1$ and say that the scheme is a zero-bit PRC.

\subsection{Cryptographic Assumptions}

We state the cryptographic assumptions used in \cite{christPseudorandomErrorCorrectingCodes2024} to construct PRCs robust to a constant fraction of substitution errors. Our edit PRC construction treats such Hamming PRCs as a black box. Thus, any construction of a Hamming-robust PRC, whether based on the assumptions below or on alternative assumptions (e.g., those in \cite{GG25_RANDOM}), can be seamlessly integrated. No additional assumptions are required beyond those needed for the underlying Hamming PRC.

\begin{problem}[Learning Parity with Noise $\LPN_{g,\eta}$]\label{prob:LPN}
Let $\eta\in(0,1/2)$ be a constant and let $g:\N\to\N$ be a function. Distinguish between the following two distributions over pairs in $\F_2^{n\times g(n)}\times \F_2^n$:
\begin{enumerate}
    \item \textbf{Noisy parity distribution:} $\sD_0(n,g(n),\eta)$, obtained by sampling $A\gets \F_2^{n\times g(n)}$, $s\gets \F_2^{g(n)}$, $e\gets \Ber(n,\eta)$, and outputting $(A,As\oplus e)$, where $\Ber(n,\eta)$ denotes the product Bernoulli distribution on $\F_2^n$ with parameter $\eta$.
    \item \textbf{Uniform distribution:} $\sD_1(n,g(n))$, obtained by sampling
    $A\gets \F_2^{n\times g(n)}$, $u\gets \F_2^n$,
    and outputting $(A,u)$.
\end{enumerate}
\end{problem}
\begin{problem}[Planted XOR $\mathsf{XOR}_{m,t}$]\label{prob:XOR}
Let $m,t:\N\to\N$ be functions with $m(n)=n^{\Omega(1)}$ and $t(n)=\Theta(\log n)$. Distinguish between the following two distributions over matrices in $\F_2^{n\times m(n)}$:
\begin{enumerate}
    \item \textbf{Null distribution:} $\sD_0(n,m(n))$, the uniform distribution over $\F_2^{n\times m(n)}$;
    \item \textbf{Planted distribution:} $\sD_1(n,m(n),t(n))$, obtained by first sampling a random $t(n)$-sparse vector $s\in \sS_{t(n),n}$ and then sampling a random matrix $G\in \F_2^{n\times m(n)}$ subject to $s^\top G = 0$, where $\sS_{t(n),n}\subseteq \F_2^n$ denotes the set of all vectors of Hamming weight exactly $t(n)$.
\end{enumerate}
\end{problem}

\begin{assumption}\label{asp:CG24_comb}
At least one of the following holds:

\begin{enumerate}
    \item \textbf{Subexponential hardness of $\LPN_{g,\eta}$.}
    There exist a constant $\eta\in(0,1/2)$ and a function $g:\N\to\N$ with $g(n)=\Omega(\log^2 n)$ such that no probabilistic adversary running in time $2^{O(\sqrt n)}$ can solve \Cref{prob:LPN} with non-negligible advantage.

    \item \textbf{Polynomial hardness of $\LPN_{g,\eta}$ together with hardness of $\mathsf{XOR}_{m,t}$.}
    There exist constants $\eta\in(0,1/2)$ and $\varepsilon\in(0,1)$ such that:
    \begin{itemize}
        \item No probabilistic polynomial-time adversary can solve \Cref{prob:LPN} for $g(n)=n^\varepsilon$ with non-negligible advantage.
        
        \item No probabilistic polynomial-time adversary can solve \Cref{prob:XOR} for $m(n)=2n^\varepsilon$ and $t(n)=\Theta(\log n)$ with advantage $1-\negl(n)$.
    \end{itemize}
\end{enumerate}
\end{assumption}

\begin{theorem}[{\cite[Theorem 1]{christPseudorandomErrorCorrectingCodes2024}}]\label{thm:CG24_PRC}
	Let $p \in (0,1/2)$ be any constant. Under \Cref{asp:CG24_comb}, there exists a zero-bit public-key (secret-key) pseudorandom code $\prc = (\key, \enc, \dec)$ against any $p$-bounded Hamming channel.
\end{theorem}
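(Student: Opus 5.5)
The plan is to reconstruct the Christ–Gunn construction from a planted sparse dual and check the three properties directly. Let $t=\Theta(\log n)$ and $r=n^{\Omega(1)}$ (for instance $r=n^{\varepsilon}$), and fix a constant noise rate $\eta$ small enough that the effective noise rate $\eta' \defeq \eta+p-2\eta p$ is still a constant below $1/2$.

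\textbf{Construction.}
$\key$ samples a parity-check matrix $H\in\F_2^{r\times n}$ whose rows are independent uniformly random $t$-sparse vectors. It then samples a generator matrix $G\in\F_2^{n\times g}$ uniformly subject to $HG=0$, and a uniform mask $z\in\F_2^n$. It sets $\pk=(G,z)$ and $\sk=(H,z)$; in the secret-key variant both are kept secret. The encoder outputs
\[
x=Gs\oplus e\oplus z,\qquad s\gets\F_2^{g},\quad e\gets\Ber(n,\eta).
\]
On input $x'$ of length $n$, the decoder computes the syndrome $w=H(x'\oplus z)$. It outputs $1$ iff $\mathrm{wt}(w)<\bigl(\tfrac12-\tfrac{\delta}{4}\bigr)r$, where $\delta\defeq(1-2\eta')^{t}$.

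\textbf{Pseudorandomness.}
This is a two-step hybrid argument. First, the public matrix $G$, which is sampled with $r$ planted $t$-sparse dual vectors, is replaced by a uniformly random matrix. Under the second item of \Cref{asp:CG24_comb} this is the $\mathsf{XOR}_{m,t}$ problem, applied to several planted vectors by a standard hybrid argument. Under the first item one instead takes $g=\Theta(\log^2 n)$, and the planted sparse duals are handled by the argument in \cite{christPseudorandomErrorCorrectingCodes2024}: the subexponential LPN hardness absorbs the $n^{O(t)}$ loss incurred when conditioning on planted checks. Second, once $G$ is uniform, each query $Gs\oplus e$ is an LPN sample. Any polynomial number of encoding-oracle queries is therefore indistinguishable from uniform. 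XOR-ing with $z$ preserves uniformity, so the oracle is indistinguishable from $\sU$ even given $\pk$.

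\textbf{Robustness.}
Write the channel's error pattern as $\epsilon$ with $\mathrm{wt}(\epsilon)\le pn$. Since the channel is oblivious, $\epsilon$ is fixed independently of $H$, $e$ and $z$. Then
\[
H(x'\oplus z)=H(e\oplus\epsilon).
\]
For a single row $h$ of $H$, the bit $\langle h,e\oplus\epsilon\rangle$ is a parity over $t$ random positions. Each position is flipped by $e$ with probability $\eta$, and roughly a $p$-fraction of the positions also lie in $\mathrm{supp}(\epsilon)$. By the piling-up lemma, adjusted for sampling the $t$ positions without replacement, this parity equals $1$ with probability at most $\tfrac12-\tfrac{\delta}{2}$, up to negligible corrections. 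The rows are independent given $\epsilon$. Hoeffding's inequality (\Cref{lem:Hoeffding_ineq}) then gives that the syndrome weight exceeds the threshold with probability $\exp(-\Omega(r\delta^2))$. This is negligible provided $\delta=n^{-c}$ with $c$ small enough that $r\delta^2\ge n^{\Omega(1)}$, which we ensure by choosing the constant in $t=\Theta(\log n)$ appropriately.

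\textbf{Soundness.}
Fix any string $c$. Because $z$ is uniform and independent of $H$, the vector $c\oplus z$ is uniform. Hence the bits of $H(c\oplus z)$ are independent fair coins: the rows of $H$ are nonzero and independent, and $z$ is uniform. Hoeffding again shows that the syndrome weight is at least $(\tfrac12-\tfrac{\delta}{4})r$ except with probability $\exp(-\Omega(r\delta^2))=\negl$.

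\textbf{Main obstacle.}
The hard part is reconciling the two uses of the sparsity $t$. The checks must be sparse enough that the bias $\delta=(1-2\eta')^t$ survives and $r\delta^2$ is large. At the same time, $t$ and $r$ must fall in the regime where planted XOR, or subexponential LPN in the first case, still hides the planted dual. I would fix $r=n^{\varepsilon}$ and $t=c_0\log n$ with $c_0$ small relative to $\varepsilon/\log\frac{1}{1-2\eta'}$. I would then verify that this parameter choice is admissible under \Cref{asp:CG24_comb}, following \cite{christPseudorandomErrorCorrectingCodes2024}.
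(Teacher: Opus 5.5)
The paper gives no proof of this statement. It imports the result from \cite{christPseudorandomErrorCorrectingCodes2024}, and your scheme (planted $t$-sparse $H$, $\pk=(G,z)$, syndrome-weight threshold) is that paper's LDPC construction. Your robustness and soundness arguments are the right ones, up to two repairable slips.

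\emph{Robustness.} Write $f$ for the channel's error pattern. The parities $\langle h_i,e\oplus f\rangle$ all share the same $e$, so they are not independent given $f$ alone. Condition on $e$ as well. Then $\mathrm{wt}(e\oplus f)\le(\eta'+n^{-1/3})n$ except with probability $\exp(-\Omega(n^{1/3}))$. Given $e$, the rows are i.i.d.\ with bias at least $\delta\bigl(1-O(tn^{-1/3}+t^2/n)\bigr)=\delta(1-o(1))$.

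\emph{Soundness.} Independent nonzero rows do not make $H(c\oplus z)$ uniform. You need $H$ to have full row rank, which holds with overwhelming probability for $n^{\varepsilon}$ rows of sparsity $\Theta(\log n)$. Alternatively, condition on $w=c\oplus z$, whose weight is $n/2\pm o(n)$, so every row has negligible bias.

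The genuine gap is pseudorandomness, which is the real content of the theorem.

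\emph{Item~1.} ``Subexponential LPN absorbs the $n^{O(t)}$ loss from conditioning on planted checks'' cannot work. Conditioning a uniform $G\in\F_2^{n\times g}$ on $HG=0$ costs a density factor $2^{gr}$, already $2^{g}=n^{\Theta(t)}$ for a single check. But a dimension-$g$ LPN instance is always broken with advantage about $2^{-g}$ by guessing the secret, so no LPN bound survives even one such loss. What does work for $g=C\log^2 n$ is a statistical argument:
\begin{itemize}
\item A uniform $G$ already has $N_G$ dual vectors of sparsity $t$, with $\E[N_G]=\mu=\binom{n}{t}2^{-g}$. This is superpolynomial once $t=c\log n$ with $c>C$.
\item The indicators $[h^\top G=0]$ are pairwise independent, so $N_G/\mu=1\pm\mu^{-1/3}$ outside probability $\mu^{-1/3}$.
\item The planted density ratio is $(N_G/\mu)^r$ up to a negligible rank-deficiency term, so the planted $G$ is $O(r\mu^{-1/3})$-close to uniform.
\end{itemize}
Reconciling this with robustness requires $C<c<\varepsilon/\bigl(2\log_2\frac{1}{1-2\eta'}\bigr)$, so $C$ must be chosen small.

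\emph{Item~2.} Going from one planted vector (all that $\mathsf{XOR}$ provides) to $r$ of them is not a routine hybrid. The reduction must plant $j$ known vectors, forming $H_j$, into a challenge whose own planted vector it does not know. One way: take the $n\times 2g$ challenge $M^\ast$ and output $G=M^\ast R$, where the columns of $R$ are a random basis of a random $g$-dimensional subspace of $\ker(H_jM^\ast)$. This is where the extra columns in $m=2n^{\varepsilon}$ are useful, and it requires $r\le g+1$. Finally, item~2 of \Cref{asp:CG24_comb} as worded excludes only advantage $1-\negl(n)$, which supports no hybrid at all. As written, you are relying on a stronger, negligible-advantage form of planted XOR than the one stated.
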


\subsection{Pseudorandom Generators}

\begin{lemma}[Cryptographic pseudorandom generator]\label{lem:crypto_PRG}
    Suppose that one-way functions exist. Then for any polynomial $\ell(\lambda) > \lambda$, there exists a deterministic algorithm $\prg : \{0,1\}^* \to \{0,1\}^*$ such that:
    \begin{itemize}
        \item $|\prg(s)| = \ell(|s|)$ for any input $s \in \{0, 1\}^*$.

        \item For any $\lambda \in \N$ and every PPT distinguisher $\sA$, we have
        $$
        \left| \Pr_{s \gets \{0,1\}^\lambda}\big[\sA(1^\lambda, \prg(s)) = 1\big] - \Pr_{r \gets \{0,1\}^{\ell(\lambda)}}\big[\sA(1^\lambda, r) = 1\big] \right| \le \negl(\lambda).
        $$
    \end{itemize}
\end{lemma}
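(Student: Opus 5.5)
This lemma is the standard H\aa stad--Impagliazzo--Levin--Luby theorem, so I would cite it rather than re-derive it. The plan has two stages. First, obtain a PRG with one bit of stretch from any one-way function. Second, amplify the stretch to an arbitrary polynomial $\ell(\lambda)>\lambda$ by iteration, using a hybrid argument for security.

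\textbf{Stage 1: one-bit stretch.} Given a one-way function $f$, the HILL construction proceeds as follows.
\begin{itemize}
\item Build a function with noticeable computational (next-block) pseudoentropy, for example $(f(x),h,h(x))$ with a pairwise-independent hash $h$ together with a Goldreich--Levin hardcore bit. Security reduces to inverting $f$ via the Goldreich--Levin list-decoding argument.
\item Repeat this $\poly(\lambda)$ times in parallel to turn pseudoentropy into pseudo-min-entropy.
\item Apply a randomness extractor (leftover hash lemma) so that the output is longer than the seed and computationally indistinguishable from uniform.
\item Finally, rescale the seed length so the generator is length-increasing on every input length.
\end{itemize}
This is the hard step and the main obstacle. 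Getting the entropy accounting right for non-regular $f$ is delicate, and I would invoke the known result (HILL, or the simplified Vadhan--Zheng construction) as a black box rather than redo it.

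\textbf{Stage 2: stretch amplification.} Given $G:\{0,1\}^\lambda\to\{0,1\}^{\lambda+1}$, define $\prg(s)$ by iterating $\ell(\lambda)$ times. At each step, compute $G(s_i)=s_{i+1}\circ b_i$ and output the bit $b_i$, so the output is $b_1\cdots b_{\ell}$. Since $\ell$ is polynomial and $G$ is efficient, $\prg$ is deterministic and polynomial time, and $|\prg(s)|=\ell(|s|)$.

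For security, define hybrids $H_j$ in which the first $j$ bits are uniform and the remainder is generated from a fresh uniform state. Then $H_0$ is the real output and $H_\ell$ is uniform. A distinguisher with advantage $\delta$ between $H_0$ and $H_\ell$ distinguishes some adjacent pair $H_j,H_{j+1}$ with advantage $\delta/\ell$. Embedding a challenge sample of $G$ at position $j$ then breaks $G$ with advantage $\delta/\ell$, which is non-negligible whenever $\delta$ is, because $\ell=\poly(\lambda)$. Hence the advantage against $\prg$ is $\negl(\lambda)$, giving the second bullet.
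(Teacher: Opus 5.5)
Your proposal is correct and matches how the paper treats this lemma: the paper states it in the preliminaries without proof, as the standard consequence of the H\aa stad--Impagliazzo--Levin--Luby theorem. Your added stretch-amplification hybrid is the textbook route to arbitrary polynomial stretch; for a uniform reduction, choose the hybrid index $j$ uniformly at random rather than fixing ``some adjacent pair,'' which gives advantage exactly $\delta/\ell$ by telescoping.
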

Throughout this paper, we fix such a pseudorandom generator $\prg$ and use it directly whenever needed, without mentioning it explicitly each time.

\section{Zero-Bit Public-Key Edit PRCs}\label{sec:CGK_edit_PRC}

In \cite{golowich2024edit}, the authors give an edit-robust PRC construction over a large alphabet, reducing edit PRC to a Hamming PRC via indexing. More recently, in \cite{christ2025improved}, the authors give binary, constant-rate secret-key pseudorandom codes that tolerate a constant fraction of worst-case edits, based on permuted codes conjecture. In this section, we present a binary-alphabet zero-bit \emph{public-key} PRC resistant to \emph{any sublinear polynomial} edit errors, and \emph{do not require extra assumptions} other than those required by a Hamming PRC. More precisely, we give a reduction from zero-bit edit PRC to zero-bit Hamming PRC:

\begin{restatable}{theorem}{CGKconstruction}\label{thm: CGK_construction}
    Let $\prc_h = (\key_h, \enc_h, \dec_h)$ be a zero-bit public-key (secret-key) PRC against any $\left( \frac{1}{4} + c_1 \right)$-bounded Hamming channel, where $c_1 < 1 / 4$ is a fixed constant. Then for any constant $0 < \gamma < 1 / 2$, \Cref{alg: cgk} constructs a zero-bit public-key (secret-key) PRC against any $p(\cdot)$-bounded edit channel for small enough constant $c_2$ and $p : x \mapsto c_2 / x^\gamma$.
\end{restatable}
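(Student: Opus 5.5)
The construction is the one sketched in the technical overview. I will verify the three PRC properties for \Cref{alg: cgk}: $\ell=\poly(n)$ independent blocks, each with its own Hamming key pair $(\pk_h^{(i)},\sk_h^{(i)})$ and public CGK seed $r_i$. Each block is encoded by sampling $a_i\gets\enc_h(1^\lambda,\pk_h^{(i)},1)$, projecting $a_i$ to a valid CGK embedding $b_i$ under $r_i$, and outputting $y_i=\emb^{-1}(b_i,r_i)$. Decoding slides over all substrings $y'$ of the received word, computes $\emb(y',r_i)$ for every $i$, and accepts if some $\dec_h(\sk_h^{(i)},\cdot)$ accepts. Throughout, the block length $n$ is polynomially related to $\lambda$, and the total length is $N=\ell n$.

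\textbf{Pseudorandomness.} For each fixed $r$ I will show that the map $a\mapsto y=\emb^{-1}(\mathrm{proj}_r(a),r)$ reads each output bit of $y$ from a distinct coordinate of $a$. Concretely, output bit $j$ is the first walk position at which the walk visits input index $j$, and projection overwrites the later "stay" copies by that first copy. Hence uniform $a$ maps to uniform $y$. A hybrid over the $\ell$ blocks, using the pseudorandomness of $\prc_h$ with $(\pk_h^{(i)},r_i)$ given to the adversary, then replaces each $a_i$ by uniform. The reduction can simulate the encoding oracle because the map is efficient and public. This works equally in the secret-key setting, where $r_i$ is simply placed in $\sk$.

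\textbf{Soundness.} For a fixed string $c$, the decoder tries $\poly$ many pairs (substring, block index). For each pair, $\emb(y',r_i)$ is a string that is independent of $\sk_h^{(i)}$, since $r_i$ is sampled independently of the Hamming keys. By the soundness of $\prc_h$, each trial accepts with negligible probability, and a union bound over polynomially many trials keeps the total negligible. One detail must be handled here: the substring lengths and the padding or truncation of $\emb(y',r)$ to the Hamming codeword length must be fixed by the decoder, so that each trial feeds $\dec_h$ a string of the right length.

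\textbf{Robustness.} This is the main obstacle. I need two quantitative facts.

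First, projection costs at most about a $1/4$ fraction of Hamming distance, with high probability over $r$ and the codeword. If $a$ is pseudorandom, treat it as uniform; this is justified because the property is efficiently checkable, so any gap would contradict pseudorandomness. Each "stay" step of the walk occurs with probability $1/2$ and creates a conflict with probability $1/2$, so the expected fraction of flipped positions is $1/4$. Hoeffding (\Cref{lem:Hoeffding_ineq}) then bounds the fraction by $1/4+c_1/2$, except with exponentially small probability.

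Second, for the CGK distortion I will invoke the standard guarantee. If the edit distance is $k$, then with constant probability over $r$ we have $\ham(\emb(x,r),\emb(x',r))=O(k^2)$, and since the walks synchronize, this holds within the prefix covering the block. Now take $N^{1-\gamma}$ total edits with $c_2$ small. By Markov's inequality, at least half the blocks receive at most $O(c_2N^{1-\gamma}/\ell)$ edits each. With $\ell$ polynomial in $n$ chosen so that $(c_2N^{1-\gamma}/\ell)^2\le (c_1/2)\,n$, which is possible because $\gamma<1/2$, a good block's embedding is within distance $(1/4+c_1)n$ of $a_i$ with constant probability. The constraint $\gamma<1/2$ comes exactly from this quadratic loss.

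Block boundaries are handled by choosing the decoder's window to be the image of $y_i$ under the channel's alignment; edits that shift boundaries are charged to the adjacent block. Independence across blocks, because the channel is oblivious and the $r_i$ are independent, then makes the failure probability $(1-\Omega(1))^{\Omega(\ell)}=\negl(\lambda)$. Finally, I must check that the constant-probability CGK event is independent across blocks given the fixed error pattern. It is, since the error pattern is chosen before the $r_i$ are sampled.
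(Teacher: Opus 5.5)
Your soundness and pseudorandomness arguments are essentially the paper's. Your robustness argument, however, has a genuine gap. It arises because you analyze a mask-free variant rather than \Cref{alg: cgk}. In the paper each block is $y_t=\emb^{-1}(a_t\oplus e_t,r_t)$, with an independent uniform pad $e_t$ placed in the public key, and the decoder runs $\dec_h$ on $\emb(z,r_t)\oplus e_t$. Without the pad, two steps of your robustness proof fail.

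First, \Cref{lem: CGK_emb_prop} is a statement about inputs fixed independently of the seed. In your version $y_i=\emb^{-1}(\mathrm{proj}_{r_i}(a_i),r_i)$ is a function of $r_i$: its bits are the coordinates of $a_i$ at the first-visit times chosen by the walk of $r_i$. So the pair $(y_i,z)$ is correlated with $r_i$. Your remark that the error pattern is fixed before $r_i$ is sampled covers only the channel, not this dependence.

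Second, and more seriously, the Hamming error actually handed to $\dec_h$ is $\emb(z,r_i)\oplus a_i$. This contains the projection flips $a_i\oplus\mathrm{proj}_{r_i}(a_i)$, which are a deterministic function of the codeword $a_i$: a coordinate is flipped exactly when the walk revisits an index and $a_i$ disagrees with the bit recorded at that index's first visit. The hypothesis gives robustness of $\prc_h$ only against oblivious error patterns, chosen independently of keys and codewords. So even after bounding the weight of the error, you cannot conclude that $\dec_h$ accepts.

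Your device of treating $a$ as uniform because the relevant property is efficiently checkable does not close this gap.
\begin{itemize}
\item It works for the projection cost, since the distinguisher can sample $r$ itself.
\item It could perhaps be pushed to the distortion event if the distinguisher could compute the channel, which needs non-uniformity or an efficiently computable channel.
\item It fails for the event that ultimately matters, acceptance by $\dec_h$, which requires $\sk_h$ and cannot be tested by any pseudorandomness distinguisher.
\end{itemize}

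The pad is exactly the missing idea. The string $a'_t=a_t\oplus e_t$ is uniform and independent of $(a_t,\pk_t,\sk_t,r_t)$, which gives three things:
\begin{itemize}
\item By \Cref{lem: CGK_dec_props}(2), $y_t$ is uniform and independent of $r_t$, hence so is $(y_t,z)$, and \Cref{lem: CGK_emb_prop} applies.
\item The projection bound is applied to a genuinely uniform input, with no computational transfer.
\item The effective error $\emb(z,r_t)\oplus a'_t$ depends only on $a'_t$, $r_t$ and the other blocks. It is therefore an oblivious error pattern for $\prc_h$, and robustness can be invoked after conditioning on it.
\end{itemize}
With the public, uniform pads added, your pseudorandomness and soundness arguments go through unchanged.
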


Combining \Cref{thm:CG24_PRC}, i.e. the constructions of Hamming PRCs in \cite{christPseudorandomErrorCorrectingCodes2024} under \Cref{asp:CG24_comb}, we obtain that:

\begin{theorem}\label{thm: edit_PRC_zero_bit}
    Under \Cref{asp:CG24_comb}, for any $0 < \gamma < 1 / 2$, there exists a zero-bit public-key (secret-key) PRC against any $p(\cdot)$-bounded edit channel for small enough constant $c$ and $p : x \mapsto c / x^\gamma$.
\end{theorem}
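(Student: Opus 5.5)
The plan is to obtain \Cref{thm: edit_PRC_zero_bit} as a direct instantiation of \Cref{thm: CGK_construction}. The only thing to check is that the Hamming PRC supplied by \Cref{thm:CG24_PRC} meets the hypothesis of that reduction. \Cref{thm:CG24_PRC} gives, for every constant $p\in(0,1/2)$, a zero-bit public-key (respectively secret-key) PRC under \Cref{asp:CG24_comb} that is robust against every $p$-bounded Hamming channel. We fix any constant $c_1\in(0,1/4)$, for instance $c_1=1/8$, and apply \Cref{thm:CG24_PRC} with $p=\frac14+c_1=\frac38<\frac12$. This yields $\prc_h=(\key_h,\enc_h,\dec_h)$, a binary zero-bit PRC robust against every $\left(\frac14+c_1\right)$-bounded Hamming channel, which is exactly the hypothesis of \Cref{thm: CGK_construction}.

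Next, for the given $0<\gamma<1/2$, we feed $\prc_h$ into \Cref{alg: cgk}. \Cref{thm: CGK_construction} then provides a constant $c_2>0$ such that, for every $c\le c_2$, the resulting scheme is a zero-bit PRC over the binary alphabet robust against every $p(\cdot)$-bounded edit channel with $p(x)=c/x^\gamma$. A $c/x^\gamma$-bounded channel with $c\le c_2$ is in particular $c_2/x^\gamma$-bounded, so "small enough constant $c$" can be taken as any $c\le c_2$. The reduction keeps the key type, so a public-key $\prc_h$ gives a public-key edit PRC and a secret-key one gives a secret-key edit PRC; this covers both parenthetical readings of the statement.

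Finally, we check that no hidden assumption is added. Pseudorandomness, soundness and robustness of the output scheme are all established inside \Cref{thm: CGK_construction} relative to $\prc_h$. The construction uses only the public CGK seed and black-box calls to $\prc_h$, and no PRG, so one-way functions are not needed beyond whatever \Cref{asp:CG24_comb} already gives. The security parameter and block length are polynomially related, so negligible functions compose correctly. There is no real obstacle here: all the substantive work, namely the projection onto valid CGK embeddings costing at most a $1/4$ fraction of flips and the amplification over $\poly(n)$ blocks, sits in the proof of \Cref{thm: CGK_construction}. The only point needing care is choosing $c_1$ so that $\frac14+c_1$ stays strictly below $\frac12$, as \Cref{thm:CG24_PRC} requires.
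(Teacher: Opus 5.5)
Your proposal is correct and is the same argument as the paper: the theorem is obtained by instantiating \Cref{thm: CGK_construction} with the binary Hamming-robust PRC from \Cref{thm:CG24_PRC}, applied at any constant $p=\frac14+c_1<\frac12$ (for example $c_1=1/8$). As you say, all the substantive work is inside the reduction.
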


The crucial tool we use here is the CGK embedding \cite{CGK_embedding}. We will give the description and properties of the CGK embedding and decoding functions in \Cref{sec:CGK_emb}. And then show the detailed constructions of zero-bit public-key edit PRCs in \Cref{sec:zero_bit_PRC_CGK_construction}. Finally, we prove \Cref{thm: CGK_construction} in \Cref{sec:zero_bit_PRC_CGK_analysis}.

\subsection{CGK Embedding}\label{sec:CGK_emb}

The CGK embedding, originally introduced by Chakraborty, Goldenberg and Kouck\'{y} \cite{CGK_embedding}, provides a mapping from strings under edit distance to strings under Hamming distance. At a high level, the embedding performs a random walk on the index axis of the input string, which induces a partial synchronization between two input strings that are close in edit distance.

This synchronization ensures that corresponding symbols in the two strings tend to be aligned in the embedding, despite insertions and deletions.  
As a result, the embedding maps edit distance to Hamming distance with only a quadratic distortion: two strings with edit distance $k$ are mapped to embedded strings whose Hamming distance is $O(k^2)$ with constant probability.

\paragraph{CGK embedding.}
We briefly recall the idea of the CGK embedding.
The embedding maps a binary string $x \in \{0, 1\}^L$ to a longer binary string by performing a randomized walk over the indices of $x$. At each step $t$, the current symbol $x_i$ is written to the output, and the index $i$ is advanced according to a random function $h_t(x_i) \in \{0, 1\}$ derived from the shared randomness $r$. As a result, each input bit may be repeated multiple times in the output sequence, while the relative order of symbols is preserved.  
This randomized walk ensures that nearby symbols in the original string tend to remain close in the embedding.

\begin{breakablealgorithm}
\caption{CGK Embedding Function $\emb$}
\label{alg: cgk_emb}
\begin{algorithmic}[1]
\Require $L$, $x \in \{0, 1\}^L$, and $r \in \{0, 1\}^{3 L}$.
\Function{$\emb$}{$x, r$}
\State Interpret $r$ as a description of $h_1, \ldots, h_{1.5 L} : \{0, 1\} \to \{0, 1\}$ 
\State Initialize $i = 1$ and $a_{1 \sim 1.5 L} = \bot^{1.5 L}$ \Comment{$a_{1 \sim 1.5 L}$ is the resulting string}
\For{$t = 1, 2, \ldots, 1.5 L$}
    \If{$i \le L$}
        \State $a_t = x_i$ 
        \State $i = i + h_t(x_i)$ \Comment{Update the counter $i$ according to $r$}
    \Else
        \State $a_t = 0$ \Comment{Pad $a$ with $0$ if the counter $i$ overflows}
    \EndIf
\EndFor
\State \Return $a$ 
\EndFunction
\end{algorithmic}
\end{breakablealgorithm}

The CGK embedding function has a nice property: it can guarantee a quadratic distortion with constant probability over the randomness of $r$:

\begin{lemma}[Modified from {\cite[Theorem 4]{CGK_embedding}}]\label{lem: CGK_emb_prop}
    Let $\emb : \{0, 1\}^L \times \{0, 1\}^{3 L} \to \{0, 1\}^{1.5 L}$ be the embedding mapping computed by \Cref{alg: cgk_emb}. Then for every positive constant $c$ and every $x, y \in \{0, 1\}^L$, $\ham(\emb(x, r), \emb(y, r)) \le c \cdot (\ed(x, y))^2$ with probability at least $1 - 6 / \sqrt{c}$ over the randomness of $r$.
\end{lemma}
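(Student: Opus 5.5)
The plan is to re-run the Chakraborty--Goldenberg--Kouck\'y argument, checking that it goes through for our truncated embedding of length $1.5L$ and our insertion/deletion edit distance. Let $k=\ed(x,y)$; if $k=0$ the claim is trivial. We run the two walks of \Cref{alg: cgk_emb} on $x$ and on $y$ simultaneously with the same $r$, with pointers $i_t$ (into $x$) and $j_t$ (into $y$). We bound $\ham(\emb(x,r),\emb(y,r))$ by the number of steps $t$ at which the written bits differ.

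The first step is structural. If at step $t$ both pointers are in range and $x_{i_t}=y_{j_t}$, then both walks read the same bit and apply the same function $h_t$. Hence $a_t=b_t$, and both pointers move in lockstep (both advance or both stay). So a Hamming error can only occur at a step where $x_{i_t}\neq y_{j_t}$, or where exactly one pointer has overflowed. At a step where the read bits differ, $h_t(0)$ and $h_t(1)$ are independent uniform bits, so the offset $d_t=i_t-j_t$ performs a lazy symmetric walk: it moves $+1$ or $-1$ with probability $1/4$ each and stays with probability $1/2$. At steps where the bits agree, $d_t$ is unchanged.

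The second step is to charge the mismatched steps to a one-dimensional random walk. Fix an optimal alignment (an LCS) between $x$ and $y$, using at most $k$ insertions and deletions in total. Following CGK, we define a potential that measures how far the current pair $(i_t,j_t)$ is from being ``aligned'' under a greedy re-alignment that is always consistent with some alignment of cost at most $k$. The claim to prove is the following. Every step with $x_{i_t}\ne y_{j_t}$ is a step of the lazy walk on $d_t$. Moreover, the total number of such steps is stochastically dominated by the number of non-lazy steps a lazy simple random walk started at $0$ takes before first hitting level $k$ (in the appropriate direction). The reason is that every unit of progress of the offset toward the target alignment ``uses up'' one edit, and at most $k$ edits are available. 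The padding issue is handled in the same charging argument. Once the pointers are synchronized after the last edit, they overflow at the same step because $|x|=|y|=L$. Truncation to $1.5L$ steps only removes mismatches. So padding contributes no errors beyond those already charged.

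The final step is the hitting-time tail bound. For a simple symmetric walk started at $0$, the probability that it has not reached $k$ within $T$ steps is at most $O(k/\sqrt{T})$, by the reflection principle. Laziness only matters through the non-lazy steps, and only those can create mismatches. Taking $T=ck^2$ gives failure probability at most $C/\sqrt{c}$, and tracking the constant through the reflection-principle estimate should give $C\le 6$.

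I expect the main obstacle to be the second step: making the domination rigorous. The issue is that the ``target'' offset can change as the walk passes edit locations, so the dominating one-sided walk must be built carefully. I would first try CGK's own bookkeeping, in which one tracks the greedy alignment cost and shows it never exceeds $k$ minus the current walk progress. Only after that would I worry about the truncation to $1.5L$ steps.
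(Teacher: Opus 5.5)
Your Step 1 is correct. Step 2, however, carries all the content of the lemma, and you only assert it; parts of it are also wrong as stated.

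First, the accounting of laziness is wrong. A Hamming error is written at \emph{every} step with $x_{i_t}\neq y_{j_t}$, including steps where the offset does not move. For example, if both pointers stay, the same mismatched pair is read again at the next step. So ``only those [non-lazy steps] can create mismatches'' is false. What must be dominated is the total number of steps of the lazy walk until absorption, not the number of its non-lazy steps (which would be a simple-random-walk hitting time).

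Second, what you wrote does not control a potential built from a fixed LCS and the offset $d_t=i_t-j_t$. When the bits happen to agree at a wrong offset, the pointers lock in there. The target offset moves as edit locations are passed. And the ``greedy re-alignment'' under which each unit of progress provably consumes an edit is never defined.

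Third, the padding argument covers only the synchronized case. If one pointer overflows while the strings are still out of sync, the other keeps writing its bits against the zero padding, and those errors are charged nowhere. Finally, the constant $6$ is asserted, not derived.

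The missing idea is to take as potential the insertion-deletion distance of the unread suffixes, $\Phi_t:=\ed(x_{[i_t:L]},y_{[j_t:L]})$, where a suffix is empty once its pointer overflows. Then $\Phi_1=k$, and this potential re-optimizes the alignment at every step automatically. Its behaviour is as follows.
\begin{itemize}
\item If the read bits agree, both pointers stay or both pass equal symbols, and $\Phi$ is unchanged.
\item If the bits differ, some LCS of the suffixes leaves $x_{i_t}$ or $y_{j_t}$ unmatched. So one of the two single-advance moves lowers $\Phi$ by exactly $1$, the other changes it by $\pm1$, the double advance changes it by $0$ or $-2$, and staying leaves it unchanged, each with probability $1/4$.
\item If exactly one pointer has overflowed, $\Phi$ is the other suffix's length and drops by $1$ with probability $1/2$.
\item If $\Phi_t=0$, the suffixes are equal, so $i_t=j_t$ and the two outputs coincide from then on, including simultaneous overflow.
\end{itemize}
Hence every error step is an active step with $\Phi_t\ge 1$. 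On active steps, the increments of $\Phi$ are, conditionally on the past, stochastically dominated by i.i.d.\ increments $-1,0,+1$ with probabilities $1/4,1/2,1/4$. Coupling step by step gives $\ham(\emb(x,r),\emb(y,r))\le\tau$, where $\tau$ is the time for this lazy walk to go from $k$ to $0$; truncating to $1.5L$ steps only helps. The lazy walk after $T$ steps is distributed as $\mathrm{Bin}(2T,1/2)-T$, so the reflection principle gives $\Pr[\tau>T]\le 2k\binom{2T}{T}4^{-T}\le 2k/\sqrt{\pi T}$. Taking $T=\lfloor ck^2\rfloor$ yields a failure probability at most $\sqrt{8/(\pi c)}<6/\sqrt{c}$; the cases $k=0$ and $ck^2<1$ are trivial.

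For comparison, the paper does not re-prove the lemma. It cites CGK's bound $1-12/\sqrt{c}$ for edit distance with substitutions and appeals to the factor-$2$ relation between the two distances. Used as a black box, that bound only gives $1-12/\sqrt{c}$ here, because the paper's $\ed$ is at least the substitution edit distance and can equal it. So a direct argument of the kind above is what actually supports the constant $6$.
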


\begin{remark}
    (1) The original proof in \cite{CGK_embedding} is for the case that allows substitutions, and the bound is $1 - 12 / \sqrt{c}$. Since not allowing substitutions will at most double the edit distance, it leads to the bound $1 - 6 / \sqrt{c}$.

    (2) In the original construction of \cite{CGK_embedding}, the embedding function outputs a string of length $3L$. In our setting, we truncate the process, resulting in an output length of $1.5L$.
    
    The distortion guarantee in \Cref{lem: CGK_emb_prop} is in fact independent of the output length. Inspecting the proof in \cite{CGK_embedding}, one can see that the bound $\ham(\emb(x,r), \emb(y,r)) \le c \cdot (\ed(x,y))^2$ with probability at least $1 - 6 / \sqrt{c}$ continues to hold for this shortened version without any loss in parameters.

    While this lemma theoretically holds even if the number of random walk steps (i.e., the output length) is arbitrarily reduced, our overall PRC construction strictly requires the output length to be $\Theta(L)$. Since the channel introduces $O(\sqrt{n})$ edit errors on each block, the CGK embedding amplifies them quadratically into $O(n)$ Hamming errors. To successfully correct these errors, the underlying Hamming PRC must have a codeword length of at least $\Theta(n)$.
\end{remark}

\paragraph{CGK decoding.} We now describe how to interpret an arbitrary binary string under the CGK embedding with a fixed seed $r$.  
Not every binary string is a valid output of the CGK embedding.  
Nevertheless, for any $a \in \{0, 1\}^{1.5 L}$, one can deterministically ``project'' $a$ to a valid embedding string $b$ by enforcing the local constraints induced by $r$. This projection simulates a legal CGK walk and ensures consistency between repeated visits to the same input index.

\begin{breakablealgorithm}
\caption{CGK Decoding Function $\emb^{-1}$}
\label{alg: cgk_dec}
\begin{algorithmic}[1]
\Require $L$, $a \in \{0, 1\}^{1.5 L}$, and $r \in \{0, 1\}^{3 L}$.
\Function{$\emb^{-1}$}{$a, r$}
\State Interpret $r$ as a description of $h_1, \ldots, h_{1.5 L} : \{0, 1\} \to \{0, 1\}$ 
\State Initialize $i = 1$, $x_{1 \sim L} = \bot^L$ and $b_{1 \sim 1.5 L} = \bot^{1.5 L}$ 
\State \Comment{$x_{1 \sim L}$ is the resulting string, and $b_{1 \sim 1.5 L}$ is a valid embedding projected from $a_{1 \sim 1.5 L}$}
\For{$t = 1, 2, \ldots, 1.5 L$}
    \If{$i \le L$}
        \If{$x_i \ne \bot$ and $x_i \ne a_t$} \Comment{$r$ indicates a ``stay'' step}
            \State $b_t = 1 - a_t$ \Comment{Flip the current bit if it differs from the previous one} \phantomsection\label{line:type1_stay_collisions}
        \Else
            \State $b_t = a_t$ 
        \EndIf
        \State $x_i = b_t$ 
        \State $i = i + h_t(x_i)$ \Comment{Update the counter $i$ according to $r$}
    \Else
        \State $b_t = 0$ \Comment{Pad $b$ with $0$ if the counter $i$ overflows} \phantomsection\label{line:type2_padding_after_overflow}
    \EndIf
\EndFor
\State Fill the remaining positions $x_k$ where $x_k = \bot$ with i.i.d. bits
\State \Return $x$ 
\EndFunction
\end{algorithmic}
\end{breakablealgorithm}

\begin{lemma}\label{lem: CGK_dec_props}
    Let $\emb^{-1} : \{0, 1\}^{1.5 L} \times \{0, 1\}^{3 L} \to \{0, 1\}^L$ be the decoding function computed by \Cref{alg: cgk_dec}, and for any $a \in \{0, 1\}^{1.5L}$, let $b_{1 \sim 1.5 L}$ be the string defined and computed in the execution of $\emb^{-1}(a, r)$. Then the following statements hold:
    \begin{enumerate}
        \item Letting $x = \emb^{-1}(a, r)$, we have $\emb(x, r) = b$. Moreover, for any constant $c > 0$,
        $$
            \ham(a, b) \le \left( \frac{3}{8} + c \right) L
        $$
        with probability at least $1 - \negl(L)$ over the uniform randomness of $r$.
    
        \item For any fixed $r$, if the input $a$ is sampled uniformly from $\{0, 1\}^{1.5 L}$, then
        $$
            \emb^{-1}(a, r) \equiv U_L.
        $$
    \end{enumerate}
\end{lemma}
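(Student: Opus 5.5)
Both parts of \Cref{lem: CGK_dec_props} can be read off by tracking, for a fixed seed $r$, the deterministic walk executed by \Cref{alg: cgk_dec}. Call step $t$ \emph{active} if $i\le L$ at that step. Call an active step \emph{fresh} if $x_i=\bot$ at that moment, i.e.\ it is the first visit to index $i$; otherwise call it a \emph{stay} step.

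\textbf{Consistency $\emb(x,r)=b$.} I argue by induction on $t$ that running \Cref{alg: cgk_emb} on the output $x$ reproduces the same counter sequence, and that $a'_t=b_t$ at every step.
\begin{itemize}
\item At an active step, the decoder sets $x_i=b_t$ and this value is never overwritten, because at every later visit the decoder forces $b_{t'}$ equal to the stored $x_i$. Hence the embedding writes $x_i=b_t$.
\item The counter update uses $h_t(x_i)=h_t(b_t)$ in both algorithms, so the walks coincide.
\item After overflow both algorithms write $0$.
\item The positions of $x$ that are filled randomly at the end are never visited, so they do not affect the embedding.
\end{itemize}

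\textbf{Distance bound $\ham(a,b)\le(3/8+c)L$.} Mismatches $a_t\neq b_t$ arise only at stay steps and at overflow steps.
\begin{itemize}
\item \emph{Stay steps.} There are at most $1.5L-L=0.5L$ of them, since at most $L$ active steps are fresh (each advance is to a new index). A stay step creates a mismatch only when $a_t\neq x_i$.
\item \emph{Overflow steps.} These create a mismatch only when $a_t=1$.
\end{itemize}
This is where I expect the main difficulty. The claim is ``with probability $1-\negl(L)$ over $r$'' for \emph{every} $a$, and an adversarial $a$ can make every stay step a mismatch. So the bound must come from controlling the walk through $r$ alone.

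Whether step $t$ is a stay step is determined by $h_t(b_t)$, and $b_t$ depends only on $a_{\le t}$ and $h_{<t}$. Since $h_t$ is fresh randomness independent of the past, each active step advances with probability exactly $1/2$, independently of the past, whatever $a$ is. Thus the walk is a $\mathrm{Bin}(\cdot,1/2)$ process. By Hoeffding (\Cref{lem:Hoeffding_ineq}), within $1.5L$ steps the number of advances is $0.75L\pm o(L)$, so overflow fails to occur (the counter never exceeds $L$) except with probability $\negl(L)$.

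Consequently the run has $1.5L$ active steps with about $0.75L$ advances, hence about $0.75L$ distinct indices visited and about $0.75L$ stays. A crude count of mismatches then gives only $\le 0.75L$, which is far from $3/8$. So I need to sharpen the bound.

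I would first check whether the intended count is ``mismatches at stay steps are at most half of the stays.'' This does not hold for adversarial $a$ in the worst case. The correct count likely uses that a stay at step $t$ depends on $h_t(b_t)$, and $b_t$ is chosen before $h_t$ is sampled. Flipped steps are a subset of the stay steps, and roughly $0.75L\cdot\tfrac12$ bound arises as follows: a mismatch at step $t+1$ requires step $t$ to be a stay, which happens with probability $1/2$ independently given the past. Summing the indicators $\mathbf 1[\text{step } t \text{ is a stay}]$ over the $1.5L$ steps gives $\approx 0.75L$ in expectation, but the steps where a flip can occur are those following a stay.

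My plan is therefore to bound the number of flips by the number of stay events among active steps and concentrate it via Hoeffding or Azuma applied to the martingale $\sum_t(\mathbf 1[\mathrm{stay}_t]-1/2)$. I would then reconcile the exact constant $3/8$ by carefully accounting for padding, possibly choosing an $h$ encoding so that advancing has probability $1/2$ independent of the value. I flag this constant as the delicate point.

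\textbf{Uniformity (part 2).} Fix $r$ and draw $a$ uniformly. At each fresh step, $x_i=a_t$ is a fresh uniform bit independent of the past, since $a_t$ has not been read before. Which steps are fresh is determined adaptively by the already-revealed values and $r$. A chain-rule argument over $t$ then shows that the set of assigned coordinates receives i.i.d.\ uniform bits. The unvisited coordinates are filled with independent uniform bits. Hence $x\equiv U_L$, because the map from $a$ to $x$ reads each coordinate of $x$ from a distinct, previously unread coordinate of $a$.
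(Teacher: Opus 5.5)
\textbf{Where the proposal matches the paper.} Your consistency argument, the no-overflow bound via Hoeffding, and part 2 follow the paper's proof. In part 2, your chain-rule treatment of the adaptively chosen fresh steps is, if anything, more careful than the paper's injective-map sentence, since the set $K$ and the map $\pi$ there depend on $a$.

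\textbf{The gap.} The gap is the constant $3/8$ in part 1. Your plan bounds collisions by the number of stay steps, which is $(3/4\pm c)L$ with high probability. Neither padding (overflow does not occur w.h.p.) nor a choice of encoding for $h_t$ (it is fixed, and the advance bit is already a fair coin) supplies the missing factor $1/2$. Your first count of stays is also reversed: stays equal active steps minus fresh steps, so there are at least $1.5L-L$ of them, not at most.

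The paper obtains $3/8$ by asserting that a collision at step $t$ ``happens only if $h_t(0)=h_t(1)=0$''. It then dominates the count by $1.5L$ independent Bernoulli$(1/4)$ variables. That implication does not hold: a stay at step $t$ is caused by $h_{t-1}(b_{t-1})=0$, which has probability $1/2$, and whether it is a collision depends on $a$.

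Your suspicion about adversarial $a$ is justified. Because the advance bits $h_t(b_t)$ are i.i.d.\ fair coins whatever $a$ is, the visits to each index form runs with i.i.d.\ geometric lengths. For $a=0101\cdots$, a run with $k$ stays produces $\lceil k/2\rceil$ collisions. So each step is a collision with probability about $1/3$, and $\ham(a,b)\approx L/2$ with high probability. This exceeds $(3/8+c)L$ whenever $c<1/8$, so for a fixed worst-case $a$ the stated bound is false and no argument can close your gap.

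\textbf{What is true, and the missing idea.} The bound does hold for $a$ uniform and independent of $r$, and that is all the proof of the zero-bit theorem uses: there the input is $a'_t=a_t\oplus e_t$, uniform and independent of $r_t$. The idea you are missing is that at a stay step the decoder sets $b_t=x_i$ no matter what $a_t$ is. Hence the trajectory and all stored values $x_i$ are a function of $r$ and of the $a$-values at fresh steps only. For fixed $r$, the event that the run produces a given trajectory is a cylinder event on that trajectory's fresh coordinates.

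Conditioned on the trajectory, the bits of $a$ at the stay steps are therefore i.i.d.\ uniform, and each differs from the already-fixed $x_i$ with probability exactly $1/2$. So the number of collisions is $\mathrm{Bin}(S,1/2)$, where $S\le(3/4+c)L$ is the number of stays. Applying Hoeffding to $S$ and then to this binomial gives $\ham(a,b)\le(3/8+c)L$ with probability $1-\negl(L)$ over $(a,r)$. In particular, the paper's remark that this bound holds ``regardless of $a$'' is incorrect; the one-time pad $e_t$ is needed here as well.
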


\begin{proof}
    (1) It is immediate from the construction of \Cref{alg: cgk_dec} that if $x = \emb^{-1}(a, r)$, then the algorithm reconstructs the valid embedding string $b = \emb(x, r)$.
    
    The Hamming distance $\ham(a, b)$ may arise from two types of positions.
    
    \medskip
    \textit{Type I (stay collisions).}
    When $i \le L$ and $x_i$ has already been assigned, if $a_t \neq x_i$, then the decoder sets $b_t = 1 - a_t$ in \alglineref{line:type1_stay_collisions}. This event happens only if $h_t(0) = h_t(1) = 0$, which occurs with probability $1/4$. Hence, $\#\{t : b_t = 1 - a_t,\, i \le L\}$ is bounded by the sum of $1.5 L$ independent Bernoulli$(1 / 4)$ random variables, with expectation $3 L / 8$. By Hoeffding's inequality, for any constant $c > 0$,
    $$
    \Pr\left[ \#\{t : b_t = 1 - a_t,\, i \le L\} > \frac{3 L}{8} + c L \right] \le 2 \exp(-2 c^2 L) = \negl(L).
    $$
    
    \medskip
    \textit{Type II (padding after overflow).}
    When $i > L$, the decoder sets $b_t = 0$ in \alglineref{line:type2_padding_after_overflow}. Since the embedding length is $1.5 L$, and the counter $i$ increases with probability $1 / 2$ at each step while $i \le L$, we have that the value $i_{\text{final}}$ of the counter $i$ after the main loop satisfies
    $$
        \E[i_{\text{final}}] = 1 + \frac{1.5 L}{2} = 1 + 0.75 L < L.
    $$
    By Hoeffding's inequality, with probability at least $1 - \negl(L)$, the counter $i$ never exceeds $L$ throughout the $1.5 L$ steps, and therefore no \textit{Type II} error occurs.
    
    \medskip
    Combining the two cases, we conclude that
    $$
        \Pr\left[ \ham(a, b) \le \left(\frac{3}{8} + c \right) L \right] \ge 1 - \negl(L).
    $$

    \medskip
    (2) Fix any $r \in \{0,1\}^{3 L}$. During the execution of \Cref{alg: cgk_dec}, let $K \subseteq [L]$ be the set of indices $j$ such that $x_j$ is assigned a value during the main loop. For each $j \in K$, $x_j$ is assigned for the first time at some unique time step $t_j \in [1.5 L]$ just after the counter $i$ increases from $j - 1$ to $j$. Since $x_j = \bot$ at this exact moment, the algorithm directly sets $x_j = a_{t_j}$. Thus, the decoding process induces an injective mapping $\pi : K \to [1.5 L]$, where $\pi(j) = t_j$ for all $j \in K$, such that each $x_j$ for $j \in K$ is exactly determined by $a_{\pi(j)}$.

    If $a$ is sampled from $U_{1.5 L}$, then the bits $\{a_{\pi(j)}\}_{j \in K}$ are mutually independent and uniformly random. It follows that the assigned bits $\{x_j\}_{j \in K}$ are i.i.d. bits. For the remaining coordinates $j \in [L] \setminus K$, $x_j$ remains $\bot$ during the loop and is filled with fresh i.i.d. bits in the final step of the algorithm. Therefore, the entire string $x$ consists of mutually independent, uniformly random bits, which means
    $$
    \emb^{-1}(U_{1.5 L}, r) \equiv U_L.
    $$
\end{proof}

\subsection{Detailed Construction}\label{sec:zero_bit_PRC_CGK_construction}

We present the zero-bit (public-key) edit PRC construction as follows.

\begin{itemize}
    \item \textbf{Key Generation:}  
    Sample $\ell = \poly(n)$ independent pairs of keys $(\pk_t, \sk_t)_{t = 1}^\ell$ of the underlying Hamming PRC $\prc_h$. Independently sample $\ell$ random seeds $r_1, \ldots, r_\ell$ for the CGK embedding, and $\ell$ independent uniform masks $e_1, \ldots, e_\ell$ for one-time padding.
    The final public key is $\pk = (\pk_t, r_t, e_t)_{t = 1}^\ell$, and the final secret key is $\sk = (\sk_t, r_t, e_t)_{t = 1}^\ell$.  
    Multiple independent seeds are used to amplify the constant success probability guaranteed by the CGK embedding to high probability in decoding.

    \item \textbf{Encoding:}  
    Since this is a zero-bit PRC, the only message is $1$. For each $t \in [\ell]$, first generate a Hamming PRC codeword $a_t \gets \enc_h(1^\lambda, \pk_t, 1)$, and mask it with the one-time pad to get $a'_t = a_t \oplus e_t$. Then project $a'_t$ to a valid CGK embedding $b_t$ under seed $r_t$, and output $y_t = \emb^{-1}(b_t, r_t) = \emb^{-1}(a'_t, r_t)$.  
    The final codeword is $y = y_1 \circ \cdots \circ y_\ell$.

    \item \textbf{Decoding:}
    Given a received string $y' \in \{0, 1\}^*$, the decoder enumerates all substrings $z = y'_{[i : j]}$.
    Each substring is deterministically truncated or padded with trailing zeros to length $n$.
    
    For each such candidate $z$ and each $(\sk_t, r_t, e_t)$ in the secret key, the decoder computes the embedding $\emb(z, r_t)$, unmasks it by computing $\hat{a}_t = \emb(z, r_t) \oplus e_t$, and applies the Hamming PRC decoder with secret key $\sk_t$.
    The decoder outputs $1$ as soon as any trial of $\dec_h$ accepts; if none of the trials accept, it outputs $\bot$.
    
    This procedure exploits the one-sided success guarantee of the CGK embedding: it suffices that one good block and one good seed succeed.
\end{itemize}

\medskip

\begin{breakablealgorithm}
\caption{CGK-Based Zero-Bit Public-Key Edit PRC}
\label{alg: cgk}
\begin{algorithmic}[1]
\Require Security parameter $\lambda$, $0 < \gamma < 1 / 2$, $n = n(\lambda)$, $\ell = n^{\frac{1}{2 \gamma} - 1}$
\Statex \hspace{\algorithmicindent} Zero-bit public-key PRC $\prc_h = (\key_h, \enc_h, \dec_h)$ with codeword length $1.5 n$
\Statex \hspace{\algorithmicindent} CGK embedding function $\emb : \{0, 1\}^n \times \{0, 1\}^{3 n} \to \{0, 1\}^{1.5 n}$
\Statex \hspace{\algorithmicindent} CGK decoding function $\emb^{-1} : \{0, 1\}^{1.5 n} \times \{0, 1\}^{3 n} \to \{0, 1\}^n$

\Function{$\key$}{$1^\lambda$}
    \State Use $\key_h(1^\lambda)$ to sample $\ell$ independent pairs of keys $(\pk_t, \sk_t)_{t = 1}^\ell$
    \State Sample $\ell$ independent random seeds $r_1, \ldots, r_\ell \gets \{0,1\}^{3 n}$
    \State Sample $\ell$ independent random masks $e_1, \ldots, e_\ell \gets \{0,1\}^{1.5 n}$
    \State \Return $(\pk, \sk) = \big( (\pk_t, r_t, e_t)_{t = 1}^\ell, (\sk_t, r_t, e_t)_{t = 1}^\ell \big)$
\EndFunction
\Statex

\Function{$\enc$}{$1^\lambda$, $\pk$, $1$} \Comment{The only message is $1$ since this is a zero-bit PRC}
    \For{$t = 1, 2, \ldots, \ell$}
        \State Define $a_t \gets \enc_h(1^\lambda, \pk_t, 1)$
        \State Define $a'_t = a_t \oplus e_t$ \Comment{Apply one-time pad in Hamming space}
        \State Define $y_t \gets \emb^{-1}(a'_t, r_t)$ 
    \EndFor
    \State \Return $y = y_1 \circ \cdots \circ y_\ell$
\EndFunction
\Statex

\Function{$\dec$}{$1^\lambda$, $\sk$, $y'$} \Comment{Return $\bot$ when $y'$ is far from all valid codewords}
    \For{all intervals $[i : j]$ of $y'$} \Comment{Use a sliding window to hit a good block}
        \State Let $z = y'_{[i : j]}$ \Comment{Try all consecutive substrings of $y'$}
        \State Deterministically truncate $z$ to its first $n$ bits if $|z| > n$ 
        \State Or pad $z$ with trailing zeros to length $n$ otherwise 
        \For{$t = 1, 2, \ldots, \ell$}
            \If{$\dec_h(1^\lambda, \sk_t, \emb(z, r_t) \oplus e_t) \ne \bot$} \Comment{Recover and decode $z$ in $\prc_h$}
                \State \Return $1$ \Comment{Return $1$ if any instance succeeds}
            \EndIf
        \EndFor
    \EndFor
    \State \Return $\bot$ \Comment{Return $\bot$ if no instance succeeds}
\EndFunction
\Statex
\end{algorithmic}
\end{breakablealgorithm}

\subsection{Analysis of \cref{alg: cgk}} \label{sec:zero_bit_PRC_CGK_analysis}

Now we are ready to prove the main theorem. We restate \Cref{thm: CGK_construction} here. We only prove for the public-key case. The secret-key case is similar.

\CGKconstruction*

Let $\prc = (\key,\enc,\dec)$ be the scheme constructed in \Cref{alg: cgk}, and $N = N(\lambda) = n(\lambda) \cdot \ell = n^{\frac{1}{2 \gamma}}$ be the codeword length of $\prc$. Recall that the decoder enumerates all substrings $z = y'_{[i : j]}$, deterministically truncates/pads each $z$ to length $n$, and tries all $(\sk_t, r_t, e_t)_{t = 1}^\ell$. The decoder accepts if any call to $\dec_h(1^\lambda, \sk_t, \emb(z, r_t) \oplus e_t)$ accepts.

\paragraph{Robustness.}
Since it is a zero-bit PRC, the only message is $1$. For each $t \in [\ell]$, let $a_t \gets \enc_h(1^\lambda, \pk_t, 1)$ be the Hamming PRC codeword sampled in the encoding procedure, $a'_t = a_t \oplus e_t$ be the masked block, and let $y_t \gets \emb^{-1}(a'_t, r_t)$ be the corresponding block. Let $y = y_1 \circ \cdots \circ y_\ell$ denote the final codeword computed by $\enc(1^\lambda, \pk, 1)$, and let $y' = \sE(y)$ be the corrupted codeword through a $p(\cdot)$-bounded edit channel $\sE$ with $p : x \mapsto c_2 / x^\gamma$.

We partition $y$ into $\ell$ consecutive blocks $y_1,\dots,y_\ell$, and assign each edit operation in the edit transcript of $(y, y')$ to exactly one block as follows: edits strictly inside $y_t$ are assigned to block $t$, and edits crossing the boundary between $y_t$ and $y_{t+1}$ are assigned to block $t$. This induces a corresponding partition
$$
y' = y'_1 \circ \cdots \circ y'_\ell,
$$
where $y'_t$ is the substring of $y'$ aligned with $y_t$ under the edit transcript.

Since the total edit distance is at most $p(N) \cdot |y|$, by Markov's inequality, the number of
blocks $t$ for which $\ed(y_t, y'_t) > 2 p(N) \cdot |y_t|$ is at most $\ell/2$. Hence, at least $\ell/2$ blocks are \emph{good}, meaning
$$
\ed(y_t, y'_t) \le 2 p(N) \cdot |y_t| = 2 n \cdot \frac{c_2}{(n^{1 / (2 \gamma)})^\gamma} = 2 c_2 \sqrt{n}.
$$
Recall that the decoder enumerates all substrings of $y'$. Therefore for each good block $t$, there exists some iteration in which the candidate substring $z$ equals $y'_t$. Then $z$ will be deterministically truncated/padded to length $n$, which may at most double the edit distance, ensuring $\ed(y_t, z) \le 4 c_2 \sqrt{n}$.

Crucially, we establish the statistical independence between $(y_t, z)$ and the embedding seed $r_t$. Fix the underlying key $(\pk_t, \sk_t)$ and the internal randomness of $\enc_h$, so that $a_t$ is a fixed string. Since $e_t \sim U_{1.5 n}$ is sampled independently, the masked codeword $a'_t = a_t \oplus e_t$ is uniformly distributed over $\{0, 1\}^{1.5 n}$ and strictly independent of $r_t$. By the second property of \Cref{lem: CGK_dec_props}, the output $y_t = \emb^{-1}(a'_t, r_t)$ is uniformly distributed over $\{0, 1\}^n$, and its distribution is independent of $r_t$. Furthermore, since the edit channel $\sE$ is oblivious (its error operations are fixed beforehand and do not depend on the codewords or the keys), the edited string $y'$ and the substring $z$ depend exclusively on $y$ and the predetermined channel $\sE$. Therefore, the random variables $(y_t, z)$ are entirely independent of $r_t$.

Fix a good block $t$. By the first property of \Cref{lem: CGK_dec_props}, with probability at least $1 - \negl(\lambda)$ over the randomness of $r_t$, for any constant $c_3 > 0$,
$$
\ham\big(\emb(y_t, r_t), a'_t\big) \le \left(\frac{3}{8} + c_3\right) n.
$$

Since $(y_t, z)$ is independent of $r_t$, we can apply \Cref{lem: CGK_emb_prop} by conditioning on any fixed values $y_t = v$ and $z = w$ such that $\ed(v, w) \le 4 c_2 \sqrt{n}$. Over the randomness of $r_t$, for any constant $c_4 > 0$,
$$
\Pr_{r_t}\big[ \ham\big(\emb(v, r_t), \emb(w, r_t)\big) \le c_4 \cdot (\ed(v, w))^2 \big] \ge 1 - 6 / \sqrt{c_4}.
$$

Taking the expectation over all such valid pairs $(v, w)$, we have with probability at least $1 - 6 / \sqrt{c_4}$ over $r_t$:
$$
\ham\big(\emb(y_t, r_t), \emb(z, r_t)\big) \le c_4 \cdot (\ed(y_t, z))^2 \le 16 c_4 c_2^2 \cdot n.
$$

Choose $c_4$ such that $1 - 6 / \sqrt{c_4} \ge 1 / 2$, and choose $c_2, c_3$ sufficiently small so that
$$
\frac{3}{8} + 16 c_4 c_2^2 + c_3 \le \frac{3}{8} + 1.5 c_1.
$$
Then we can bound the distance to the original codeword $a_t$:
\begin{align*}
    \ham(\hat{a}_t, a_t) & = \ham\big(\emb(z, r_t) \oplus e_t, a_t\big) \\
      & = \ham\big(\emb(z, r_t), a_t \oplus e_t\big) \\
      & = \ham\big(\emb(z, r_t), a'_t\big) \\
      & \le \ham\big(\emb(z, r_t), \emb(y_t, r_t)\big) + \ham\big(\emb(y_t, r_t), a'_t\big) \\
      & \le 16 c_4 c_2^2 \cdot n + \left(\frac{3}{8} + c_3\right) n \le \left(\frac{1}{4} + c_1\right) \cdot 1.5n.
\end{align*}

Since $\prc_h$ is robust to any $(\frac{1}{4} + c_1)$-bounded Hamming channel and has codeword length $1.5n$, it follows that, with probability at least $1 / 2 - \negl(\lambda)$,
$$
\dec_h\big(1^\lambda, \sk_t, \emb(z, r_t) \oplus e_t\big) = 1.
$$

Finally, note that at least $\ell/2$ blocks are good and each good block succeeds with constant probability over $r_t$, and the decoder tries all $t \in [\ell]$. Additionally, since $(\sk_t, r_t, e_t)_{t = 1}^\ell$ are independent, these good blocks also succeed independently. Hence the probability that none of the trials succeeds is at most $\left(\frac{1}{2} + \negl(\lambda) \right)^{\ell / 2} = \negl(\lambda)$. Thus,
$$
\Pr_{(\pk, \sk) \gets \key(1^\lambda)}[\dec(1^\lambda, \sk, y') = 1] \ge 1 - \negl(\lambda).
$$

\paragraph{Soundness.}
Suppose for contradiction that $\prc$ is not sound. Then there exists a fixed string $y$ such that
$$
\Pr_{(\pk, \sk) \gets \key(1^\lambda)}\big[ \dec(1^\lambda, \sk, y) \neq \bot \big] \ge \frac{1}{\poly(\lambda)}.
$$
By a standard averaging argument, there exist fixed choices of seeds $(r^*_1, \ldots, r^*_\ell)$ and masks $(e^*_1, \ldots, e^*_\ell)$ such that
$$
\Pr_{(\pk_t, \sk_t)_{t=1}^\ell}\big[ \dec(1^\lambda, \sk, y) \neq \bot \big] \ge \frac{1}{\poly(\lambda)},
$$
where the secret key $\sk$ now incorporates these fixed seeds and masks.

By the definition of the decoder, this implies that with non-negligible probability over $(\pk_t, \sk_t)_{t=1}^\ell$, there exist indices $i, j$ and $t$ such that
$$
\dec_h\big(1^\lambda, \sk_t, \emb(z, r^*_t) \oplus e^*_t\big) = 1,
$$
where $z$ is obtained from $y_{[i : j]}$ by a fixed deterministic truncation/padding rule to length $n$.

By a union bound over all $\Theta(n^2)$ choices of $(i, j)$ and $\ell$ blocks, there exist fixed indices $(i^*, j^*, t^*)$ such that
$$
\Pr_{(\pk_{t^*}, \sk_{t^*}) \gets \key_h(1^\lambda)}\big[ \dec_h\big(1^\lambda, \sk_{t^*}, \emb(z^*, r^*_{t^*}) \oplus e^*_{t^*}\big) = 1 \big] \ge \frac{1 / \poly(\lambda)}{\Theta(n^2)  \cdot \ell} = \frac{1}{\poly(\lambda)},
$$
where $z^*$ is the deterministically truncated/padded version of $y_{[i^* : j^*]}$.

Let $y^* = \emb(z^*, r^*_{t^*}) \oplus e^*_{t^*}$. Then we have
$$
\Pr_{(\pk_{t^*}, \sk_{t^*}) \gets \key_h(1^\lambda)}\big[ \dec_h(1^\lambda, \sk_{t^*}, y^*) \neq \bot \big] \ge \frac{1}{\poly(\lambda)},
$$
which contradicts the soundness of $\prc_h$. Therefore, $\prc$ is sound.

\paragraph{Pseudorandomness.}
Since the only message is $1$, we prove that the outputs of $\enc(1^\lambda, \pk, 1)$ are computationally indistinguishable from $U_{n \ell}$ given the public key $\pk$ via hybrid arguments.

We define a sequence of $\ell + 1$ hybrid experiments $H_0, H_1, \ldots, H_\ell$. In hybrid $H_k$, the challenger constructs the public key $\pk$ exactly as in the real scheme, but answers each encoding oracle query for $1$ by generating the $\ell$ blocks $y_1, \ldots, y_\ell$ as follows:
\begin{itemize}
    \item For $t \le k$, the block is generated honestly: $a_t \gets \enc_h(1^\lambda, \pk_t, 1)$ and $y_t \gets \emb^{-1}(a_t \oplus e_t, r_t)$.
    
    \item For $t > k$, the block is generated randomly: $a_t \gets U_{1.5n}$ and $y_t \gets \emb^{-1}(a_t \oplus e_t, r_t)$.
\end{itemize}

Observe the endpoints of these hybrids:
\begin{itemize}
    \item In $H_\ell$, all $\ell$ blocks are generated honestly, which perfectly corresponds to the real encoding oracle $\enc(1^\lambda, \pk, \cdot)$.

    \item In $H_0$, all $\ell$ blocks are generated from fresh uniform strings $a_t \sim U_{1.5n}$. For any $t$, $e_t$ is given in the public key, and so $a_t \oplus e_t$ is strictly distributed as $U_{1.5n}$. By the second property of \Cref{lem: CGK_dec_props}, for the \emph{fixed} $r_t$ in the public key, the output $y_t = \emb^{-1}(a_t \oplus e_t, r_t)$ is uniform. Since $a_1, \ldots, a_\ell$ are independent, $y = y_1 \circ \cdots \circ y_\ell$ in $H_0$ is perfectly uniformly distributed over $\{0,1\}^{n \ell}$. Thus, $H_0$ perfectly corresponds to the uniform oracle $U_{n \ell}$.
\end{itemize}

We claim that for any $k \in [\ell]$, the adjacent hybrids $H_{k-1}$ and $H_k$ are computationally indistinguishable. Suppose there exists a PPT distinguisher $\sA$ that distinguishes $H_{k-1}$ and $H_k$. We can construct a PPT distinguisher $\sB$ to break the public-key pseudorandomness of $\prc_h$.

$\sB$ receives a single target public key $\pk^*$ and has access to an oracle $\mathcal{O}^*$, which is either the real encoder $\enc_h(1^\lambda, \pk^*, \cdot)$ or a uniform oracle returning strings in $U_{1.5n}$. $\sB$ simulates the environment for $\sA$ by embedding $\pk^*$ into the $k$-th position:
\begin{itemize}
    \item For $t \ne k$, $\sB$ runs $\key_h(1^\lambda)$ to independently generate $(\pk_t, \sk_t)$.
    
    \item For $t = k$, $\sB$ sets $\pk_k = \pk^*$.
\end{itemize}

$\sB$ then independently samples $r_1, \ldots, r_\ell$ from $\{0,1\}^{3n}$ and $e_1, \ldots, e_\ell$ from $\{0,1\}^{1.5n}$. It constructs the full public key $\pk = (\pk_t, r_t, e_t)_{t=1}^\ell$ and runs $\sA(1^\lambda, \pk)$.

Whenever $\sA$ makes an encoding oracle query for $1$, $\sB$ constructs the intermediate strings $a_1, \ldots, a_\ell$ as follows:
\begin{itemize}
    \item For $t < k$, $\sB$ computes $a_t \gets \enc_h(1^\lambda, \pk_t, 1)$ using the keys it generated.
    
    \item For $t = k$, $\sB$ queries its oracle to obtain $a_k \gets \mathcal{O}^*(1)$.
    
    \item For $t > k$, $\sB$ directly samples $a_t \gets U_{1.5n}$.
\end{itemize}
For each $t \in [\ell]$, $\sB$ computes $y_t \gets \emb^{-1}(a_t \oplus e_t, r_t)$, and returns the concatenated string $y = y_1 \circ \cdots \circ y_\ell$ to $\sA$. When $\sA$ outputs a bit, $\sB$ outputs the exact same bit.

If the oracle $\mathcal{O}^*$ is the real encoder $\enc_h(1^\lambda, \pk^*, \cdot)$, then $a_k$ is a valid codeword, matching the distribution of $H_k$; if $\mathcal{O}^*$ provides uniformly random strings, then $a_k \sim U_{1.5n}$, matching the distribution of $H_{k-1}$. 

Obviously, $\sB$ is a valid PPT distinguisher. By the public-key pseudorandomness of $\prc_h$, the distinguishing advantage between $\mathcal{O}^* = \enc_h$ and $\mathcal{O}^* = U_{1.5n}$ must be negligible, which implies a contradiction. Consequently, the advantage of $\sA$ in distinguishing $H_{k-1}$ and $H_k$ is at most $\negl(\lambda)$.

Finally, the total distinguishing advantage between $H_0$ and $H_\ell$ is at most $\ell \cdot \negl(\lambda) = \negl(\lambda)$. Therefore, the output distribution of $\prc$ is computationally indistinguishable from uniform given the public key $\pk$.

\begin{remark}
    (1) The uniform mask $e_t$ is crucial for decoupling the codeword from the embedding seed $r_t$. \Cref{lem: CGK_emb_prop} relies on a random walk argument that strictly requires the input strings to be statistically \emph{independent} of $r_t$. Without $e_t$, the block $y_t$ (and thus the edited substring $z$) would be heavily correlated with $r_t$, invalidating the probability bound. In contrast, the first property of \Cref{lem: CGK_dec_props} does \emph{not} require the input $a$ to be independent of $r_t$. Its distance bound is unconditionally dominated by the event $h_t(0) = h_t(1) = 0$, which depends solely on $r_t$, allowing a direct application of Hoeffding's inequality regardless of $a$.

    (2) The reduction in \Cref{thm: CGK_construction} is not limited to the binary alphabet. The CGK embedding applies to strings over any constant-size alphabet, and hence the reduction holds verbatim for PRCs over any constant alphabet $\Sigma$.
\end{remark}
\section{Generic Reduction to Zero-Bit}\label{sec:generic_reduction_to_zero_bit}

In this section, we present a general black-box framework for using a zero-bit PRC to construct a constant-rate PRC over a general alphabet, and hence also over the binary alphabet. The construction works as follows. We use a short master seed $s$ both to generate the payload-side randomness and to protect itself through many sparse marker blocks inserted at random locations. The decoder first recovers $s$ from these marker blocks, and then uses $s$ to decode the payload. Our construction can be seen as an extension of \cite{christPseudorandomErrorCorrectingCodes2024} for Hamming errors, but we make several modifications to adapt it to edit errors.

We first describe the building blocks for our reduction framework, and abstract an object named \emph{seeded payload code} in \Cref{sec:generic_reduction_building_blocks}. Then we give detailed descriptions of our reduction in \Cref{sec:generic_reduction_construction}. Finally, we state our main result \Cref{thm:const-reduction} for this section and prove it in \Cref{sec:generic_reduction_analysis}.

\subsection{Building Blocks}\label{sec:generic_reduction_building_blocks}

To construct a constant-rate PRC against any $p(\cdot)$-bounded edit channel for a function $p : \N \to [0, 1]$, we use the following building blocks.
\begin{itemize}
    \item \textbf{Zero-Bit PRC.}
    Let $\lambda$ be the security parameter. Let
    \[
    \prcz=(\keyz,\encz,\decz)
    \]
    be a zero-bit public-key PRC over alphabet $\Sigma$ with codeword length $\ell$, which is robust to any $\pz(\cdot)$-bounded edit channel for a function $\pz : \N \to [0, 1]$. Moreover, $\prcz$ should satisfy the uniformity property defined in the following. Particularly, the zero-bit public-key PRC given by \Cref{alg: cgk} in \Cref{sec:zero_bit_PRC_CGK_construction} satisfies this uniformity property.
    
    \begin{definition}[Uniformity]\phantomsection\label{def:prc_uniformity}
        We say a PRC with codeword length $\ell$ over alphabet $\Sigma$ has the \emph{uniformity property}, if for any fixed message, over the randomness of the key-generation and encoding procedures, the resulting codeword is uniformly distributed over $\Sigma^\ell$.
    \end{definition}
    
    \item \textbf{Seeded Payload Code.}
     We next introduce the notion of a seeded payload code, which will serve as the payload code in our construction. Informally, such a code uses a short random seed and a (public) masking string to encode the message, while satisfying robustness, uniformity, and pseudorandomness.
     
    \begin{definition}[Seeded payload code]\phantomsection\label{def:seeded_payload_code}
        We use $M$ to denote the size of the message space. Let
        \[
        \eccload:[M]\times\{0,1\}^d\times\{0, 1\}^r\to\Sigma^n
        \]
        be the encoding map for the payload code with rate $R=\frac{\log_{|\Sigma|} M}{n}$, which should be deterministic polynomial-time computable. We say $\eccload$ is a \emph{seeded payload code robust to any $\delta$ fraction of edit errors}, if it satisfies the following properties:
        \begin{itemize}
            \item \textbf{Robustness.} There exists a deterministic polynomial-time computable decoding map
            \[
            \eccload^{-1}:\Sigma^\ast\times\{0,1\}^d\times\{0, 1\}^r\to [M]\cup\{\bot\}
            \]
            such that for every $m\in[M]$, with probability at least $1-\negl(\lambda)$ over the choice of
            $s\in\{0,1\}^d$ and $\shiftpk\in\{0, 1\}^r$, every received word $y'$ satisfying
            \[
            \ed\bigl(y',\eccload(m,s,\shiftpk)\bigr)\le \delta n
            \]
            also satisfies
            \[
            \eccload^{-1}(y',s,\shiftpk)=m.
            \]
    
            \item \textbf{Uniformity.} If $\shiftpk$ is sampled uniformly from $\{0, 1\}^r$, then for every fixed
            $m\in[M]$ and $s\in\{0,1\}^d$, the output $\eccload(m,s,\shiftpk)$ is uniformly distributed over $\Sigma^n$.
    
            \item \textbf{Pseudorandomness.} For every PPT adversary $\sA$,
            \[
            \left|
            \Pr_{\shiftpk \gets U_r}\bigl[\sA^{\eccload(\cdot, U_d, \shiftpk)}(1^\lambda,\shiftpk)=1\bigr]
            -
            \Pr_{\substack{\shiftpk \gets U_r \\ \sU}}\bigl[\sA^{\sU}(1^\lambda,\shiftpk)=1\bigr]
            \right|
            \le \negl(\lambda),
            \]
            where $\eccload(\cdot, U_d, \shiftpk)$ denotes the oracle that first samples $s \gets U_d$ and then returns $\eccload(\cdot, s, \shiftpk)$, and $\sU$ denotes an oracle that returns a fresh uniform sample from $\Sigma^n$ on each query. Note that the distinguisher knows $\shiftpk$ since we will put it in the public key.
        \end{itemize}
    \end{definition}
    
    \item \textbf{Hamming ECC.}
    Let
    \[
    \eccseed:\{0,1\}^d\to\{0,1\}^{d'}
    \]
    be a seed code of constant rate that can be efficiently decoded from a $\pseed$ fraction of Hamming errors, where $0 < \pseed < 1 / 4$ is a constant. Such codes exist for all sufficiently large $d$. For example, one may use a concatenated code.

    We take $d=\Theta(n^{\gamma})$ for some constant $0 < \gamma < 1 / 4$. Since $\eccseed$ has constant rate, it follows that $d'=O(d)$.
\end{itemize}

\subsection{Detailed Construction}\label{sec:generic_reduction_construction}

Now we describe the construction of our PRC.
\begin{itemize}
    \item \textbf{Key Generation:}
    For each $j\in[2 d']$, sample an independent key pair
    \[
    (\pk_j,\sk_j)\gets \keyz(1^\lambda).
    \]
    Also sample a uniformly random string $\shiftpk \gets U_r$.
    Set
    \[
    \pk:=(\pk_1,\dots,\pk_{2 d'},\shiftpk),
    \qquad
    \sk:=(\sk_1,\dots,\sk_{2 d'},\shiftpk).
    \]

    \item \textbf{Encoding:}
    Given a message $m\in[M]$, the encoder first samples a short seed
    \[
    s\gets\{0,1\}^d.
    \]
    It then encodes the seed using the seed code and appends a padding of $d'$ ones:
    \[
    s':=\eccseed(s) \circ 1^{d'}\in\{0,1\}^{2 d'}.
    \]
    For each $i\in[2 d']$, the encoder uses the public key $\pk_i$ to encode the bit $s'_i$, obtaining a marker block
    \[
    u_i \gets
    \begin{cases}
    \encz(1^\lambda,\pk_i,1), & \text{if } s'_i=1,\\
    U_\ell, & \text{if } s'_i=0.
    \end{cases}
    \]
    Next, the encoder computes the payload codeword
    \[
    c:=\eccload(m,s,\shiftpk)\in\Sigma^n.
    \]
    Finally, the encoder samples insertion locations $\loc_1, \ldots, \loc_{2 d'}$ and interleaves the marker blocks $u_1,\dots,u_{2 d'}$ into the payload codeword $c$. More precisely, we view the final codeword as occupying coordinates in $\left[n+2 d'\ell\right]$.
    The encoder samples $2 d'$ pairwise disjoint intervals
    \[
    I_1 = [\loc_1, \loc_1 + \ell - 1], \dots,I_{2 d'} = [\loc_{2 d'}, \loc_{2 d'} + \ell - 1]\subseteq \left[n+2 d'\ell\right],
    \]
    each of length $\ell$, uniformly at random subject to the condition that they are pairwise disjoint. Equivalently, one may sample candidate intervals independently and resample until all chosen intervals are pairwise disjoint. For each $i\in[2 d']$, the marker block $u_i$ is placed on the interval $I_i$, while the symbols of the payload codeword $c$ are placed in the remaining coordinates in their natural order. The resulting word is denoted by $y$.

    \item \textbf{Decoding:}
    Given a received word $y'$, the decoder first searches for candidate marker blocks. More precisely, for each index $i\in[2 d']$, it applies the zero-bit decoder $\decz(1^\lambda,\sk_i,\cdot)$ to candidate substrings of $y'$, and uses the resulting outputs to reconstruct a corrupted version $\tilde{s}$ of the protected seed string $s'$. It then checks whether the padding portion of $\tilde{s}$ is consistent, and decodes the seed-encoding portion using the decoder of $\eccseed$ to recover the seed $s$. If either step fails, the decoder outputs $\bot$.
    
    Once we successfully recover $s$, we can run the decoder $\eccload^{-1}$ for the payload code using $s$ and $\shiftpk$, and output the recovered message.
\end{itemize}

\begin{breakablealgorithm}
\caption{Generic Construction from a Zero-Bit Public-Key Edit PRC}\label{alg:const-rate}
\begin{algorithmic}[1]
\Require Security parameter $\lambda$, $n = n(\lambda)$, $d = d(\lambda)$, $\ell = \ell(\lambda)$, message space $[M]$
\Statex \hspace{\algorithmicindent}Zero-bit public-key PRC $\prcz=(\keyz,\encz,\decz)$ over alphabet $\Sigma$
\Statex \hspace{\algorithmicindent}Seeded payload code $\eccload : [M] \times \{0, 1\}^d \times \{0, 1\}^r \to \Sigma^n$
\Statex \hspace{\algorithmicindent}Seed code $\eccseed : \{0, 1\}^d \to \{0, 1\}^{d'}, d' = O(d)$

\Function{$\key$}{$1^\lambda$}
    \For{$j=1$ to $2 d'$}
        \State $(\pk_j,\sk_j)\gets \keyz(1^\lambda)$
    \EndFor
    \State Sample a uniformly random shifting vector $\shiftpk \gets U_r$
    \State $\pk = (\pk_1,\dots,\pk_{2 d'},\shiftpk)$
    \State $\sk = (\sk_1,\dots,\sk_{2 d'},\shiftpk)$
    \State \Return $(\pk,\sk)$
\EndFunction

\Statex

\Function{$\enc$}{$1^\lambda,\pk,m$}
    \State Sample $s\gets\{0,1\}^d$
    \State Compute $s'\gets \eccseed(s) \circ 1^{d'}$
    \For{$i=1$ to $2 d'$}
        \If{$s'_i=1$}
            \State $u_i\gets \encz(1^\lambda,\pk_i,1)$ \Comment{For $s_i' = 1$, the marker block is a PRC codeword}
        \Else
            \State $u_i\gets U_\ell$ \Comment{For $s_i' = 0$, the marker block is a random string}
        \EndIf
    \EndFor
    \State Compute the payload codeword $c = \eccload(m,s,\shiftpk)$
    \State Interleave the marker blocks $u_1,\dots,u_{2 d'}$ into the payload codeword $c$ as described
    \State \Return the resulting word $y$
\EndFunction

\Statex

\Function{$\dec$}{$1^\lambda,\sk,y'$}
    \State Initialize $\tilde{s} = \bot^{2 d'}$ \Comment{Step 1: Recover the seed $s$}
    \For{$i=1$ to $2 d'$}
        \For{all intervals $[l : r]$ of $y'$}
            \State Run $\decz(1^\lambda, \sk_i,y'_{[l : r]})$ \Comment{Try all substrings of $y'$} \phantomsection\label{line:red:enum}
        \EndFor
        \State Set $\tilde{s}_i= 1$ if the decoder succeeds in any trial; otherwise set $\tilde{s}_i= 0$
    \EndFor
    \If{$\tilde{s}_{[d' + 1, 2 d']} = 0^{d'}$} \Comment{Check consistency of the padding portion } \phantomsection\label{line:red:check}
        \State \Return $\bot$
    \EndIf
    \State Decode $\tilde{s}_{[1 : d']}$ using the seed decoder $\eccseed^{-1}$ to recover $s$ \phantomsection\label{line:red:rec}
    \If{seed decoding fails}
        \State \Return $\bot$
    \EndIf
    \State \Return $\eccload^{-1}(y',s,\shiftpk)$ \Comment{Step 2: Use the payload decoder to recover the message}
\EndFunction
\end{algorithmic}
\end{breakablealgorithm}

\subsection{Analysis of the \Cref{alg:const-rate}}\label{sec:generic_reduction_analysis}

\begin{theorem}\label{thm:const-reduction}
    Let $\lambda$ be a security parameter, and:
    \begin{itemize}
        \item $\prcz = (\keyz, \encz, \decz)$ is a zero-bit public-key (secret-key) PRC with codeword length $\ell$ that is robust to any $\pz(\cdot)$-bounded edit channel for the function $\pz : \N \to [0, 1]$. Moreover, $\prcz$ satisfies the uniformity property (see \Cref{def:prc_uniformity}).

        \item $\eccload : [M] \times \{0, 1\}^d \times \{0, 1\}^r \to \Sigma^n$ is a seeded payload code (see \Cref{def:seeded_payload_code}) with rate $R = \log_{|\Sigma|} M / n$ that is robust to any $\delta$ fraction of edit errors.
        
        \item $\eccseed : \{0, 1\}^d \to \{0, 1\}^{d'}$ is an error-correcting code against any $\pseed$ fraction of Hamming errors, where $0 < \pseed < 1 / 4$ is a constant.
    \end{itemize}
    
    Let $N=n+2 d'\ell$ be the total codeword length. Suppose the parameters satisfy $d'=\ell=\Theta(n^{\gamma})$ for some constant $0 < \gamma < 1 / 4$. Moreover, let $p : \N \to [0, 1]$ be a function such that $p(N) \le (1 - \varepsilon) \cdot \min\{\delta, \pz(\ell) \cdot \pseed\}$ for some constant $\varepsilon>0$. Then the scheme $\prc=(\key,\enc,\dec)$ constructed in \Cref{alg:const-rate} is a public-key (secret-key) PRC that is robust against any $p(\cdot)$-bounded edit channels $\sE$, with rate
    \[
    \frac{\log_{|\Sigma|} M}{n+2d'\ell}=R-o(1).
    \]
\end{theorem}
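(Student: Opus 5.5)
The rate claim is immediate: $2d'\ell=\Theta(n^{2\gamma})=o(n)$ since $\gamma<1/4$, so $\log_{|\Sigma|}M/(n+2d'\ell)=R\cdot n/(n+o(n))=R-o(1)$. The rest of the proof checks pseudorandomness, soundness and robustness separately.

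\textbf{Pseudorandomness.} I will use a hybrid argument over the pieces of the codeword.
\begin{enumerate}
\item Replace each marker block $u_i$ with $s'_i=1$ by a fresh uniform string of length $\ell$. This is done one key index at a time, reducing to the pseudorandomness of $\prcz$: the reduction embeds the challenge $\pk^*$ at position $i$, generates all other keys itself, and uses its oracle whenever $s'_i=1$. Since $s$ is now independent of the markers, $s$ appears only inside $\eccload(m,s,\shiftpk)$.
\item Apply the pseudorandomness of the seeded payload code, with $\shiftpk$ public, to replace $c$ by a uniform string.
\item Observe that interleaving independent uniform blocks and a uniform payload at independently chosen disjoint locations yields a uniform word of length $N$.
\end{enumerate}

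\textbf{Soundness.} Fix any $c$. For $\dec$ to output something other than $\bot$, some $\tilde s_i$ with $i>d'$ must be $1$. That requires $\decz(\sk_i,\cdot)$ to accept some fixed substring of $c$; there are only $\poly$ many substrings, so a union bound together with the soundness of $\prcz$ under independently sampled keys makes this negligible.

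\textbf{Robustness.} Fix $m$ and the oblivious channel $\sE$. By uniformity of $\prcz$ (for markers with $s'_i=1$; trivially for random blocks) and of $\eccload$, the codeword $y$ is exactly uniform on $\Sigma^N$. Moreover it is independent of the interval locations $I_1,\dots,I_{2d'}$. Fix an optimal edit transcript from $y$ to $\sE(y)$. This transcript is determined by $y$, hence independent of the locations. Charge each edit to the coordinate of $y$ it touches, and let $k\le p(N)N$ be the total number of edits. Call marker $i$ bad if more than $\pz(\ell)\ell$ edits fall in $I_i$. The total edits landing in the union of marker intervals, and the number of bad markers, are controlled by how the random disjoint intervals sample the fixed charge pattern.

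\emph{This step is the main obstacle.} The locations are not independent, and an adversary might concentrate up to $p(N)N\gg \ell$ edits in one short region. I would first show that each fixed coordinate lies in a marker interval with probability about $2d'\ell/N$. Then I would bound the number of bad markers by Markov's inequality: the expected number of edits per interval is about $p(N)\ell$, so an interval receives more than $\pz(\ell)\ell$ edits with probability at most about $p(N)/\pz(\ell)\le(1-\varepsilon)\pseed$. Markov alone only gives constant probability, so I would upgrade to high probability with a martingale or negative-association (Hoeffding-type) bound over the $2d'=\Theta(n^\gamma)$ nearly independent placements, conditioning sequentially. This uses the slack $\varepsilon$ to get at most a $\pseed$ fraction of bad markers among $[1,d']$ and among the padding, except with probability $\exp(-\Omega(n^\gamma))$.

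For each good marker, the received substring aligned with $I_i$ is within $\pz(\ell)\ell$ edits of $u_i$. The exhaustive search in \alglineref{line:red:enum} will try it, so $\tilde s_i=1$ whenever $s'_i=1$, except with negligible probability. Conversely, $\tilde s_i=1$ with $s'_i=0$ happens with negligible probability: the key $\sk_i$ is independent of everything in $y'$, so soundness applies after fixing $y'$. Hence the padding check passes, and $\tilde s_{[1:d']}$ is within $\pseed d'$ Hamming errors of $\eccseed(s)$, so \alglineref{line:red:rec} recovers $s$.

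Finally, deleting the marker blocks from $y'$ is viewed as extra edits, giving
\[
\ed\bigl(y',c\bigr)\le p(N)N+2d'\ell\le(1-\varepsilon)\delta(n+o(n))+o(n)\le\delta n
\]
for large $n$. The payload robustness, which holds with probability $1-\negl$ over $s$ and $\shiftpk$ for all such $y'$ simultaneously, then returns $m$.
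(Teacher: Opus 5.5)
Your proposal is correct and follows essentially the same route as the paper. It uses the same three-hybrid pseudorandomness argument and the same union-bound soundness argument. For robustness it follows the same chain: uniformity makes the edit pattern independent of the marker locations; a per-interval Markov bound uses $p(N)\le(1-\varepsilon)\pz(\ell)\pseed$; a concentration bound over the $2d'$ markers follows; soundness rules out false positives on $s'_i=0$ blocks; the seed is decoded; and finally the $2d'\ell=o(n)$ marker symbols are treated as extra insertions against the payload. The only variation is how you handle the dependence among the disjoint placements. The paper applies Hoeffding in an idealized model with independent locations and then accounts for disjointness, which fails only with probability $O(\ell d'^2/N)=o(1)$. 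Your sequential-conditioning and stochastic-domination argument works directly with the disjoint distribution, and either approach suffices.
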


\paragraph{Robustness.}
We now establish the robustness of the edit PRC. Observe that the zero-bit PRC component and the payload codeword are independent, and each is uniformly distributed over the randomness of key generation and encoding. Combining with the uniformity property of the payload code $\eccload$, the final codeword is also uniformly distributed.

Let $y$ denote the transmitted word, and let $y'=\sE(y)$ be the received word. By the uniformity property, the edit pattern imposed by the channel is independent of the random placement of the blocks $u_i$. Therefore, for the purpose of analysis, we may equivalently imagine that the channel first fixes its edit pattern, and only afterwards the encoder samples the random locations of the blocks $u_i$.

We charge each edit operation in the edit transcript of $(y, y')$ to exactly one position as follows: for any $j \in [N]$, edit operations between two positions $y_j$ and $y_{j+1}$ are charged to position $j$. For each position $j\in[N]$, let $w_j$ denote the number of edit operations charged to that position. If the marker block $u_i$ is placed on the interval $I_i=[\loc_i,\loc_i+\ell-1]\subseteq[N]$, then the total number of edit operations charged to the interval occupied by $u_i$ is $c_i:=\sum_{j\in I_i} w_j$.

\begin{claim}\label{claim:bad-int}
    Call an interval $I_i$ \emph{bad} if $c_i > \pz(\ell) \cdot \ell$. Then, with probability at least $1-\negl(\lambda)$ over the random choice of the insertion locations, the numbers of bad intervals with $i \in [1, d']$ and $i \in [d' + 1, 2 d']$ are both at most $\pseed \cdot d'$.
\end{claim}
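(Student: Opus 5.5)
Fix the channel's charge vector $(w_j)_{j\in[N]}$ in advance, as the uniformity argument above allows. Since $\sE$ is $p(\cdot)$-bounded, $W:=\sum_j w_j \le p(N)\,N \le (1-\varepsilon)\pz(\ell)\pseed N$. The plan has three steps: bound the probability that a single interval is bad, show that the bad events within each half are almost independent, and apply a Chernoff/Hoeffding-type tail bound to each half separately, finishing with a union bound over the two halves.

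\textbf{Step 1 (one interval).} Write $\loc_i$ for the left endpoint of $I_i$. For the free (unconstrained) placement, $\loc_i$ is uniform over $N-\ell+1$ positions. Each $j$ lies in at most $\ell$ of these windows, so
\[
\E[c_i]\le \frac{\ell W}{N-\ell+1}\approx (1-\varepsilon)\pz(\ell)\pseed\,\ell .
\]
Markov's inequality then gives
\[
\Pr[c_i>\pz(\ell)\ell]\le (1-\varepsilon)\pseed(1+o(1)) \le (1-\varepsilon/2)\pseed .
\]
The disjointness constraint only conditions on an event of probability $1-o(1)$. The $2d'$ intervals have total length $2d'\ell=\Theta(n^{2\gamma})=o(n)$, since $\gamma<1/4$. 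A birthday-type bound then shows that a free placement is pairwise disjoint with probability at least $1-O(d'^2\ell/N)=1-O(n^{3\gamma-1})=1-o(1)$. So the marginal bound changes only by a $(1+o(1))$ factor.

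\textbf{Step 2 (near-independence).} I would realize the constrained placement sequentially: choose $\loc_1,\dots,\loc_{2d'}$ one at a time, each uniform among the positions disjoint from the intervals already placed. Equivalently, realize the constrained distribution exactly and bound its conditionals. The key claim is that, conditioned on any placement of the earlier intervals, the number of allowed positions is at least $N-4d'\ell=(1-o(1))N$. Hence the conditional probability that $I_i$ is bad is at most
\[
\frac{\ell W}{(1-o(1))N\,\pz(\ell)\ell}\le (1-\varepsilon/2)\pseed .
\]
This makes the bad indicators within each half stochastically dominated by i.i.d.\ Bernoulli$((1-\varepsilon/2)\pseed)$ variables. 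The justification is the standard martingale/Azuma (or sequential-domination) argument for adapted indicators with bounded conditional means. This step is the main technical obstacle: the exact uniform-subject-to-disjointness distribution is not literally the sequential one. I would handle it by showing both are within statistical distance $o(1)$ of the free product distribution. More robustly, I would bound the conditional probability that interval $i$ is bad given all other intervals, using the same counting over at least $(1-o(1))N$ allowed positions, which suffices for domination.

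\textbf{Step 3 (tail bound).} Among the $d'$ intervals in a half, the expected number of bad ones is at most $(1-\varepsilon/2)\pseed d'$. By Hoeffding (\Cref{lem:Hoeffding_ineq}) applied to the dominating Bernoullis, the probability that more than $\pseed d'$ are bad is at most $\exp(-\Omega(\varepsilon^2\pseed^2 d'))=\exp(-\Omega(n^{\gamma}))$. This is $\negl(\lambda)$, since $n$ is polynomial in $\lambda$. A union bound over the two halves finishes the proof.
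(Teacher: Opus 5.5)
Your argument is correct. Its skeleton matches the paper's: averaging the charge over start positions, Markov for a single interval, Hoeffding within each half, and a union bound over the two halves. Where you differ is in how you handle the dependence that the disjointness constraint introduces.

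The paper analyzes an idealized model with independent placements and applies plain Hoeffding there. It then only observes that one rejection-sampling trial fails with probability $q=O(\ell d'^2/N)$, so $\lambda$ trials suffice. Transferring the tail bound to the actual conditioned distribution is left implicit. That transfer goes through $\Pr[E\mid D]\le \Pr[E]/\Pr[D]$, where $D$ is the event that a free placement is disjoint and $\Pr[D]=1-o(1)$.

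You instead work with the exact uniform-over-disjoint-tuples distribution. Given all other intervals, $\loc_i$ is uniform over at least $N-4d'\ell$ start positions, and fewer than $W/\pz(\ell)$ of these are bad. The indicators $X_1,\dots,X_{i-1}$ are functions of the other intervals' positions, so the tower property gives $\Pr[X_i=1\mid X_1,\dots,X_{i-1}]\le(1-\varepsilon/2)\pseed$. This yields sequential domination by a binomial. You should state this tower-property step explicitly, since it is exactly why the ``given all others'' bound suffices.

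Your route is self-contained for the encoder's real distribution. For the probabilistic estimate it needs only that few positions are excluded, $4d'\ell=o(N)$, rather than a lower bound on $\Pr[D]$; that lower bound matters only for the sampler's efficiency. The cost is invoking a sequential-domination or Azuma-type lemma, and the paper's route is shorter.

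One caution: your intermediate idea of showing the constrained and free distributions are within statistical distance $o(1)$ would not work by itself. An additive $o(1)$ error swamps a negligible bound. Only a multiplicative comparison such as the factor $1/\Pr[D]$, or your conditional argument, preserves negligibility.
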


\begin{proof}
We first analyze an idealized model in which the interval locations are sampled independently. We will remove this assumption at the end.

Recall that the total number of edit operations is at most $p(N) \cdot N$. Fix a block $u_i$. The interval $I_i$ has length $\ell$, and its starting location is sampled uniformly from at least $N-\ell+1$ possible positions. Since each charged error can affect at most $\ell$ candidate intervals, for any $i \in [2 d']$, we have
\[
\E[c_i]\le \frac{p(N) \cdot N \ell}{N-\ell+1} \le (1 + o(1)) \cdot p(N) \cdot \ell
\]
Then by Markov's inequality, for any $i \in [2 d']$,
\[
\Pr[c_i > \pz(\ell) \cdot \ell] \le \frac{\E[c_i]}{\pz(\ell) \cdot \ell} \le (1 - \varepsilon + o(1)) \cdot \pseed.
\]

Let $X_1$ and $X_2$ denote the numbers of bad intervals with $i \in [1, d']$ and $i \in [d' + 1, 2 d']$, respectively. Under the independence assumption, the indicators of the events ``$c_i > \pz(\ell) \cdot \ell$'' are independent, and hence by Hoeffding's inequality,
\[
\text{both }\Pr\left[X_1 > \pseed \cdot d' \right]\text{ and }\Pr\left[X_2 > \pseed \cdot d'\right] \le 2 \exp\big(-(2 - o(1)) \cdot \varepsilon^2 \pseed^2 d'\big) = \negl(\lambda).
\]
Hence, under the independence assumption, with probability at least $1-\negl(\lambda)$, the numbers of bad intervals with $i \in [1, d']$ and $i \in [d' + 1, 2 d']$ are both at most $\pseed \cdot d'$.

Finally, we account for the non-overlapping constraint. One sampling trial fails if two sampled intervals intersect. The probability of this event is
\[
q
=
1-\left(1-O\left(\frac{\ell d'}{N}\right)\right)^{2 d'}
=
O\left(\frac{\ell d'^2}{N}\right).
\]
The encoder therefore repeats the sampling procedure independently up to $\lambda$ times, stopping once it finds a collection of pairwise disjoint intervals. The probability that all $\lambda$ trials fail is at most $q^\lambda=\negl(\lambda)$.
\end{proof}

We next show that every non-bad block $u_i$ can be recovered correctly with probability at least $1-\negl(\lambda)$. Consequently, by a union bound over all non-bad blocks, with probability at least $1-\negl(\lambda)$, every non-bad block is recovered correctly.
\begin{itemize}
    \item \textbf{Case $s'_i=1$.}
    In this case, the decoder enumerates all substrings $y'_{[l:r]}$ in \alglineref{line:red:enum}. Since $I_i$ is not bad, the total number of edit operations charged to $I_i$ is at most $\pz(\ell) \cdot \ell$. Therefore, among the candidate substrings examined by the decoder, there exists one that contains the corrupted image of $u_i$ and differs from the original block by at most $\pz(\ell) \cdot \ell$ edits. By the robustness of $\prcz$, the decoder successfully recovers this block, and hence outputs $\tilde{s}_i=1$ with probability at least $1-\negl(\lambda)$.

    \item \textbf{Case $s'_i=0$.}
    In this case, the encoder places an independent uniform string $u_i\leftarrow U_\ell$ on the interval $I_i$. For every candidate substring considered by the decoder, this string remains independent of the key pair $(\pk_i,\sk_i)$. Therefore, by the soundness of $\prcz$, the probability that any such candidate substring is incorrectly accepted is at most $\negl(\lambda)$. Taking a union bound over all candidate substrings examined for the $i$-th position, we conclude that $\tilde{s}_i=0$ with probability at least $1-\negl(\lambda)$.
\end{itemize}

Conditioned on the event in \cref{claim:bad-int} and on the event that all non-bad blocks are recovered correctly, we know that in both the seed-encoding portion and the padding portion, at most $\pseed \cdot d'$ bits are decoded incorrectly. Hence, the padding check in \alglineref{line:red:check} does not output $\bot$. And by the decoding guarantee of $\eccseed$, the seed $s$ is correctly recovered in \alglineref{line:red:rec}. In all, we conclude that the decoder recovers the correct seed $s$ with probability at least $1-\negl(\lambda)$.

Finally, conditioned on recovering the correct seed $s$, we view the inserted $\prc_0$ blocks as additional insertion errors relative to the payload codeword. The underlying channel contributes at most $p(N)$ fraction of edit errors, while the total length of all inserted $\prc_0$ blocks is $2d'\ell = \Theta(N^{2\gamma})$. Hence the payload codeword experiences at most $p(N)N + \Theta(N^{2\gamma}) \leq \delta N$ edit errors in total since $p(N)\leq (1-\varepsilon) \delta$. The robustness of the payload code implies that, for every message $m$, the decoder outputs the correct message with probability at least $1-\negl(\lambda)$.

\paragraph{Soundness.}
We now establish the soundness of the resulting PRC.

Fix any received word $y'\in\Sigma^\ast$. We analyze the seed-recovery stage of the decoder. For each index $i\in[2 d']$ and each candidate interval $I$, the decoder applies $\decz(1^\lambda,\sk_i,\cdot)$ to the corresponding substring of $y'$. By the soundness of $\prcz$, for every fixed pair $(i, I)$,
\[
\Pr_{(\pk_i, \sk_i)\gets \keyz(1^\lambda)}\bigl[\decz(1^\lambda,\sk_i,y'_I)\neq \bot\bigr]\le \negl(\lambda).
\]
Thus, except with negligible probability, the decoder rejects on that candidate interval. Since there are $2 d'$ choices of $i$ and at most polynomially many candidate intervals $I$, by a union bound,
\[
\Pr_{(\pk, \sk)\gets \key(1^\lambda)}\bigl[\text{Some invocation of }\decz\text{ accepts}\bigr]
\le 2 d' \cdot \poly(n) \cdot \negl(\lambda) = \negl(\lambda).
\]

Hence, with probability at least $1-\negl(\lambda)$, every invocation of $\decz$ rejects, which implies that the recovered string $\tilde{s}$ is the all-zero string. Therefore the padding check in the decoding algorithm fails, and the decoder outputs $\bot$.

\paragraph{Pseudorandomness.}
We show that the pseudorandomness of the final construction follows from the pseudorandomness of the underlying zero-bit PRC together with the pseudorandomness of the payload code. Informally, the proof proceeds in two steps. First, we replace the marker blocks produced by the zero-bit PRC with independent uniform blocks; this changes the distribution only negligibly by the security of $\prcz$. Second, conditioned on the marker blocks already being uniform, the payload part is pseudorandom by assumption, and hence the entire interleaved codeword is pseudorandom.

Formally, we define the following sequence of hybrid experiments:

\begin{itemize}
    \item \textbf{Hybrid $H_0$ (Real World).} The challenger generates $(\pk, \sk) \leftarrow \key(1^\lambda)$ honestly. When $\sA$ makes an encoding query for message $m$, the challenger computes $c \leftarrow \enc(1^\lambda, \pk, m)$ using the real encoding algorithm and returns $c$.
    
    \item \textbf{Hybrid $H_1$ (Uniform Marker Blocks).} Same as $H_0$, except that for each encoding query, the marker blocks $u_1, \ldots, u_{2d'}$ are sampled as independent uniform strings from $\Sigma^\ell$ instead of being generated via $\encz$. Specifically, we replace $u_i \leftarrow \encz(1^\lambda, \pk_i, 1)$ when $s'_i = 1$ with $u_i \leftarrow U_\ell$ for all $i \in [2d']$.
    
    \item \textbf{Hybrid $H_2$ (Uniform Payload).} Same as $H_1$, except that the payload codeword $c = \eccload(m, s, \shiftpk)$ is replaced with a uniformly random string $c \leftarrow U_n$.
    
    \item \textbf{Hybrid $H_3$ (Ideal World).} Same as $H_2$, except that the entire output is a uniformly random string from $\Sigma^N$. This corresponds to the oracle $U_N$ in the theorem statement.
\end{itemize}

We now bound the distinguishing advantage between adjacent hybrids.

\begin{claim}
    $H_0$ and $H_1$ are computational indistinguishable even with $\pk$.
\end{claim}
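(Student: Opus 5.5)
The plan is a standard hybrid argument over the $2d'$ marker positions, reducing each step to the public-key pseudorandomness of $\prcz$. Define intermediate hybrids $H_0^{(0)} = H_0, H_0^{(1)}, \ldots, H_0^{(2d')} = H_1$. In $H_0^{(k)}$, for every encoding query, the marker blocks $u_i$ with $i \le k$ are sampled from $U_\ell$ regardless of $s'_i$. The blocks with $i > k$ are generated as in the real scheme: $\encz(1^\lambda,\pk_i,1)$ if $s'_i=1$ and $U_\ell$ otherwise. Everything else is unchanged, including the seed $s$, the payload $c=\eccload(m,s,\shiftpk)$ and the random insertion locations. Since $2d' = \poly(\lambda)$, it suffices to show that each adjacent pair is indistinguishable with negligible advantage, given the full $\pk$.

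For the step from $H_0^{(k-1)}$ to $H_0^{(k)}$, I would build a distinguisher $\sB$ against $\prcz$. It receives a challenge public key $\pk^*$ and an oracle $\sO^*$ that is either $\encz(1^\lambda,\pk^*,\cdot)$ or a uniform oracle over $\Sigma^\ell$. $\sB$ sets $\pk_k=\pk^*$, samples $(\pk_j,\sk_j)\gets\keyz(1^\lambda)$ for $j\ne k$ and $\shiftpk\gets U_r$ itself, and hands $\pk=(\pk_1,\dots,\pk_{2d'},\shiftpk)$ to $\sA$. On each query $m$, $\sB$ samples $s$ and computes $s'=\eccseed(s)\circ 1^{d'}$. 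It then sets:
\begin{itemize}
\item $u_i\gets U_\ell$ for $i<k$;
\item $u_k\gets\sO^*(1)$ if $s'_k=1$, and $u_k\gets U_\ell$ otherwise;
\item $u_i$ honestly for $i>k$ using $\pk_i$.
\end{itemize}
Finally it computes $c=\eccload(m,s,\shiftpk)$, interleaves the blocks at freshly sampled disjoint locations, and returns the result. All of this is PPT and uses only public information: $\eccload$ is deterministic given $(m,s,\shiftpk)$, and no secret keys are needed. If $\sO^*$ is the real encoder, the simulation is exactly $H_0^{(k-1)}$. If $\sO^*$ is uniform, $u_k$ is uniform in both cases of $s'_k$, which is exactly $H_0^{(k)}$. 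Hence any non-negligible advantage of $\sA$ transfers to $\sB$, contradicting the pseudorandomness of $\prcz$. Summing over the $2d'$ steps gives advantage at most $2d'\cdot\negl(\lambda)=\negl(\lambda)$.

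The only subtle point, and the step I expect to need the most care, is that the adversary sees $\pk$ and makes many adaptive queries. Both features are handled by the definition of public-key pseudorandomness for $\prcz$: it is oracle-based, so multiple queries are allowed, and the adversary receives $\pk^*$. I also need to check that the seed $s$ (and hence whether $s'_k=1$) is sampled by $\sB$ independently of $\pk^*$, so that calling $\sO^*$ only when $s'_k=1$ is a valid simulation. For the secret-key variant, the same argument goes through with $\sB$ using the secret-key oracle and not revealing $\pk_k$.
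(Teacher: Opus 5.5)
Your proposal is correct and follows the paper's proof essentially step for step. Both use a hybrid over the $2d'$ marker blocks, embed the challenge $\pk^*$ at position $k$, query the oracle only when $s'_k=1$, generate the other blocks uniformly or honestly from self-sampled keys, and sum the $2d'$ negligible advantages. Your added check that $s$ is sampled independently of $\pk^*$ is a point the paper leaves implicit.
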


\begin{proof}
    We use a hybrid argument over the $2d'$ marker blocks. Let $H_{0,j}$ be an experiment where the first $j$ marker blocks are uniform, and the remaining $2d'-j$ blocks are generated honestly. Note that $H_{0,0} = H_0$ and $H_{0,2d'} = H_1$. For any $j \in [2 d']$, suppose $\sA$ distinguishes $H_{0, j - 1}$ and $H_{0, j}$ with non-negligible advantage. We construct an adversary $\sB$ against $\prcz$:
    \begin{enumerate}
        \item $\sB$ receives a target public key $\pk^*$ and an oracle $\sO^*$ which is either $\encz(1^\lambda, \pk^*, 1)$ or $U_\ell$.
        
        \item $\sB$ samples $(\pk_i, \sk_i) \leftarrow \keyz(1^\lambda)$ for $i \neq j$ and $\shiftpk \leftarrow U_r$.
        
        \item Run $\sA\big(1^\lambda, \pk = (\pk_1, \dots, \pk^*, \dots, \pk_{2d'}, \shiftpk)\big)$. When $\sA$ queries $m$:
        \begin{itemize}
            \item $\sB$ samples $s \leftarrow \{0,1\}^d$ and computes $s' = \eccseed(s) \circ 1^{d'}$.
            
            \item For $i < j$, $\sB$ sets $u_i \leftarrow U_\ell$.
            
            \item For $i = j$: If $s'_j = 1$, $\sB$ sets $u_j \leftarrow \sO^*$; otherwise, $u_j \leftarrow U_\ell$.
            
            \item For $i > j$: If $s'_i = 1$, $\sB$ sets $u_i \leftarrow \encz(1^\lambda, \pk_i, 1)$; otherwise, $u_i \leftarrow U_\ell$.
            
            \item $\sB$ computes $c = \eccload(m, s, \shiftpk)$ and returns the interleaved result to $\sA$.
        \end{itemize}
    \end{enumerate}

    If $\sO^*$ is the real encoder, $\sB$ perfectly simulates $H_{0,j-1}$. If $\sO^*$ is the uniform oracle, $\sB$ simulates $H_{0,j}$. The two hybrids differ only in the distribution of the $j$-th marker block when $s'_j = 1$: in $H_{0,j-1}$ it is generated as $\encz(1^\lambda,\pk^*,1)$, whereas in $H_{0,j}$ it is sampled uniformly from $U_{\ell}$. By the pseudorandomness of $\prcz$, these two cases are computationally indistinguishable given $\pk^{\ast}$, and therefore the distinguishing advantage of $\sA$ between $H_{0,j-1}$ and $H_{0,j}$ is at most $\negl(\lambda)$. Summing over $2d'$ hybrid steps, the total distinguishing advantage between $H_0$ and $H_1$ is still $\negl(\lambda)$. 
\end{proof}

\begin{claim}
    $H_1$ and $H_2$ are computational indistinguishable even with $\pk$.
\end{claim}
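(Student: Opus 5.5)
We reduce to the pseudorandomness property of the seeded payload code (\Cref{def:seeded_payload_code}) with a direct simulation argument. In $H_1$, each encoding query for $m$ produces $2d'$ marker blocks $u_1,\dots,u_{2d'}$ that are i.i.d.\ uniform over $\Sigma^\ell$. This holds regardless of $s'$, so the marker blocks no longer depend on $s$ at all. The only remaining dependence on $s$ is through the payload $c=\eccload(m,s,\shiftpk)$ with a fresh $s\gets U_d$ per query. The interleaving locations $\loc_1,\dots,\loc_{2d'}$ are sampled independently of everything else. Hence an $H_1$ answer is obtained by applying the fixed efficient map $\Phi(c;u,\loc)$ (interleave $u$ into $c$ at $\loc$) to $c\gets\eccload(m,U_d,\shiftpk)$, using independent fresh $(u,\loc)$. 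An $H_2$ answer is the same map applied to $c\gets U_n$.

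Suppose $\sA$ distinguishes $H_1$ from $H_2$ with non-negligible advantage. We build $\sB$ against the payload code. $\sB$ receives $(1^\lambda,\shiftpk)$ with $\shiftpk\gets U_r$, together with an oracle $\sO$ that is either $\eccload(\cdot,U_d,\shiftpk)$ or $\sU$. It samples $(\pk_j,\sk_j)\gets\keyz(1^\lambda)$ for all $j\in[2d']$ and gives $\pk=(\pk_1,\dots,\pk_{2d'},\shiftpk)$ to $\sA$. This public key is distributed exactly as in the real key generation, since $\shiftpk$ is uniform and independent of the $\prcz$ keys.

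On each query $m$ from $\sA$, $\sB$ proceeds as follows.
\begin{itemize}
\item It queries $\sO(m)$ to get $c$. Note that $\sO$ samples a fresh seed on each call, matching the encoder, which samples a fresh $s$ per query.
\item It samples $u_i\gets U_\ell$ for every $i\in[2d']$, and samples disjoint locations exactly as the encoder does, including the bounded resampling.
\item It returns the interleaved word to $\sA$.
\end{itemize}
$\sB$ never needs $s$ or $s'$, because in $H_1$ the marker blocks are uniform whatever $s'$ is. When $\sO$ is the real payload oracle, $\sB$ perfectly simulates $H_1$. When $\sO=\sU$, it perfectly simulates $H_2$. $\sB$ outputs whatever $\sA$ outputs, so its advantage equals that of $\sA$, which contradicts the pseudorandomness of $\eccload$. $\sB$ is PPT because $\keyz$, sampling and interleaving are all polynomial time.

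The only step needing care is checking that the $H_1$ distribution really is a function of $c$ alone plus independent randomness. Two points must be confirmed. First, the marker blocks in $H_1$ are uniform for both values of $s'_i$, so there is no hidden correlation with $s$ through $\eccseed(s)$. Second, the location sampling, including the capped $\lambda$ retries, does not depend on $s$ or $c$. Once both are confirmed, the simulation is exact and no further hybrids are needed. The multi-query oracle access in \Cref{def:seeded_payload_code} covers adaptive queries directly.
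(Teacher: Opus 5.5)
Your proof is correct and follows the same route as the paper: a direct reduction to the pseudorandomness of $\eccload$. In it, $\sB$ samples the $\prcz$ key pairs honestly, places the challenge $\shiftpk$ into $\pk$, draws every marker block as a fresh uniform string, and interleaves these blocks with the oracle's answer $c$. Your extra checks (that the $H_1$ marker blocks do not depend on $s'$, and that location sampling is independent of $s$ and $c$) spell out what the paper asserts in a single line.
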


\begin{proof}
    In $H_1$, the marker blocks $u_i$ are all independent uniform strings. This means they are statistically independent of the seed $s$. Therefore, the adversary's view of $s$ through the marker blocks is null. We can reduce this directly to the pseudorandomness of $\eccload$:
    \begin{enumerate}
        \item $\sB$ receives $\shiftpk$ and an oracle $\sO$ which is either $\eccload(\cdot, U_d, \shiftpk)$ or $U_n$.
        
        \item $\sB$ samples all $(\pk_i, \sk_i) \gets \keyz(1^\lambda)$ honestly and runs $\sA\big(1^\lambda, \pk = (\pk_1, \ldots, \pk_{2 d'}, \shiftpk)\big)$.
        
        \item For each query $m$ made by $\sA$, $\sB$ sets all $u_i \leftarrow U_\ell$, obtains $c \leftarrow \sO(m)$, and returns the interleaved result to $\sA$.
    \end{enumerate}
    If $\sO$ is the real payload oracle, this is exactly $H_1$. If it is uniform, this is $H_2$. By the pseudorandomness of the payload code $\eccload$, the advantage is $\negl(\lambda)$.
\end{proof}

Finally, note that in $H_2$, the payload and all marker blocks are independent uniform strings. Their interleaving is identically distributed to a uniform string of length $N$. Thus the advantage between $H_2$ and $H_3$ is $0$. In all, the advantage between $H_0$ and $H_3$ is $\negl(\lambda)$, which completes the proof for pseudorandomness.

\bigskip
\medskip

At the end of this section, we state a direct corollary of our reduction. We may choose a binary code of constant rate with efficient decoding radius $\frac{1}{4}-\varepsilon$ as $\ecc_{\seed}$. This gives:

\begin{corollary}\label{cor:const-reduction-fix-seed-code}
    Let $\lambda$ be the security parameter, and let $\prcz$ and $\eccload$ be as in \Cref{thm:const-reduction}. Suppose the resulting code over $\Sigma$ has codeword length $N$. Let $p:\mathbb{N}\to[0,1]$ be such that
    \[
    p(N)\le (1-\varepsilon)\cdot \min\left\{\delta,\frac{1}{4}\pz(\ell)\right\}
    \]
    for some constant $\varepsilon>0$. Then the scheme $\prc=(\key,\enc,\dec)$ constructed in \Cref{alg:const-rate} is a public-key (respectively, secret-key) PRC robust against every $p(\cdot)$-bounded edit channel $\sE$, and it has rate $R-o(1)$.
\end{corollary}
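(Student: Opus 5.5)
This corollary follows by instantiating \Cref{thm:const-reduction} with a concrete seed code, so the plan is to exhibit a suitable $\eccseed$ and then check the hypotheses. Fix a small constant $\varepsilon'>0$, taken small relative to $\varepsilon$. I would use a binary concatenated code (for example, a Reed--Solomon outer code with a good binary inner code found by brute force over logarithmic length, decoded by GMD). This gives $\eccseed:\{0,1\}^d\to\{0,1\}^{d'}$ with constant rate, so $d'=O(d)$. It is efficiently decodable from a $\pseed := \frac14-\varepsilon'$ fraction of Hamming errors. This $\pseed$ lies in $(0,1/4)$, as \Cref{thm:const-reduction} requires. Such codes exist for all sufficiently large $d$, and the standard constructions meet the Zyablov bound. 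Below that bound, unique decoding up to a fraction close to $1/4$ needs rate near $0$, but any positive constant rate suffices here. If needed, I would pad $d$ and $d'$ so that $d'=\ell=\Theta(n^\gamma)$, matching the parameter regime of \Cref{thm:const-reduction}.

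Next, I would check the error-rate hypothesis. The assumption gives $p(N)\le (1-\varepsilon)\min\{\delta,\frac14\pz(\ell)\}$. We need $p(N)\le (1-\hat\varepsilon)\min\{\delta,\pz(\ell)\pseed\}$ for some constant $\hat\varepsilon>0$. Since $\pz(\ell)\pseed=(1-4\varepsilon')\cdot\frac14\pz(\ell)$, we have
\[
(1-\varepsilon)\min\{\delta,\tfrac14\pz(\ell)\}\le \tfrac{1-\varepsilon}{1-4\varepsilon'}\min\{\delta,\pz(\ell)\pseed\}.
\]
Choosing $\varepsilon'=\varepsilon/8$ makes the factor $\frac{1-\varepsilon}{1-4\varepsilon'}$ at most $1-\varepsilon/2$, so we may take $\hat\varepsilon=\varepsilon/2$. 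Then \Cref{thm:const-reduction} applies and yields robustness, soundness and pseudorandomness in both the public-key and secret-key cases. The rate is $\log_{|\Sigma|}M/(n+2d'\ell)=R-o(1)$, because $2d'\ell=\Theta(n^{2\gamma})=o(n)$.

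The only step needing care is the existence of an efficiently decodable constant-rate binary code with radius arbitrarily close to $1/4$. I would cite concatenated codes with GMD decoding, which correct close to half the product of the outer and inner distances, approaching $\frac12\cdot\frac12=\frac14$ as both rates tend to $0$. Everything else is parameter bookkeeping.
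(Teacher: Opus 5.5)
Your proposal is correct and matches the paper's argument. The paper simply instantiates \Cref{thm:const-reduction} with a constant-rate binary seed code that is efficiently decodable from a $\frac14-\varepsilon'$ fraction of Hamming errors. Your explicit choice $\varepsilon'=\varepsilon/8$, which leaves slack $\varepsilon/2$ in the theorem's error-rate hypothesis, makes rigorous a step that the paper's one-line justification glosses over, since the paper reuses the same $\varepsilon$ for both the decoding radius and the error-rate slack.
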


\section{Constant-Alphabet PRCs with Rate Close to 1}\label{sec:rate-one}

In this section, we first present a construction of a \emph{seeded payload code} (see \cref{def:seeded_payload_code}) over a constant-size alphabet and with a constant decoding radius. We observe that synchronization strings \cite{haeuplerSynchronizationStringsCodes2021} are well suited to our application: they handle edit errors while ensuring the codeword appears random, as required by the definition of a seeded payload code. We use synchronization strings via the standard method \cite{haeuplerSynchronizationStringsCodes2021}: attaching a constant-alphabet random string (with certain synchronizing properties) to the message, extending its alphabet by a constant factor.

To make the codeword uniform over the randomness of $(\sk,\pk)$, we apply the public shifting vector $\shiftpk$ to the codeword at the end of generation. However, as $\shiftpk$ is fixed in the public key, to ensure the codeword remains pseudorandom even when queried multiple times, we must apply additional masking generated by PRGs. This is easily handled as the framework in \cref{sec:generic_reduction_to_zero_bit} allows the encoder and decoder to share common randomness through the shared random seed.

For ease of exposition, throughout this section we assume that the alphabets (e.g. $\Sigma, \sigmsg$) we use are finite fields. This allows us to use addition and subtraction to apply and remove the shifts and masks.

\begin{restatable}[Seeded payload code via synchronization string]{theorem}{thmconstrateprc}\label{thm:const_rate_prc}
	Let $\eccmsg: [M] \to \sigmsg^{n}$ be an ECC robust to any $\pmsg$ fraction of half-errors.

	Then \cref{alg:const_rate_prc} yields a seeded payload code $\eccload:[M]\times\{0,1\}^d\times\{0, 1\}^r\to\Sigma^{n}$
	that is robust to any $\pload$ fraction of edit errors,
	provided that the block length $n= \Omega(\lambda)$,
	$\pload\le \pmsg-\delta$ for some constant $\delta>0$, and the alphabet size $|\Sigma|\ge|\sigmsg|\cdot \Omega(\delta^{-4})$.
\end{restatable}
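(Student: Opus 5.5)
We would first fix the concrete shape of \cref{alg:const_rate_prc}, as suggested by the technical overview. The seed is split as $s=(s_1,s_2)$ and the public shift as $\shiftpk=(\shiftsyn,\shiftmsg)$. The synchronization string is $\sigma=\prgsyn(s_1)+\shiftsyn\in\sigsyn^n$, with $|\sigsyn|=\Theta(\delta^{-4})$. The payload symbols are $\eccmsg(m)+\prgmsg(s_2)+\shiftmsg\in\sigmsg^n$. The codeword is $c_i=(\eccmsg(m)_i+\prgmsg(s_2)_i+(\shiftmsg)_i,\ \sigma_i)\in\Sigma=\sigmsg\times\sigsyn$. We verify the three properties of \cref{def:seeded_payload_code} in order: uniformity, pseudorandomness, robustness.

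\textbf{Uniformity.} Fix $m$ and $s$. Then $\shiftpk$ is uniform and enters additively in each coordinate of both components, so $c$ is a uniform shift of a fixed string and hence uniform over $\Sigma^n$.

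\textbf{Pseudorandomness.} We use a two-step hybrid. The oracle draws a fresh $s$ per query. We first replace $\prgsyn(s_1)$ by a fresh uniform string. Since $\shiftpk$ is independent of the seed, the distinguisher can simulate everything else given the PRG challenge. A standard hybrid over the polynomially many queries reduces this step to \cref{lem:crypto_PRG}. We then replace $\prgmsg(s_2)$ in the same way. After both replacements, each answer is $\eccmsg(m)$ plus uniform masks, which is exactly uniform and independent of $\shiftpk$. This matches the oracle $\sU$.

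\textbf{Robustness.} This is the main work. The decoder knows $s$ and $\shiftpk$, so it recomputes $\sigma$ exactly. It then runs the \cite{haeuplerSynchronizationStringsCodes2021} synchronization/indexing algorithm on the $\sigsyn$-components of $y'$ against $\sigma$. The result is a guess of the original position of each received symbol. It places $(\text{received msg symbol})-\prgmsg(s_2)-\shiftmsg$ at that position, marks unfilled positions as erasures, and runs $\eccmsg^{-1}$. Two facts are needed.
\begin{enumerate}
\item $\sigma$ is $\varepsilon'$-self-matching with probability $1-\negl(\lambda)$. In the ideal hybrid $\sigma$ is uniform, and a uniform string over an alphabet of size $\Omega(\varepsilon'^{-4})$ is $\varepsilon'$-self-matching except with probability $\exp(-\Omega(n))=\negl(\lambda)$, since $n=\Omega(\lambda)$. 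Here it is in fact already exactly uniform over $\shiftsyn$ for any fixed $s_1$, so no computational step is needed. Self-matching is a property of $\sigma$ alone, and the good event is over $(s,\shiftpk)$, as the definition requires.
\item With $\varepsilon'$ chosen so that $3\sqrt{\varepsilon'}\le\delta$, i.e.\ $|\sigsyn|=\Omega(\delta^{-4})$, the guarantee recalled in the overview applies. After at most $k\le\pload n$ insdels, the repositioned word differs from $\eccmsg(m)$ in at most $k+3\sqrt{\varepsilon'}n\le(\pload+\delta)n\le\pmsg n$ half-errors, counting erasures as half and substitutions as full.
\end{enumerate}
Unmasking is exact on correctly placed symbols, so $\eccmsg^{-1}$ returns $m$.

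The main obstacle is step 2, the accounting. The \cite{haeuplerSynchronizationStringsCodes2021} bound must translate into the half-error measure that $\eccmsg$ handles, under the paper's insertion/deletion-only edit distance. We would invoke their indexing theorem in its ``$k$ insdels give at most $k+O(\sqrt{\varepsilon'})n$ misdecodings'' form, checking that deletions become erasures and misaligned insertions become substitutions, and tune the constant so that the slack $\delta$ absorbs it.
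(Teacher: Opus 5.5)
Your proposal is correct and follows essentially the same route as the paper: uniformity from the public shifts; exact uniformity of $\sigma$ over $\shiftsyn$ combined with \cref{thm:random_self_matching} for self-matching with probability $1-\negl(\lambda)$; the Haeupler--Shahrasbi alignment bound with its additive slack absorbed by $\delta$; and a two-step PRG hybrid, which you even spell out over queries. One small inconsistency: your step~1 asks for $|\sigsyn|=\Omega(\varepsilon'^{-4})$, which with $\varepsilon'=\Theta(\delta^2)$ would give $\delta^{-8}$. The cited theorem in fact needs only $|\sigsyn|=\Theta(\varepsilon'^{-2})$ (the paper takes $e^3\varepsilon^{-2}$), and that is what yields $\Theta(\delta^{-4})$. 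Also, the paper charges up to $2$ half-errors per misdecoding, so its overhead is $6\sqrt{\varepsilon}\,n$ rather than $3\sqrt{\varepsilon'}\,n$ (it takes $\varepsilon=\delta^2/36$); your constant tuning absorbs this.
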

We use the standard \emph{half-error} metric: each erasure
counts as $1$, each substitution as $2$, so the total half-error count
is $2e + s$.

We will first describe some background on synchronizing strings in \cref{sec:sync_string}, then present the detailed construction, and prove \cref{thm:const_rate_prc} in \cref{sec:analysis_const_rate_prc}.

Finally in \cref{sec:LAR_put_together}, we combine our construction of the seeded payload code with appropriate error-correcting codes to get a multi-bit to zero-bit reduction in \cref{cor:multi_zero_redu_sync}, following \cref{thm:const-reduction}.

We further plug in different zero-bit PRC constructions to obtain a multi-bit edit PRC with a constant rate, these results are stated in \cref{cor:body-rateone-result1} and \cref{cor:body-rateone-result2}.

\subsection{Synchronization Strings and Self-Matching}
\label{sec:sync_string}
Synchronization strings, introduced by Haeupler and Shahrasbi~\cite{haeuplerSynchronizationStringsCodes2021}, provide a way to recover approximate alignment under insertions and deletions. In our proof, we only use a relaxed property of synchronization strings, namely the \emph{self-matching} property, together with the corresponding decoding guarantee from~\cite[Section~6.3]{haeuplerSynchronizationStringsCodes2021}. We recall below the notions that are needed later.

\begin{definition}[Monotone matching {\cite[Definition~6.1]{haeuplerSynchronizationStringsCodes2021}}]
	Let $S, S' \in \Sigma^*$. A \emph{monotone matching} between $S$ and $S'$ is a set of pairs $M=\{(a_1,b_1),\cdots,(a_m,b_m)\}$
	such that
	\[
		a_1<\cdots<a_m,\quad b_1<\cdots<b_m,\quad S_{a_i}=S'_{b_i} \quad \text{for all } i\in[m].
	\]
	Equivalently, a monotone matching is a common subsequence together with the indices realizing it.
\end{definition}

\begin{definition}[$\varepsilon$-self-matching]\label{def:self_matching}
	A string $S\in\Sigma^n$ is said to be \emph{$\varepsilon$-self-matching} if for every monotone matching $M=\{(a_1,b_1),\cdots,(a_m,b_m)\}$ between $S$ and itself, the number of pairs with $a_i\neq b_i$ is at most $\varepsilon n$.
\end{definition}

Intuitively, the $\varepsilon$-self-matching property says that one cannot find a large monotone matching between two copies of the same string that mismatches many positions. This is exactly the global non-self-similarity condition needed by the alignment algorithm \cite[Algorithm~3]{haeuplerSynchronizationStringsCodes2021}.

For our application, we do not explicitly construct synchronization strings. Instead, after recovering the seed $s$, the decoder reconstructs
\[
	\sigma=\prgsyn(s_1)+\shiftsyn \in \sigsyn^{n},
\]
and we use the fact that $\sigma$ is uniformly random over $\sigsyn^{n}$ because $\shiftsyn$ is uniform. Hence, it suffices to know that a random string is $\varepsilon$-self-matching with high probability.

\begin{theorem}[Random strings are self-matching {\cite[Theorem~6.9]{haeuplerSynchronizationStringsCodes2021}}]
	\label{thm:random_self_matching}
	Let $S$ be a uniformly random string in $\Sigma^n$. Then for every $\varepsilon \in (0,1)$,
	\[
		\Pr[\text{$S$ is $\varepsilon$-self-matching}] \ge 1-\left(\frac{e^2}{\varepsilon^2|\Sigma|}\right)^{n\varepsilon}.
	\]
\end{theorem}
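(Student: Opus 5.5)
We bound the failure probability by a union bound over all monotone matchings that could witness a violation of self-matching. If $S$ is not $\varepsilon$-self-matching, there is a monotone matching $M$ between $S$ and itself with more than $\varepsilon n$ pairs satisfying $a_i\neq b_i$. Keep only the off-diagonal pairs and truncate to exactly $k\defeq\lceil\varepsilon n\rceil$ of them. Any subset of a monotone matching is still monotone. So we get a sub-matching $M'=\{(a_1,b_1),\dots,(a_k,b_k)\}$ that is monotone, has $a_i\ne b_i$ for all $i$, and satisfies $S_{a_i}=S_{b_i}$. It therefore suffices to bound the probability that some such \emph{pattern} of $k$ pairs is realized by $S$.

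\textbf{Counting patterns.} A pattern is determined by the sets $A=\{a_i\}$ and $B=\{b_i\}$, since monotonicity forces the pairing. There are therefore at most $\binom{n}{k}^2\le (en/k)^{2k}\le (e/\varepsilon)^{2k}$ patterns.

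\textbf{Probability of one pattern.} Fix a pattern and consider the graph on $[n]$ with edges $\{a_i,b_i\}$. It has $k$ edges and no self-loops. The event ``$S_{a_i}=S_{b_i}$ for all $i$'' says that $S$ is constant on each connected component. If every component is a tree, this event has probability $|\Sigma|^{-(\text{number of edges in a spanning forest})}=|\Sigma|^{-k}$.

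In general the graph may contain cycles, which would weaken the bound. This is the main obstacle. I would first try to show that the graph of a monotone off-diagonal matching is acyclic. Order the edges by $i$. Both endpoint sequences $a_i$ and $b_i$ are strictly increasing, and the indices in $A$ are distinct, as are those in $B$. So each vertex has degree at most $2$: at most once as some $a_i$ and at most once as some $b_j$. The graph is thus a union of paths and cycles.

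Along a component, alternately follow edges $a\to b$ and the identity $b_j=a_{i'}$. Each step moves forward or backward in the index order. For a cycle, one would need the monotone map $a_i\mapsto b_i$ to return to its start. If this map were the restriction of a strictly increasing injection $f$ with no fixed points, every orbit would be monotone ($x<f(x)$ implies $f(x)<f(f(x))$), which rules out cycles. The map is indeed increasing, but it is only a partial function, and $a_i\neq b_i$ alone does not force $b_i>a_i$. So this argument does not immediately cover every component.

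If acyclicity turns out to fail, the fallback is a greedy argument. Process the pairs in increasing order of $\max(a_i,b_i)$ and expose $S$ from left to right. Each pair whose larger endpoint is a ``new'' position, not previously forced, contributes a factor $1/|\Sigma|$. A position of degree at most $2$ can be blocked by at most one other pair. Hence at least $k/2$ pairs, or in the acyclic case all $k$, contribute such a factor.

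With all $k$ factors the union bound gives
\[
\Pr[\text{not }\varepsilon\text{-self-matching}]\le \left(\frac{e}{\varepsilon}\right)^{2k}|\Sigma|^{-k}=\left(\frac{e^2}{\varepsilon^2|\Sigma|}\right)^{k}\le\left(\frac{e^2}{\varepsilon^2|\Sigma|}\right)^{\varepsilon n},
\]
where the last step uses $k\ge\varepsilon n$ and assumes the base is below $1$; otherwise the claim is trivial. This matches the stated bound, so the acyclicity and independence step is the one that must be nailed down for the exact constants.
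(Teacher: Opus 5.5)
\textbf{There is a genuine gap.} Your overall architecture is the right one. The paper gives no proof and simply imports the statement from Haeupler--Shahrasbi, and the shape of the bound, $(e/\varepsilon)^{2\varepsilon n}\cdot|\Sigma|^{-\varepsilon n}$, is exactly your union bound: $\binom{n}{k}^2\le(e/\varepsilon)^{2k}$ patterns, each realized with probability at most $|\Sigma|^{-k}$. However, the per-pattern bound $|\Sigma|^{-k}$ is the one step that carries the constant, and you leave it open. Your fallback only gives $|\Sigma|^{-k/2}$, which yields $\bigl(e^2/(\varepsilon^2|\Sigma|^{1/2})\bigr)^{\varepsilon n}$ rather than the claimed bound. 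So as written the proposal does not establish the statement.

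The missing idea is a one-line consequence of monotonicity: the larger endpoints $m_i:=\max(a_i,b_i)$ are strictly increasing in $i$. For $i<j$, if $m_i=a_i$ then $m_j\ge a_j>a_i=m_i$, and if $m_i=b_i$ then $m_j\ge b_j>b_i=m_i$. Set $l_i:=\min(a_i,b_i)$, which satisfies $l_i<m_i$ because $a_i\neq b_i$. Each earlier event $E_j=\{S_{a_j}=S_{b_j}\}$ with $j<i$ depends only on coordinates at most $m_j<m_i$. Hence, conditioned on $S_1,\dots,S_{m_i-1}$, the events $E_1,\dots,E_{i-1}$ and the symbol $S_{l_i}$ are fixed, while $S_{m_i}$ is still uniform. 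This gives $\Pr[E_i\mid E_1,\dots,E_{i-1}]=1/|\Sigma|$ exactly, and the chain rule gives $|\Sigma|^{-k}$.

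In other words, your left-to-right exposure never meets a ``blocked'' pair. The same fact shows the graph is a forest: the maximum vertex of a cycle would be the larger endpoint of two distinct edges. It also settles your partial-function worry directly. On a putative cycle component $V$, every vertex has degree $2$, so $a_i\mapsto b_i$ is a strictly increasing bijection $V\to V$. Such a map is the identity, contradicting $a_i\neq b_i$.

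With this lemma in place, the rest of your argument goes through unchanged and gives exactly the stated bound. That includes the choice $k=\lceil\varepsilon n\rceil$, the estimate $(en/k)^{k}\le(e/\varepsilon)^k$, and the observation that the claim is trivial when $e^2/(\varepsilon^2|\Sigma|)\ge 1$.
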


We will also use the decoding guarantee from~\cite{haeuplerSynchronizationStringsCodes2021}. Informally, if a received string differs from an $\varepsilon$-self-matching string by at most $t$ insertions and deletions, then the synchronization algorithm outputs an indexing with at most $t+O(\sqrt{\varepsilon})n$ misdecodings.

\begin{lemma}[Alignment via self-matching strings (implicit in \cite{haeuplerSynchronizationStringsCodes2021})]
	\label{lem:alignment}
	Let $\sigma \in \Sigma_{\mathsf{syn}}^n$ be an $\varepsilon$-self-matching string, and let $\tilde\sigma$ be any string with $\Delta_E(\sigma,\tilde\sigma) \le t$.
	Let $\tilde x \in \sigmsg^{|\tilde\sigma|}$ be produced by applying the same edit operations to a string $x \in \sigmsg^n$.

	There is a polynomial-time procedure that outputs
	$\hat x \in (\Sigma_{\mathsf{msg}} \cup \{\bot\})^n$ such that, letting
	$e$ and $s$ be the numbers of incorrect symbols and erasures in $\hat x$
	relative to the original $x$, we have
	\[
		2e + s \le t + 6n\sqrt{\varepsilon}.
	\]
\end{lemma}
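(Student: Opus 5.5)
The plan is to reconstruct the alignment guarantee from the Haeupler--Shahrasbi framework: first build a large monotone matching between $\sigma$ and $\tilde\sigma$, then convert it into a guessed position for each received symbol, and finally charge every error or erasure either to an edit or to a self-matching violation. Write $\tilde\sigma$ as obtained from $\sigma$ by $t_d$ deletions and $t_i$ insertions, with $t_d+t_i\le t$. The ``true'' alignment $M^\ast$ is the monotone matching that pairs each surviving original position with its image; it has $n-t_d$ pairs. For the procedure, I would use the repeated-LCS algorithm (\cite[Algorithm~3]{haeuplerSynchronizationStringsCodes2021}). It runs for $k=\lceil 1/\sqrt{\varepsilon}\rceil$ rounds. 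In each round it computes a longest common subsequence between $\sigma$ and the still-unassigned symbols of $\tilde\sigma$. Each received symbol $\tilde\sigma_j$ matched to $\sigma_i$ is assigned guessed index $i$. The output is $\hat x_i=\tilde x_j$ when exactly one received symbol is assigned to $i$, and $\hat x_i=\bot$ otherwise (collisions are treated as erasures). The whole procedure is clearly polynomial time.

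The key step is to bound the number of \emph{misdecoded} received symbols, meaning those that are either inserted symbols or surviving symbols assigned a wrong index. Take any single matching $M$ produced by the algorithm. Compose $M$ with $M^\ast$ restricted to surviving positions. This yields a monotone matching between $\sigma$ and itself. Every pair of $M$ that assigns a surviving symbol to the wrong index becomes an off-diagonal pair in this self-matching, so by \cref{def:self_matching} there are at most $\varepsilon n$ such pairs per round. Over $k$ rounds this gives at most $k\varepsilon n\approx\sqrt{\varepsilon}n$ wrongly-indexed surviving symbols. Inserted symbols number at most $t_i$, but a misassigned inserted symbol only corrupts one index.

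Next, I bound the surviving symbols that are never assigned. In each round, the restriction of $M^\ast$ to the still-unassigned survivors is itself a common subsequence. So each LCS is at least as long as the number of unassigned correct survivors, and at most $\varepsilon n$ of its pairs are wrong. A potential argument then shows that after $k$ rounds at most $n/k+O(k\varepsilon n)=O(\sqrt{\varepsilon})n$ survivors remain unassigned. I expect this step to be the main obstacle: the argument must show the unassigned count drops geometrically or linearly, with the ``$+\varepsilon n$'' loss absorbed at each round, and getting the constant $6$ requires careful bookkeeping.

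To finish, I charge the half-errors. A position $i$ of $\hat x$ can be wrong in three ways. It can be an erasure because of a deletion ($\le t_d$ such positions). It can be an erasure because its symbol was never assigned or was involved in a collision. Or it can be a substitution because a wrong symbol landed on it alone. A substitution costs $2$, but each one consumes a misdecoded received symbol: either an insertion, which contributes $t_i$ to the budget, or a self-matching violation, which is $O(\sqrt\varepsilon n)$. The subtle point is that an inserted symbol causing a substitution at $i$ also means the genuine symbol for $i$ was displaced. The standard resolution is that a substitution caused by an insertion counts $2$ only when the original symbol at $i$ was deleted or unassigned, so the total $2e+s$ is at most $t_d+t_i$ plus a constant multiple of $\sqrt{\varepsilon}n$. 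Summing, and tuning $k=1/\sqrt{\varepsilon}$ to balance $n/k$ against $k\varepsilon n$, gives $2e+s\le t+6n\sqrt{\varepsilon}$.
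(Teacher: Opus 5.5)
Your route is the same as the paper's. You run the repeated-LCS repositioning of \cite[Algorithm~3]{haeuplerSynchronizationStringsCodes2021}, treat collisions as erasures, bound the misdecoded surviving symbols by $3n\sqrt{\varepsilon}$, and charge at most two half-errors per misdecoding plus one per edit. The paper simply imports the misdecoding count from \cite[Theorem~6.13]{haeuplerSynchronizationStringsCodes2021}, whereas you try to re-derive it. Your composition argument for wrongly-indexed survivors (at most $\varepsilon n$ per round by the self-matching property) is correct.

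The gap is the step you flag yourself: bounding the survivors that are never assigned. The mechanism you sketch does not work. Each round's LCS has length at least the current number $u$ of unassigned survivors and contains at most $\varepsilon n$ wrongly-indexed survivors, but this does not make $u$ drop. The remaining LCS pairs may all involve inserted symbols, and the self-matching property says nothing about those. For instance, if the channel prepends a copy of $\sigma$ ($n$ insertions), the first LCS may consist entirely of inserted symbols, and $u$ does not decrease at all. For the same reason, your target bound $n/k+O(k\varepsilon n)$ has the wrong shape, since it has no dependence on the number of insertions $t_i$.

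The missing idea is an averaging argument rather than a per-round potential. Let $R$ be the set of survivors still unassigned after all $k$ rounds. These symbols are present in every round, and paired with their true positions they form a common subsequence, so every round's LCS has size at least $|R|$. Matched received symbols are removed, so the $k$ LCSs use disjoint symbols of $\tilde\sigma$, which gives $k|R|\le|\tilde\sigma|\le n+t_i$.

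Take $k=1/\sqrt{\varepsilon}$ and $t_i\le n$ (the regime used later, where $t=\pload n$). Then $|R|\le 2n\sqrt{\varepsilon}$. Together with $|W|\le k\varepsilon n=n\sqrt{\varepsilon}$ wrongly-indexed survivors, this is exactly the $3n\sqrt{\varepsilon}$ misdecoding count the paper cites.

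Your final charging also needs to be made quantitative to actually reach the constant $6$. Charge each position of $\hat x$ one half-error if its true symbol is missing (deleted, in $R$, or in $W$), and one more if some inserted or wrongly-indexed symbol is assigned to it. A case check shows this covers every outcome (correct, collision, erasure, or substitution), and gives $2e+s\le t+|R|+2|W|\le t+6n\sqrt{\varepsilon}$. This also absorbs the extra $\varepsilon n$ per round that comes from taking $k=\lceil 1/\sqrt{\varepsilon}\rceil$. When $t_i>n$, the same accounting together with the trivial bound $2e+s\le 2n$ still suffices.
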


\begin{proof}[Proof sketch]
	The repositioning algorithm of~\cite[Algorithm~3]{haeuplerSynchronizationStringsCodes2021}
	produces a mapping $P : [|\tilde\sigma|] \to [n] \cup \{\bot\}$ from received
	positions to source positions.  Inverting this mapping and applying standard
	conflict resolution (as in the rearrangement procedure
	of~\cite[Algorithm~2]{haeuplerSynchronizationStringsCodes2021}): keeping
	only the unique claimant for each source and marking the rest as
	erasures, gives a candidate received position $r_j$ for each source
	position $j \in [n]$.  We set $\hat x_j = \tilde x_{r_j}$ when
	$r_j \neq \bot$, and $\hat x_j = \bot$ otherwise.

	By~\cite[Theorem~6.13]{haeuplerSynchronizationStringsCodes2021},
	at most $3n\sqrt{\varepsilon}$ successfully transmitted positions are
	misdecoded, each contributing at most $2$ half-errors in the worst case.
	Deletions create at most $d_r \text{ (number of deletions)} \le t$ erasures (each $1$ half-error).
	Summing,
	\[
		2e + s \le d_r + 2 \cdot 3n\sqrt{\varepsilon} \le t + 6n\sqrt{\varepsilon}.
	\]
\end{proof}

\subsection{Construction of the Seeded Payload Code}

We list the components required in our construction:

\begin{itemize}
	\item \textbf{Hamming ECCs.}
	      In our construction, we employ a Hamming error-correcting code $\eccmsg: [M] \to \sigmsg^{\,n}$ that is robust against any $\pmsg$ fraction of \emph{half-errors}, i.e., it admits efficient unique decoding from any combination of a fraction $e$ of substitution errors and a fraction $s$ of erasures as long as $2e + s \leq \pmsg$. We denote its encoding and decoding functions by $\eccmsg(\cdot)$ and $\eccmsg^{-1}(\cdot)$, respectively.

	      An explicit family of such codes with rate arbitrarily close to $1 - \pmsg$ and alphabet size $\poly(1/\varepsilon)$ is given by the AG code; we restate the required guarantee in \cref{lem:ag-code}.

	      The role of $\eccmsg$ is to correct the remaining Hamming-type corruptions that remain after the edit errors have been converted into substitution errors and erasures via the synchronization string.
	\item \textbf{Pseudorandom Generator.}
	      We require a cryptographic pseudorandom generator $\prg$ such that no PPT adversary can distinguish its output from a uniform string with non-negligible advantage. In our scheme, we instantiate the PRG twice to produce masking strings over different alphabets:
	      \begin{itemize}
		      \item $\prgmsg: \{0,1\}^{d_2} \to \sigmsg^n$ is used to mask the encoded payload message, ensuring its pseudorandomness.
		      \item $\prgsyn: \{0,1\}^{d_1} \to \sigsyn^n$ is used to mask the synchronization string, making it computationally indistinguishable from uniform.
	      \end{itemize}
	      These two generators ensure that each component of the assembled codeword satisfies the pseudorandomness requirement.
\end{itemize}
\paragraph{Notation convention.} In this section, we follow a consistent naming convention to distinguish between the different stages of the codeword.
We use a tilde to denote corrupted sequences (e.g., $\tilde{y}$ and $\tilde{\sigma}$). We also use $\tilde{n}$ to denote the length of the corrupted message. For sequences reconstructed during the decoding process, we use a hat (e.g., $\hat{x}$ as the estimate of $x$ and $\hat{m}$ as the recovered message).

\begin{description}[style=nextline]
	\item[Encoding $\eccload(m,s,\shiftpk)$:]
	      \textbf{Applying Synchronization String and Masking.} Parse $\shiftpk$ into $(\shiftmsg,\shiftsyn)$ such that $(\shiftpk)_i=((\shiftmsg)_i,(\shiftsyn)_i)\in \sigmsg\times\sigsyn$. And split $s$ into $(s_1,s_2)$.

	      We then generate the synchronization strings by applying the public shift $\shiftsyn$ to a pseudo-random string $\prgsyn(s_1)$, i.e.,
	      $$\sigma = \prgsyn(s_1)+\shiftsyn\in \sigsyn^{n}.$$
	      For the message part, we compute $\eccmsg(m)\in \sigmsg^{n}$, mask it with $\prgmsg(s_2)$, and apply $\shiftmsg$:
	      $$
		      x = \eccmsg(m) + \prgmsg(s_2) + \shiftmsg \in \sigmsg^{n}.
	      $$
	      We then assemble the edit PRC by combining $x$ and $\sigma$; specifically, we define the assembled sequence $y$ where each entry $y_i \coloneqq (x_i, \sigma_i) \in \sigmsg \times \sigsyn$ (combining $\sigma$ and $x$ symbol-by-symbol lets the decoder recover alignment under insertions and deletions by matching $\tilde\sigma$ against $\sigma$ (see \cref{lem:alignment})).

	\item[Decoding $\eccload^{-1}(\tilde{y},s, \shiftpk)$:]
	      \textbf{Recover the Synchronizing String.}
	      We split $s$ into $s_1$ and $s_2$, and regenerate the synchronization string $\sigma = \prgsyn(s_1) + \shiftsyn$.

	      \textbf{Synchronize.}
	      We denote the synchronization string part of $ \tilde{y}$ as $\tilde{\sigma}$, and the message part as $\tilde{x}$.

	      We then apply the algorithm in \cref{lem:alignment} to find an alignment of $\tilde{x}$ based on $\sigma$ and $\tilde{\sigma}$. We then write the ``synchronized'' version of $x$ to $\hat{x}$. Note that there may be both substitution errors and erasures in $\hat{x}$.

	      If $\sigma$ happens to be $\varepsilon$-self-matching, then by
	      \cref{lem:alignment} the number of half-errors in $\hat{x}$ is at most
	      $t + 6n\sqrt{\varepsilon}$, which the AG code in \cref{lem:ag-code} corrects via mixed erasure-error decoding.

	      Finally, we remove the mask $\prgmsg(s_2)$ and $\shiftmsg$ and run the $\eccmsg$ decoder:
	      $$\hat{m} = \eccmsg^{-1}(\hat{x} - \prgmsg(s_2) - \shiftmsg),$$
	      and output $\bot$ if decoding fails. Note that the subtraction are only done for the non-erased positions, so that the erasures are preserved and handled by the $\eccmsg^{-1}$.
\end{description}

Formally, the parameter setting and pseudo-code are shown below.

\begin{breakablealgorithm}
	\caption{Seeded Payload Code via Synchronization Strings}
	\label{alg:const_rate_prc}
	\begin{algorithmic}[1]
		\Require{Security parameter $\lambda$,
			\Statex \hspace{\algorithmicindent} Message $m \in [M]$, seed $s = (s_1, s_2) \in \{0,1\}^d$
			\Statex \hspace{\algorithmicindent} Parameters $n= \Omega(\lambda)$,
			\Statex \hspace{\algorithmicindent} Public shifting vector $\shiftpk \in (\sigmsg \times \sigsyn)^n$ where $(\shiftpk)_i = ((\shiftmsg)_i, (\shiftsyn)_i)$,
			\Statex \hspace{\algorithmicindent} Building blocks: Hamming ECC $\eccmsg$ and PRGs $(\prgmsg, \prgsyn)$,
			\Statex \hspace{\algorithmicindent} Alphabet $\Sigma = \sigmsg \times \sigsyn$ where $|\sigsyn|$ satisfies \cref{thm:const_rate_prc}.}

		\Statex
		\Function{$\eccload$}{$m,s,\shiftpk$}
		\State Parse $\shiftpk$ into $(\shiftmsg,\shiftsyn)$ s.t. $(\shiftpk)_i=((\shiftmsg)_i,(\shiftsyn)_i)$.
		\State Parse $s \gets (s_1, s_2)$.
		\State $\sigma \leftarrow \prgsyn(s_1)+  \shiftsyn  \in \sigsyn^{n}$
		\State $x \leftarrow \eccmsg(m)+ \prgmsg(s_2) + \shiftmsg \in \sigmsg^{n}$.
		\State Assemble $y \gets (y_1, \dots, y_{n})$ where $y_i = (x_i, \sigma_i) \in \sigmsg \times \sigsyn$.
		\State \Return $y$
		\EndFunction

		\Statex
		\Function{$\eccload^{-1}$}{$\tilde{y}\in \Sigma^{\tilde{n}},s,\shiftpk$}

		\Statex \Comment{Step 1: Recover the Synchronizing String}
		\State Parse $\shiftpk$ into $(\shiftmsg,\shiftsyn)$ s.t. $(\shiftpk)_i=((\shiftmsg)_i,(\shiftsyn)_i)$.
		\State Parse $s \gets (s_1, s_2)$.
		\State $\sigma \leftarrow \prgsyn(s_1) + \shiftsyn$.

		\Statex \Comment{Step 2: Synchronization}
		\State Decompose $\tilde{y}$ into $\tilde{\sigma}\in \sigsyn^{\tilde{n}}, \tilde{x}\in\sigmsg^{\tilde{n}}$, such that $\tilde{y}_i=(\tilde{x}_i, \tilde{\sigma}_i)$
		\State $\hat{x} \in (\sigmsg \cup \{\bot\})^n \gets$ Algorithm of \cref{lem:alignment} applied to $(\sigma, \tilde\sigma, \tilde x)$.

		\Statex \Comment{Step 3: Message Reconstruction}
		\State \Return $\hat{m} = \eccmsg^{-1}(\hat{x} - \prgmsg(s_2) - \shiftmsg)$. \textbf{if} fails \textbf{return} $\perp$.
		\EndFunction
	\end{algorithmic}
\end{breakablealgorithm}

\subsection{Analysis of \cref{alg:const_rate_prc}}\label{sec:analysis_const_rate_prc}

Now we are ready to prove the main theorem.

\thmconstrateprc*

\begin{proof}

	The uniformity of our construction will be used in later proofs.
	\paragraph{Uniformity.}

	This property is straightforward since for fixed $m$ and $s$, $\shiftmsg$ uniform makes $x$ uniform over $\sigmsg^n$, and $\shiftsyn$ uniform makes $\sigma$ uniform over $\sigsyn^n$; these are independent, hence y is uniform over $\Sigma^n$.

	\paragraph{Robustness.}

	Given that the seed $s$ is successfully recovered, we now show that the message $m$ can also be correctly decoded. Because $\sigma= \prgsyn(s_1) + \shiftsyn$ and $\shiftsyn$ is uniformly random, $\sigma$ is uniformly random over $\sigsyn^{n}$.

	Let $\varepsilon>0$ be a constant to be decided later. According to \cref{thm:random_self_matching}, a random string over an alphabet of size $|\sigsyn|$ is $\varepsilon$-self-matching (see \cref{def:self_matching}) with probability $1-(\frac{e^2}{\varepsilon^2 |\sigsyn|})^{n \varepsilon }$. We can therefore choose $|\sigsyn|=e^3 \varepsilon ^{-2}$ to make the failing probability $\le e^{-n \varepsilon}$ which is $\negl(\lambda)$ given that $n= \Omega(\lambda)$ and $\varepsilon =\Omega(1)$.

	Assuming that $\sigma$ is $\varepsilon$-self-matching, then by \cref{lem:alignment}, running \cite[Algorithm~2,3]{haeuplerSynchronizationStringsCodes2021} successfully align them, turning the edit errors into almost same number of half-errors while introducing at most $6\sqrt{\varepsilon} n$ additional half-errors. Therefore, after the synchronization, the number of half-errors in $\hat{x} $ will be at most:
	\begin{equation}
		\underbrace{\pload \cdot n}_{\text{\# Original Errors}}+\underbrace{6\sqrt{\varepsilon}n}_{\text{Additional Errors of \cref{lem:alignment}}}=(\pload+6\sqrt{\varepsilon})n
		\label{eq:hamming_error_bound}
	\end{equation}
    
	Therefore, for any $\pload\le\pmsg -\delta$, we can set $\varepsilon\coloneqq \frac{\delta^2}{36}, |\sigsyn|:=e^3\varepsilon^{-2}= \Omega(\frac{1}{\delta^4})$ which is still a constant. In this case, \eqref{eq:hamming_error_bound} will be bounded by $\pmsg \cdot n$ and the underlying $\eccmsg$ can correct all the remaining half-errors.

	\paragraph{Pseudorandomness.}
	Fix any public shifting vector $\shiftpk = (\shiftmsg, \shiftsyn)$ and message $m$. We show that the output of $\eccload(m, s, \shiftpk)$ is computationally indistinguishable from a uniform string over $\Sigma^n$. Since $s = (s_1, s_2)$ consists of two independent parts, we define a sequence of hybrid experiments $H_2, H_1, H_0$ as follows:

	\begin{itemize}
		\item \textbf{Hybrid $H_2$ (Real World):} The codeword $y$ is generated according to the real encoding procedure: $x = \eccmsg(m) + \prgmsg(s_2) + \shiftmsg$ and $\sigma = \prgsyn(s_1) + \shiftsyn$.
		\item \textbf{Hybrid $H_1$ (Uniform $\prgsyn$):} Same as $H_2$, except that the pseudo-random string $\prgsyn(s_1)$ is replaced by a truly uniform string $U_{\syn} \in \Sigma_{\syn}^n$.
		\item \textbf{Hybrid $H_0$ (Ideal World):} Same as $H_1$, except that the pseudo-random string $\prgmsg(s_2)$ is replaced by a truly uniform string $U_{\msg} \in \Sigma_{\msg}^n$.
	\end{itemize}

	The indistinguishability between $H_2$ and $H_1$ follows from the security of $\prgsyn$. Formally, if there exists a PPT distinguisher $\sA$ that distinguishes $H_2$ and $H_1$ with non-negligible advantage (even given $m$ and $\shiftpk$), we can construct a PPT adversary $\sB$ to break the PRG. $\sB$ receives a challenge string $z$ (which is either $\prgsyn(s_1)$ or a uniform string), samples an independent seed $s_2$ to compute $x = \eccmsg(m) + \prgmsg(s_2) + \shiftmsg$, and sets $\sigma = z + \shiftsyn$. Since $\sB$ can perfectly simulate the distribution of $H_2$ (if $z$ is pseudo-random) or $H_1$ (if $z$ is uniform), it inherits the advantage of $\sA$.

	The transition from $H_1$ to $H_0$ follows by an identical reduction to the security of $\prgmsg$, noting that $s_2$ is independent of $s_1$.

	In $H_0$, both $x = \eccmsg(m) + U_{\msg} + \shiftmsg$ and $\sigma = U_{\syn} + \shiftsyn$ are perfectly uniform and independent strings over their respective alphabets. Thus, the assembled sequence $y$ in $H_0$ is distributed according to $U_n$.

	Consequently, for any fixed $\shiftpk$, the oracle $\eccload(\cdot, U_d, \shiftpk)$ is computationally indistinguishable from an oracle that returns independent uniform samples from $\Sigma^n$.
\end{proof}

\subsection{Instantiating the Generic Reduction}
\label{sec:LAR_put_together}

We apply the AG code, which supports decoding from a combination of erasures and substitution errors:

\begin{lemma}[AG codes]\label{lem:ag-code}
	For every sufficiently small constant $\varepsilon>0$, there exists a square prime power $q=\poly(1/\varepsilon)$ and an explicit family of algebraic-geometry codes $C_{\mathrm{AG}} \subseteq \mathbb{F}_q^N$ such that, for all sufficiently large $N$, the code has rate $R$ and relative distance $\delta$ satisfying $R+\delta \ge 1-\varepsilon$.

	Moreover, $C_{\mathrm{AG}}$ admits polynomial-time encoding and
	polynomial-time unique decoding from any combination of $e$
	substitution errors and $s$ erasures provided
	$2e + s < d$ \cite{sakataFastErasureanderrorDecoding1998}, where $d = \delta N$ denotes the minimum distance.
\end{lemma}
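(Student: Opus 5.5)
The lemma is an existence statement with a citation, so I will assemble it from known facts about AG codes rather than build anything new. The source is the Garcia--Stichtenoth tower of function fields over $\mathbb{F}_q$ with $q$ a square prime power. Its defining feature is that the number of rational places $N$ grows while the genus $g$ satisfies $g/N \to 1/(\sqrt q-1)$; that is, the tower attains the Drinfeld--Vl\u{a}du\c{t} bound.

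First I fix the field size. Given $\varepsilon$, I choose a square prime power $q$ with $\sqrt q - 1 \ge 2/\varepsilon$. Since powers of $4$ are square prime powers, some such $q$ with $q = O(1/\varepsilon^2) = \poly(1/\varepsilon)$ exists.

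Second, I build the code. For a function field in the tower, take $N$ rational places $P_1,\dots,P_N$ and a divisor $G$ whose support avoids them, with $2g-2<\deg G<N$. The one-point code $C_{\mathcal L}(D,G)$ then has:
\begin{itemize}
\item dimension $k \ge \deg G - g + 1$ (Riemann--Roch);
\item minimum distance $d \ge N - \deg G$.
\end{itemize}
Adding these gives $k + d \ge N - g + 1$, so
\[
R + \delta \ge 1 - \frac{g}{N} \ge 1 - \frac{1}{\sqrt q -1} - o(1) \ge 1-\varepsilon
\]
for all sufficiently large $N$. Here the slack between $\varepsilon/2$ and $\varepsilon$ absorbs the $o(1)$ term.

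Two further points need care. The tower only gives a discrete sequence of lengths, so I get "all sufficiently large $N$" by puncturing or shortening a member of the tower. Puncturing $t$ coordinates lowers $d$ by at most $t$ and keeps $k$, so $R+\delta$ changes by $O(t/N)$. Because consecutive levels of the tower differ in length by a factor of only about $\sqrt q$, which is a constant, the loss can be made small by choosing the level appropriately and rebalancing $\deg G$. The rate $R$ can be set to any target value by choosing $\deg G$.

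Explicitness, i.e.\ polynomial-time computation of a basis of $\mathcal L(G)$ for the Garcia--Stichtenoth tower, I take from the known construction of Shum et al. Errors-and-erasures unique decoding up to $2e+s<d$ I take from the cited Sakata et al.\ decoder; its guarantee is stated in terms of the designed distance $N-\deg G\le d$, which suffices.

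The main obstacle is explicitness. The parameter bound itself is just Riemann--Roch combined with the tower's asymptotics, but the polynomial-time construction and decoding rest entirely on cited results, and I would not reprove them.
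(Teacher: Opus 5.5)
Your parameter argument at the tower lengths is fine; it is the standard justification behind the paper's citation, and the paper itself gives no proof. With $q=4^k$ and $\sqrt{q}-1\ge 2/\varepsilon$, one-point codes on the Garcia--Stichtenoth tower and the Goppa bound $k+d\ge N-g+1$ give $R+\delta\ge 1-\varepsilon$ at each tower length $N_m$.

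The gap is the step meant to deliver \emph{all} sufficiently large $N$. Take a target $N$ with $N_{m-1}<N\le N_m$. The level is not a free choice: level $m-1$ has too few rational places, so you must puncture (or shorten) level $m$ by $t=N_m-N$ coordinates. This $t$ can be about $(1-1/\sqrt{q})N_m$, so $t/N$ can be roughly $\sqrt{q}-1$, a large constant rather than a small one.

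Concretely, the punctured code only satisfies $k+d\ge N-g_m+1$, i.e.\ $R+\delta\ge 1-g_m/N$. Since $g_m\approx N_m/(\sqrt{q}-1)$ and $N$ can be close to $N_{m-1}\approx N_m/\sqrt{q}$, this gives $g_m/N\approx \sqrt{q}/(\sqrt{q}-1)>1$, which is vacuous. Rebalancing $\deg G$ cannot help, because $k+d\ge N-g+1$ does not depend on $\deg G$ and is tight for the designed parameters. Enlarging $q$ does not help either, because the genus-to-length ratio and the growth factor between consecutive levels are both governed by $\sqrt{q}$.

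You can repair this in one of two ways.
\begin{itemize}
\item The simplest is to prove the statement only along the tower lengths, an infinite sequence with bounded consecutive ratios. This is all the paper actually needs, since its payload block length only has to be $\Omega(\lambda)$ and polynomial in $\lambda$.
\item Alternatively, refine the length sequence. Each step $F_{m+1}/F_m$ of the tower is Galois with an elementary abelian group of order $\sqrt{q}=p^a$, so there are intermediate fields in steps of degree $p$. The completely split rational places descend to them, and Riemann--Hurwitz shows their genus-to-rational-places ratio is still about $1/(\sqrt{q}-1)$. Puncturing the nearest such field then costs at most a factor $p$ in $g/N$, which you absorb by taking $\sqrt{q}-1\ge 2p/\varepsilon$ (with $p=2$ for $q=4^k$). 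You would then also need a source for explicit Riemann--Roch bases on these intermediate fields.
\end{itemize}

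A smaller point: a decoder guaranteed only up to the designed distance $N-\deg G$ does not decode up to the true minimum distance $d$, as the statement literally requires. This is harmless only once you redefine $\delta$ as the designed relative distance, which still satisfies $R+\delta\ge 1-g/N$.
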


Combining \cref{lem:ag-code} and \cref{thm:const_rate_prc}, we obtain:

\begin{corollary}\label{lem:payload-good-bound}
	For every constant $\varepsilon>0$ and every sufficiently small constant $\eta\in(0,\varepsilon)$, there exists a seeded payload code
	$\eccload:[M]\times\{0,1\}^d\times\{0,1\}^r\to\Sigma^n$
	with information rate at least $1-\varepsilon$ and alphabet size $|\Sigma|=\poly(1/\eta)$, that is robust to any $\varepsilon-\eta$ fraction of edit errors.
\end{corollary}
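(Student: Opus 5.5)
The plan is to instantiate \cref{thm:const_rate_prc} with $\eccmsg$ taken to be the AG code of \cref{lem:ag-code}, with parameters chosen so that the three losses (AG redundancy, the synchronization slack $\delta$, and the rate cost of the synchronization alphabet) fit together. Concretely, set $\pmsg := \varepsilon - \eta/2$ and $\delta := \eta/2$, so that $\pload = \pmsg - \delta = \varepsilon - \eta$ meets the requirement $\pload \le \pmsg - \delta$.

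First I apply \cref{lem:ag-code} with slack parameter $\eta' = c\eta$ for a small absolute constant $c$. This gives an AG code over $\sigmsg = \F_q$ with $q = \poly(1/\eta)$ and relative distance at least $\pmsg$ (after the usual strictness adjustment, since decoding requires $2e+s<d$). Its rate is $R_{\msg} \ge 1 - \pmsg - c\eta = 1 - \varepsilon + (1/2 - c)\eta$, and it is efficiently decodable from any $\pmsg$ fraction of half-errors. I take $M = q^{R_{\msg} n}$, so $\log q$ bits of information per coordinate are carried in the $\sigmsg$ component.

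Next, \cref{thm:const_rate_prc} yields a seeded payload code over $\Sigma = \sigmsg\times\sigsyn$ with $|\sigsyn| = O(\delta^{-4}) = O(\eta^{-4})$, robust to $\pload = \varepsilon-\eta$ fraction of edit errors, provided $n = \Omega(\lambda)$. In particular $|\Sigma| = \poly(1/\eta)$. The information rate is measured over $\Sigma$, so it equals
\[
R_{\msg}\cdot \frac{\log q}{\log q + \log|\sigsyn|}.
\]

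The main obstacle is this last rate computation. The synchronization component costs a $\log|\sigsyn|/\log|\Sigma|$ fraction of the rate, and that fraction does not vanish unless $q$ is much larger than $|\sigsyn|$. The remedy is to enlarge $q$ further, which \cref{lem:ag-code} permits since it only constrains $q$ from below by $\poly(1/\eta)$. I take $q \ge |\sigsyn|^{C/\eta}$, or more carefully $q = (1/\eta)^{\Theta(1/\eta)}$ with the AG tradeoff still holding. Then the loss factor $\log|\sigsyn|/\log q \le \eta/C$ can be absorbed, giving total rate at least $(1-\varepsilon+(1/2-c)\eta)(1-\eta/C) \ge 1-\varepsilon$ for suitable constants.

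One caveat is that this makes the alphabet $(1/\eta)^{O(1/\eta)}$ rather than $\poly(1/\eta)$. If the stated $\poly(1/\eta)$ bound is to be kept literally, I would instead aim for rate $1-\varepsilon-O(\eta\log(1/\eta)/\log q)$ and reparametrize $\varepsilon$. Alternatively, I would treat the $\poly$ as allowing the exponent to depend on $\eta$ only through the requirement $\log|\sigsyn|/\log q \lesssim \eta$. I would check the intended reading against how the corollary is used later. The remaining properties, uniformity and pseudorandomness, are inherited directly from \cref{thm:const_rate_prc}.
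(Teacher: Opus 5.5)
Your construction is the paper's: an AG code over $\sigmsg$ is fed into \cref{thm:const_rate_prc} with $\pmsg=\varepsilon-\eta/2$, synchronization slack $\eta/2$, and $|\sigsyn|=\Theta(\eta^{-4})$. Your caveat is not a gap in your argument. It points to an error in the paper's own proof.

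The paper finishes by taking $q=(1/\eta)^C$ for a ``sufficiently large constant $C$'' and asserting $\rload\ge 1-\varepsilon$. With that choice, $\log q/(\log q+\log|\sigsyn|)=C/(C+4)+o(1)$ as $\eta\to 0$, which is a constant independent of $\eta$. The rate slack left by the AG code is only $O(\eta)$. So $C=\Omega(1/\eta)$ is forced, i.e.\ $|\Sigma|=(1/\eta)^{\Theta(1/\eta)}$, exactly as you computed.

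No other construction can achieve $\poly(1/\eta)$ either:
\begin{itemize}
\item Averaging the robustness guarantee over $m$ gives a fixed $(s,\shiftpk)$ under which all but a negligible fraction of messages decode correctly.
\item The radius-$(\varepsilon-\eta)n$ edit balls around those codewords are pairwise disjoint. This gives a deterministic code of rate $1-\varepsilon-o(1)$ correcting $D=(\varepsilon-\eta)n$ deletions.
\item A word with $r$ runs has at least $\binom{r-D+1}{D}$ distinct subsequences of length $n-D$ (Levenshtein).
\item For small $\varepsilon$, all but an exponentially small fraction of these codewords have at least $n/4$ runs.
\item Disjointness inside $\Sigma^{n-D}$ then gives $q^{(1-\varepsilon)n}\,2^{c(\varepsilon)n}\lesssim q^{(1-\varepsilon+\eta)n}$, i.e.\ $\log_2 q\ge c(\varepsilon)/\eta$ for some $c(\varepsilon)>0$.
\end{itemize}
For fixed $\varepsilon$ and $\eta\to 0$ this forces $q=2^{\Omega(1/\eta)}$. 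So the statement as literally written is false, and your $(1/\eta)^{O(1/\eta)}$ is optimal up to the $\log(1/\eta)$ factor in the exponent.

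You should therefore drop the fallbacks in your last paragraph:
\begin{itemize}
\item With $q=\poly(1/\eta)$, the overhead $\log|\sigsyn|/\log|\Sigma|$ is $\Theta(1)$, not $O(\eta\log(1/\eta)/\log q)$.
\item ``Reparametrizing $\varepsilon$'' changes the gap between rate and robustness, which is the whole content of the statement.
\item A ``polynomial'' whose exponent depends on $\eta$ is vacuous.
\end{itemize}
State and prove the corollary with $|\Sigma|=(1/\eta)^{O(1/\eta)}$.

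One small point: \cref{lem:ag-code} as written only promises some $q=\poly(1/\varepsilon)$. To justify enlarging $q$, use that for every square prime power $q$ the Garcia--Stichtenoth AG codes satisfy $R+\delta\ge 1-1/(\sqrt q-1)$ and admit efficient error-erasure decoding.
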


\begin{proof}
	Apply \cref{lem:ag-code} with parameter $\eta/4$
	(i.e., taking $\varepsilon = \eta/4$ in the lemma).
	Let $\eccmsg$ be the resulting AG code over $\sigmsg = \mathbb{F}_q$, with
	half-error tolerance $\pmsg = \varepsilon - \eta/2$ and
	rate $\rmsg \ge 1 - \varepsilon + \eta/4$. The alphabet size is $q = |\sigmsg| = \poly(1/\eta)$.

	Next, we instantiate the synchronization string reduction
	(see \cref{thm:const_rate_prc}) with $\eccmsg$ as the outer code. Choose the synchronization alphabet $\sigsyn$ of size $\poly(1/\eta)$
	such that the induced seeded payload code corrects every $\pload$-bounded
	edit channel whenever $\pload \le \pmsg - \eta/2$.

	Define the final alphabet by $\Sigma:=\sigmsg\times\sigsyn$. Then
	$|\Sigma|=q\cdot |\sigsyn|=\poly(1/\eta)$.
	Moreover, the information rate of the resulting seeded payload code is
	\[
		\rload=\rmsg\cdot \frac{\log q}{\log q+\log |\sigsyn|}.
	\]
	Since $|\sigsyn|=\poly(1/\eta)$, we have $\log|\sigsyn| = O(\log(1/\eta))$.
	Setting $q=(1/\eta)^C$ for a sufficiently large constant $C$ then yields
	$\rload\ge 1-\varepsilon$.

	Finally, the outer AG code is chosen to tolerate a $\pmsg=\varepsilon-\eta/2$ fraction of half-errors. Hence, by the synchronization reduction, the resulting seeded payload code corrects every
	$\pload=\pmsg-\eta/2=\varepsilon-\eta$
	fraction of edit errors.

	Therefore the resulting seeded payload code has information rate at least $1-\varepsilon$, alphabet size $|\Sigma|=\poly(1/\eta)$, and is robust against every $(\varepsilon-\eta)$-bounded edit channel.
\end{proof}

Combining \cref{lem:payload-good-bound} and \cref{thm:const-reduction}, we get the following:

\begin{corollary}[Multi-bit to zero-bit reduction]\label{cor:multi_zero_redu_sync}
	For every sufficiently small constant $\varepsilon>0$ and every alphabet $\Sigma$ with $|\Sigma| \ge \poly(1/\varepsilon)$, the following holds.

	If there exists a zero-bit public-key (resp., secret-key) $\prcz$ over $\Sigma$ that is robust against every $p_0$-bounded edit channel (resp., sublinear polynomial edit channel), then there exists a public-key (resp., secret-key) PRC over the same alphabet $\Sigma$ with rate $1-\varepsilon$ that is robust against every $\pload\le (1-\delta)\min(\frac{p_0}{4},\varepsilon)$-bounded edit channel (resp., sublinear polynomial edit channel) for any constant $\delta>0$.
\end{corollary}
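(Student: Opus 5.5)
The plan is to instantiate \Cref{cor:const-reduction-fix-seed-code} (equivalently \Cref{thm:const-reduction} with a binary seed code of decoding radius $\pseed = \frac14 - o(1)$). The payload code $\eccload$ will come from \Cref{lem:payload-good-bound}, and the zero-bit code is the given $\prcz$ over $\Sigma$. Concretely, given $\varepsilon$, I apply \Cref{lem:payload-good-bound} with a slightly smaller rate loss, say $\varepsilon/2$, and with $\eta = \delta\varepsilon/2$. This yields a seeded payload code of rate at least $1-\varepsilon/2$ that is robust to a $(1-\delta/2)\cdot\frac{\varepsilon}{2}$-ish fraction of edits, over an alphabet of size $\poly(1/\varepsilon)$ (treating $\delta$ as a fixed constant). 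To get the claimed radius $(1-\delta)\varepsilon$ rather than $\varepsilon/2$, I rebalance: take rate loss $\varepsilon$ and $\eta = \delta\varepsilon/2$, giving robustness $\varepsilon-\eta = (1-\delta/2)\varepsilon$ and rate at least $1-\varepsilon$.

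The $o(1)$ rate loss from the marker blocks in \Cref{thm:const-reduction} is absorbed by starting from rate $1-\varepsilon+o(1)$, or by an arbitrarily small constant slack.

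\textbf{Alphabet matching.} \Cref{lem:payload-good-bound} produces a specific alphabet $\sigmsg\times\sigsyn$, while the statement fixes an arbitrary $\Sigma$ with $|\Sigma|\ge\poly(1/\varepsilon)$. I would handle this by embedding. If $|\Sigma|$ is at least the required size, then $\sigmsg$ can be taken as a field of size about $|\Sigma|/|\sigsyn|$, or more simply the payload can be viewed over a subset structure. Uniformity is the delicate point here, since the payload must be uniform over all of $\Sigma^n$ for \Cref{thm:const-reduction}. Because AG codes exist over every square prime power, I would pick $\Sigma$ itself to be of the form $\F_q\times\sigsyn$ with $q$ large. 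For general $\Sigma$, I would instead identify $\Sigma$ with $\Z_{|\Sigma|}$ and use additive shifts in $\Z_{|\Sigma|}$ for the masks. This preserves uniformity, and the code only needs an injection of $\sigmsg\times\sigsyn$ into $\Sigma$ together with a rate loss of $\log(|\sigmsg||\sigsyn|)/\log|\Sigma|$, which is fine once that ratio is close to $1$. I expect this bookkeeping to be the main obstacle. I would resolve it by stating the corollary for alphabets of the product form and noting that the masking argument of \Cref{thm:const_rate_prc} only uses a group structure.

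\textbf{Parameters.} Set $\ell = d' = \Theta(n^{\gamma})$ with $\gamma<1/4$, using $\eccseed$ a constant-rate binary concatenated code correcting a $\frac14-\delta'$ fraction of errors. Then \Cref{thm:const-reduction} gives robustness against every $p$ with
\[
p(N)\le(1-\delta'')\min\left\{(1-\delta/2)\varepsilon,\ \left(\tfrac14-\delta'\right)p_0(\ell)\right\}.
\]
Choosing $\delta',\delta''$ small relative to $\delta$ yields the bound $(1-\delta)\min\{p_0/4,\varepsilon\}$ in the constant case. In the sublinear polynomial case, $p_0(\ell)=c/\ell^{\gamma_0} = c/N^{\gamma\gamma_0}$ up to constants, so the minimum is a sublinear polynomial rate and the resulting channel class is again sublinear polynomial. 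Robustness also needs $n=\Omega(\lambda)$ for \Cref{thm:const_rate_prc}, which is arranged by choosing $n=\poly(\lambda)$.

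Pseudorandomness, soundness, and the public-key versus secret-key distinction are inherited directly from \Cref{thm:const-reduction}. The remaining hypothesis to check is the uniformity property of $\prcz$, which the statement implicitly assumes; it holds for \Cref{alg: cgk}, or can be enforced by adding a public one-time pad.
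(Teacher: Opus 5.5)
Your main line matches the paper. The corollary is \Cref{lem:payload-good-bound} plugged into \Cref{cor:const-reduction-fix-seed-code}, with $\eta$ a small constant multiple of $\delta\varepsilon$ and a constant rate slack that absorbs the $o(1)$ marker overhead. Your bookkeeping for $\pseed=\tfrac14-\delta'$, and for the sublinear-polynomial case via $\ell=\Theta(N^{\gamma})$, is fine. Two of your side arguments would not work, however.

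\textbf{The route to a general alphabet $\Sigma$.} You propose an injection $\iota:\sigmsg\times\sigsyn\to\mathbb{Z}_{|\Sigma|}$ together with a public $\mathbb{Z}_{|\Sigma|}$-shift; this fails for two reasons.
\begin{itemize}
\item Pseudorandomness breaks. Because $\shiftpk$ is public, an adversary can subtract it and check that every symbol of a real codeword lies in $\iota(\sigmsg\times\sigsyn)$. That set is a proper subset whenever $\iota$ is not a bijection, so the adversary distinguishes real codewords from uniform ones.
\item Decoding breaks. A position-dependent shift that is not componentwise mixes the synchronization and message coordinates, so the decoder cannot remove it before it knows the alignment. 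But \Cref{lem:alignment} needs the decoder to read $\tilde\sigma$ directly off the received symbols.
\end{itemize}
This is exactly why the paper folds $\shiftsyn$ into the known string $\sigma=\prgsyn(s_1)+\shiftsyn$ and uses a product alphabet with componentwise shifts.

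So your fallback to product-form alphabets is the right one, and it is what the paper implicitly does. The phrase ``every $\Sigma$ with $|\Sigma|\ge\poly(1/\varepsilon)$'' should be read as referring to the alphabet $\sigmsg\times\sigsyn$ produced by \Cref{lem:payload-good-bound}. As you noted, its size also depends on $\delta$ through $\eta$.

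\textbf{Enforcing uniformity of $\prcz$ with a public one-time pad.} This does not work for edit channels. After insertions and deletions, the decoder cannot strip a position-dependent pad from the received word without first knowing the alignment. In the CGK construction the pad works only because it is applied and removed in the Hamming domain: before $\emb^{-1}$ at encoding, and after re-embedding at decoding. Uniformity must therefore be carried as an extra hypothesis on $\prcz$, as in \Cref{thm:const-reduction}. The CGK-based zero-bit code satisfies it, but a general black-box $\prcz$ need not.
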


Plugging in the zero-bit PRC constructed in \cref{thm: edit_PRC_zero_bit}, we get the following multi-bit PRC under \cref{asp:CG24_comb}.

\begin{corollary}\label{cor:body-rateone-result1}
	Under \Cref{asp:CG24_comb}, for every security parameter $\lambda>0$ and every sufficiently small $\varepsilon>0$, there exists a public-key PRC over alphabet of size $\poly(1/\varepsilon)$ with rate $1-\varepsilon$ that is robust against every sublinear polynomial edit channel.
\end{corollary}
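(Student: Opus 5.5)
The proof combines three earlier results: the binary zero-bit construction of \Cref{thm: edit_PRC_zero_bit}, lifted to the alphabet $\Sigma$; the high-rate seeded payload code of \cref{lem:payload-good-bound}; and the generic reduction in the form of \cref{cor:const-reduction-fix-seed-code}.

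\textbf{Step 1 (zero-bit code over $\Sigma$).} Fix $|\Sigma| = \poly(1/\varepsilon)$, taken to be the alphabet produced by \cref{lem:payload-good-bound} with, say, $\eta = \varepsilon/2$. We need a zero-bit PRC over $\Sigma$ that is robust to sublinear polynomial edit channels and has the uniformity property. By the remark after \Cref{alg: cgk}, the CGK reduction works verbatim over any constant alphabet, so we would like a Hamming-robust zero-bit PRC over $\Sigma$ tolerating more than a $1/4$ fraction of substitutions. Since \Cref{thm:CG24_PRC} is stated only for the binary alphabet, the simplest route is to avoid this and instead simulate a binary zero-bit edit PRC over $\Sigma$.

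Concretely, take $|\Sigma| = 2^k$, which is allowed since the size only needs to be $\poly(1/\varepsilon)$ and can be rounded up to a power of two. Take the binary code from \Cref{thm: edit_PRC_zero_bit} with codeword length $k\ell$ and group its bits into consecutive $k$-bit symbols. A uniform binary string maps to a uniform string over $\Sigma$, so pseudorandomness and uniformity transfer directly. Uniformity holds for \Cref{alg: cgk} because of the one-time pad $e_t$ together with \Cref{lem: CGK_dec_props}(2).

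For robustness, $t$ symbol-level insertions or deletions correspond to at most $kt$ bit-level edits, and $k$ is a constant. A $c/\ell^{\gamma}$ fraction of symbol edits is therefore still a sublinear polynomial bit-level edit rate, with only a constant-factor loss absorbed into $c_2$.

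Soundness needs one check. The decoder receives strings over $\Sigma$, which we unpack to bits before running the binary decoder. Soundness for every fixed binary string then gives soundness for every fixed $\Sigma$-string.

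\textbf{Step 2 (apply the reduction).} Instantiate \cref{cor:const-reduction-fix-seed-code} with $\prcz$ from Step 1 and $\eccload$ from \cref{lem:payload-good-bound}. The payload code has rate at least $1-\varepsilon/2$ and is robust to $\delta = \varepsilon/2 - \eta'$ edits for a small $\eta'$; it also has the uniformity and pseudorandomness required by \Cref{def:seeded_payload_code}.

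Choose $d' = \ell = \Theta(n^{\gamma'})$ with $\gamma' < 1/4$. The robustness bound requires
\[
p(N) \le (1-\varepsilon')\min\{\delta, \tfrac14 \pz(\ell)\}.
\]
Since $\ell = \Theta(N^{\gamma'})$, we have $\pz(\ell) = c/\ell^{\gamma} = \Theta(N^{-\gamma\gamma'})$. Hence any edit rate $p(N) = c'/N^{\gamma\gamma'}$ with a small enough constant $c'$ satisfies the bound, which is again a sublinear polynomial edit channel. The rate is $1-\varepsilon/2 - o(1) \ge 1-\varepsilon$ for large $\lambda$.

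\textbf{Step 3 (parameter bookkeeping).} We must make sure the polynomial relations between $n$, $\ell$, and $\lambda$ are consistent. In particular, $\ell$ has to be large enough that the zero-bit code's failure probability is $\negl(\lambda)$; since $\ell$ is polynomial in $\lambda$, it suffices to set $n = \poly(\lambda)$ appropriately.

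\textbf{Main obstacle.} The subtle point is the statement's phrase ``every sublinear polynomial edit channel.'' The construction only yields robustness against rates $c/N^{\gamma''}$ for exponents $\gamma''$ bounded below by $\gamma\gamma'$, where $\gamma < 1/2$ and $\gamma' < 1/4$. I would read the statement as follows: for each target exponent $c$ with $0 < c < 1$, there is an instantiation that is robust to edit rate $N^{-c}$ (that is, at most $N^{1-c}$ edits). The case of small $c$ is the demanding one, because it needs $\gamma\gamma' \le c$. We get this by choosing $\gamma$ and $\gamma'$ small: a smaller $\gamma'$ only shrinks the $o(1)$ rate loss further, and a smaller $\gamma$ makes $\ell$ longer in \Cref{alg: cgk}, but $\ell$ remains polynomial. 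Verifying that these choices stay compatible with the $d' = \ell$ constraint is where I expect the care to go.
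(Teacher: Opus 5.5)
Your proposal is correct and is essentially the paper's argument: the paper simply plugs the zero-bit code of \Cref{thm: edit_PRC_zero_bit} into the multi-bit reduction \cref{cor:multi_zero_redu_sync}, which is itself \cref{lem:payload-good-bound} combined with \cref{cor:const-reduction-fix-seed-code}, and it understands ``every sublinear polynomial edit channel'' per exponent, exactly as you read it. Your bit-grouping lift of the binary code to $\Sigma\cong\{0,1\}^k$ (taking $q$ and $|\sigsyn|$ to be powers of two) and your exponent bookkeeping spell out steps the paper leaves implicit, and the $d'=\ell$ compatibility you were worried about is immediate: fix the CGK exponent first so that $\ell=\poly(\lambda)$, then set $n:=\ell^{1/\gamma'}$ and choose $d$ so that $d'=\ell$.
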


Moreover, we can instantiate our generic reduction using the public-key zero-bit PRC construction of Golowich and Moitra \cite{golowich2024edit}. Their construction achieves robustness against a constant fraction of edit errors over a $\poly(\lambda)$-sized alphabet by employing an ``indexing'' approach, where each symbol is interpreted as an index into a binary Hamming-robust PRC. By plugging in this zero-bit PRC  construction, we obtain a public-key PRC that handles constant-fraction edit errors, while maintaining a rate arbitrarily close to 1.

\begin{corollary}\label{cor:body-rateone-result2}
	Under the same cryptographic assumptions as~\cite{golowich2024edit}
	(local weak PRFs), for every security parameter $\lambda>0$ and every sufficiently small $\varepsilon>0$ and $\eta \in (0,\varepsilon) $, there exists a public-key PRC over alphabet of size $\poly(\lambda,1/\eta)$ with rate $1-\varepsilon$ that is robust against every $(\varepsilon-\eta)$-bounded edit channel.
\end{corollary}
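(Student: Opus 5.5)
The plan is to instantiate \Cref{thm:const-reduction} (in the form of \Cref{cor:const-reduction-fix-seed-code}) with two components. The zero-bit PRC is the public-key construction of \cite{golowich2024edit}, and the seeded payload code is the one from \Cref{lem:payload-good-bound}, with parameters $(\varepsilon',\eta')$ slightly adjusted from $(\varepsilon,\eta)$.

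\textbf{Step 1: the zero-bit PRC.} The construction of \cite{golowich2024edit}, under local weak PRFs, gives a zero-bit public-key PRC over an alphabet $\Sigma_0$ of size $\poly(\lambda)$. It is robust against every $p_0$-bounded edit channel for some constant $p_0>0$, with codeword length $\ell$ that we may take to be any polynomial in $\lambda$. The generic reduction also requires the uniformity property of \Cref{def:prc_uniformity}. If that construction does not already produce exactly uniform codewords, I would enforce uniformity in the same way as \Cref{alg: cgk}: put a uniform additive mask $e\in\Sigma_0^\ell$ into the public key and output $\encz(\cdot)+e$. The decoder subtracts $e$ symbolwise, which commutes with edit operations because it is applied to the original positions. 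Here I expect some friction, since after insertions and deletions the decoder does not know alignment. I would handle it as in \Cref{alg: cgk}, where the mask is undone inside the decoding of each candidate substring. Alternatively, I would check that \cite{golowich2024edit} codewords are already uniform over key randomness. Either way, pseudorandomness and soundness are preserved by a one-line argument.

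\textbf{Step 2: alphabets and parameters.} Let $\Sigma=\Sigma_0\times\Sigma_1$, where $\Sigma_1$ is the $\poly(1/\eta)$-size alphabet of \Cref{lem:payload-good-bound}. To lift the zero-bit PRC to $\Sigma$, pad each of its symbols with a uniform $\Sigma_1$ coordinate; this keeps uniformity, pseudorandomness and edit robustness. To lift the payload code, pad each of its symbols with a uniform $\Sigma_0$ coordinate inside the seeded payload code. This costs rate by a factor $\log|\Sigma_1|/\log|\Sigma|$, which is bad since $|\Sigma_0|=\poly(\lambda)$.

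So I would instead instantiate \Cref{lem:payload-good-bound} directly over an alphabet of size $\poly(\lambda,1/\eta)$. The AG outer code can use $q=\poly(\lambda)$, with $\log|\sigsyn|=O(\log 1/\eta)$ negligible relative to $\log q$. The zero-bit symbols are then embedded into $\Sigma$ via a bijection $\Sigma\cong\Sigma_0\times\Sigma'$, with the extra coordinates uniform. Alternatively, we choose $|\Sigma_0|$ to match $|\Sigma|$ by taking the \cite{golowich2024edit} alphabet large enough.

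With $\ell=d'=\Theta(n^\gamma)$, the marker blocks cost only $o(1)$ rate. The payload has rate $\ge 1-\varepsilon$ (apply the lemma with $\varepsilon$ slightly smaller to absorb the $o(1)$) and is robust to $\delta=\varepsilon-\eta/2$.

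\textbf{Step 3: the main obstacle.} The difficulty is the robustness bound $p\le(1-\varepsilon_0)\min\{\delta,\tfrac14 p_0\}$ from \Cref{cor:const-reduction-fix-seed-code}. If $p_0/4<\varepsilon-\eta$, the $\min$ is governed by $p_0$, and the conclusion "$(\varepsilon-\eta)$-bounded" would fail. My plan is to use \Cref{thm:const-reduction} with a seed code $\eccseed$ tolerating a $\pseed$ fraction close to $1/4$. I would also exploit that \cite{golowich2024edit} can be made robust to any constant edit fraction $p_0<1$ (their indexing approach tolerates any constant fraction given a Hamming PRC with large enough tolerance). Taking $p_0$ large enough ensures $\tfrac14 p_0\ge\varepsilon$ whenever $\varepsilon$ is small, which is allowed since the statement is for sufficiently small $\varepsilon$. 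Then the minimum is $\delta=\varepsilon-\eta/2$, and $(1-\varepsilon_0)(\varepsilon-\eta/2)\ge\varepsilon-\eta$ for a small constant $\varepsilon_0$. Robustness, soundness, and pseudorandomness then all follow from \Cref{thm:const-reduction}.
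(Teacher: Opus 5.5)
Your overall route matches the paper's. The paper proves this corollary only by plugging the zero-bit PRC of \cite{golowich2024edit} into \Cref{thm:const-reduction} together with the synchronization-string payload code. Your Steps 2 and 3 are sound and supply details the paper leaves implicit: building the payload code over a $\poly(\lambda)$-size field so that the synchronization overhead $\log|\sigsyn|/\log|\Sigma|$ is $o(1)$, padding the zero-bit symbols with uniform extra coordinates, and using that $\varepsilon$ is small so that $\frac14 p_0$ does not bind the minimum.

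The step that would fail is your first remedy for the uniformity hypothesis of \Cref{def:prc_uniformity}. A positional additive pad $e\in\Sigma_0^\ell$ on the output cannot be removed once alignment is lost, which is exactly what insertions and deletions do. Suppose the channel makes $k$ deletions spread evenly through a marker block $u+e$. Then every candidate window, after $e$ is subtracted positionwise, agrees with $u$ on at most one stretch of about $\ell/(k+1)$ symbols. Every misaligned symbol becomes $u_{j+s}+e_{j+s}-e_j$ with $s\neq 0$, and these are jointly uniform and independent of $u$. So $k$ channel edits become roughly $\ell k/(k+1)$ random substitutions, far outside what the zero-bit PRC is guaranteed to tolerate. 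Your ``one-line argument'' covers pseudorandomness and soundness, but robustness is precisely what breaks.

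Nor is this what \Cref{alg: cgk} does. There $e_t$ pads the Hamming-PRC codeword $a_t$ before $\emb^{-1}$, and it is removed from $\emb(z,r_t)$, that is, in the Hamming domain after the CGK embedding has realigned the block and converted the edits into boundedly many substitutions. A black-box edit PRC offers no such aligned domain.

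Consequently, uniformity must come from the structure of the \cite{golowich2024edit} encoder itself, for instance a pad inside its underlying binary Hamming layer, mirroring \Cref{alg: cgk}. You would then have to check that the output is exactly uniform on $\Sigma_0^\ell$, not merely pseudorandom; the paper does not verify this either.

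The obvious edit-compatible mask is a uniformly random permutation of $\Sigma_0$, stored in the public key and applied symbolwise. It does commute with edits, but it preserves which positions carry equal symbols. So it cannot by itself turn a non-uniform codeword distribution into the uniform one. Once Step 1 is replaced by such a direct verification, the rest of your argument goes through as in the paper.
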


\section{Binary PRCs with Rate Close to 1/2}\label{sec:binary}

In this section, we focus on the binary setting and show how to obtain pseudorandom codes with rate arbitrarily close to $1/2$. Our construction proceeds in two steps. We first build a high-rate binary insertion-deletion code with random encoding, where the encoding randomness determines a family of binary inner codes and simultaneously guarantees pseudorandomness. We then combine this seeded binary payload code with the generic reduction from the previous section to obtain the final binary PRC.

\subsection{Seeded High-Rate Binary Insdel Code}\label{subsec:base-bianry}

We now describe the main building block for the binary reduction. Given a message $m$ and a seed $r$, the encoder outputs a binary codeword $C(m;r)$. Following the general construction of \cite{cheng2023linear}, let $q=\Theta(\log n)$ be a prime power and let $\Sigma=\mathbb F_q$, and we take $C_{\mathrm{out}}:\Sigma^K\to \Sigma^n$ to be an $[n,K,d]_q$ Reed--Solomon code, and $d=\delta n$, where $\delta=\gamma/2$. We also construct $n$ distinct binary inner codes $C_{\mathrm{in}}^1,\dots,C_{\mathrm{in}}^n$, each of codeword length $n_1$ and message length $k_1=\frac{1}{2}(1-\gamma')n_1$. The resulting binary code therefore has codeword length $N:=nn_1$.

For each seed $r\in\{0,1\}^t$, let $\mathcal C_r=(C_{r,1},C_{r,2},\dots,C_{r,n})$ denote the collection of binary inner encoders specified by $r$, where each $C_{r,i}\subseteq \{0,1\}^{n_1}$ is the inner code associated with position $i$.

Given a message $m\in \Sigma^K$, write $C_{\mathrm{out}}(m)=(a_1,a_2,\dots,a_n)\in\Sigma^n$ for the corresponding outer codeword. We then define the binary encoding of $m$ with seed $r$ by
$
C(m;r) := C_{r,1}(a_1)\circ C_{r,2}(a_2)\circ \cdots \circ C_{r,n}(a_n) \in \{0,1\}^N.
$

The decoding guarantee of \cite{cheng2023linear} relies on the following combinatorial property of the family of inner codes.

\begin{property}[with parameters $s,t,d'$]\label{prop:insd}
Let $\mathcal C_r=(C_{r,1},\dots,C_{r,n})$ be a family of binary inner codes associated with a seed $r$, where each $C_{r,i}\subseteq \{0,1\}^{n_1}$. For every $i\in[n-t]$, $j\in[n-t-1]$, every
$
w=w_i\circ w_{i+1}\circ\cdots\circ w_{i+t-1} \in \bigcirc_{\ell\in[t]} C_{r,\,i-1+\ell},
$
and every
$
u=u_j\circ u_{j+1}\circ\cdots\circ u_{j+t} \in \bigcirc_{\ell\in[t+1]} C_{r,\,j-1+\ell},
$
it holds that for every substring $w'$ of $u$, we have
$
\ed(w,w')\ge d',
$
provided that the number of unique blocks in $w$ or $u$ is at least $s$. A block $w_{i'}$ of $w$ is called \emph{unique} if $w_{i'}\neq 0^{n_1}$ and, among all blocks of $u$ coming from the same inner code $C_{r,i'}$, none is equal to $w_{i'}$. The definition of a unique block of $u$ is symmetric.
\end{property}

The following two lemmas from \cite{cheng2023linear} provide the key ingredients: for a random seed $r$, the induced family of inner codes $\mathcal C_r$ satisfies \Cref{prop:insd} with high probability, and any family satisfying this property yields a high-rate binary insertion-deletion code with efficient decoding.

\begin{lemma}[{\cite[Lemma~9]{cheng2023linear}}]\label{lem:random-inner-property}
Fix any constants $\gamma',\delta\in(0,1/2)$, and let $n_1$ be the codeword length of each inner code. Suppose that for each $i\in[n]$, the inner code $C_{r,i}\subseteq\{0,1\}^{n_1}$ is an independent random binary linear code of message rate $1/2-\gamma'$. Then, with probability at least $1-1/\poly(n)$ over the choice of $r$, the family $\mathcal C_r=(C_{r,1},\dots,C_{r,n})$ satisfies \Cref{prop:insd} with parameters $(s,t,d')$, where $\frac{d'}{n_1}=\frac{\gamma'}{10}$, $t=\frac{10\log(1/\gamma')}{\gamma'\delta}$, and $s=\frac{\delta t}{4}$.
\end{lemma}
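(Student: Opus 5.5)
The plan is a first-moment (union bound) argument over all ``bad configurations'', followed by a counting estimate. Fix $i,j$ and the outer windows of length $t$ and $t+1$; there are at most $n^2$ such pairs. The claim fails only if there exist codewords $w\in\bigcirc_{\ell\in[t]} C_{r,i-1+\ell}$ and $u\in\bigcirc_{\ell\in[t+1]} C_{r,j-1+\ell}$, and a substring $w'$ of $u$, such that at least $s$ blocks of $w$ or $u$ are unique but $\ed(w,w')<d'$. Since $w'$ and $w$ then have nearly equal length, $w'$ is specified by its start position, giving $O(tn_1)$ choices. By the pigeonhole principle, a small edit distance implies an explicit low-cost alignment: a monotone matching that leaves at most $d'$ symbols of $w$ unmatched. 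Thus each unique block $w_{i'}$ is matched, apart from $O(d')$ positions, to a window of $u$ covering at most two consecutive blocks of $u$.

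The key probabilistic step exploits independence across the inner codes. For a random linear code $C_{r,i'}$ of rate $1/2-\gamma'$, and a fixed nonzero message $x$, the codeword $C_{r,i'}(x)$ is a uniform string in $\{0,1\}^{n_1}$. Uniqueness means no block of $u$ from the same code equals $w_{i'}$. So, conditioned on the blocks of $u$ coming from different inner codes, the blocks $w_{i'}$ behave (by pairwise independence of a random linear map on distinct nonzero messages) like fresh uniform strings. The probability that such a string agrees with a fixed target window up to $\alpha n_1$ edits is at most $2^{-n_1(1-H(\alpha)-o(1))}$, via the standard count $\binom{n_1}{\alpha n_1}^2 2^{\alpha n_1}$ of strings within edit distance $\alpha n_1$. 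Multiplying over the $s$ unique blocks gives roughly $2^{-s n_1 (1-O(\gamma'\log(1/\gamma')))}$.

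Next comes the union bound over the choices of messages. There are at most $2^{(1/2-\gamma')n_1(2t+1)}$ message tuples, and $(tn_1)^{O(t)}$ choices for the alignment block boundaries and distribution of edits. With $d'/n_1=\gamma'/10$, $s=\delta t/4$ and $t=10\log(1/\gamma')/(\gamma'\delta)$, one must balance the exponent $-s n_1(1-O(\gamma'\log(1/\gamma')))$ against $+(1-2\gamma')t n_1$. This is delicate, because $s$ is only a $\delta/4$ fraction of $t$. I expect this to be the main obstacle. The resolution, following \cite{cheng2023linear}, should be to condition on all non-unique and matched-to-same-code blocks, so that only the unique blocks' messages are enumerated; this removes the entropy of the conditioned blocks.

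One must also handle linear dependencies: several $w$ blocks can map into the same code as $u$ blocks only when $i-j$ is small. In that case uniqueness guarantees that the two messages differ, so their difference is a nonzero message and its image is uniform. The resulting per-configuration failure probability is $2^{-\Omega(n_1)}$. With $n_1=\Theta(\log n)$ taken to be a large enough constant multiple, this becomes $n^{-C}$, and the final union bound over $n^2\cdot\poly(n_1)$ positions yields failure probability $1/\poly(n)$, as claimed.
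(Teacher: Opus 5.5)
The paper gives no proof of this lemma; it imports it from \cite{cheng2023linear}. Judged on its own, your sketch has a genuine gap, located exactly at the step you call the main obstacle.

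As you note, a union bound over all $2^{(1/2-\gamma')(2t+1)n_1}$ message tuples, against a probability of about $2^{-sn_1}$ coming only from the unique blocks of $w$, has a positive exponent, since $s=\delta t/4$ is far below $t/2$. Your remedy is to condition on the non-unique blocks so that only unique blocks' messages are enumerated. That is not a valid step as stated.

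The bad event is existential over the messages of all $2t+1$ blocks. Since $w'$ is a substring of $u$, it depends on every block of $u$ and on the non-unique and zero blocks of $w$. Conditioning on the realized inner codes does not shrink the family of events in the union bound. A block's message can be left out of the enumeration only after you prove that the bad event implies an event not involving that block, and this implication fails in general.

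When the alignment is not code-aligned, the part of $u$ facing a unique block of $w$ comes from other codes whose messages are free. If the unique blocks are spread out, this costs about $3k_1=(3/2-3\gamma')n_1$ message bits per unique block against a gain of only $n_1$. You also only treat unique blocks of $w$, whereas the hypothesis allows the $s$ unique blocks to lie in $u$.

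What is missing is a case analysis on the shift of the alignment. Since $\ed(w,w')<d'<n_1$, the offset of a monotone alignment of $w$ into $u$ varies within a window of width less than $d'$.

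(a) Suppose this window is within a small multiple of $d'$ of the shift that puts each $C_{r,\ell}$-block of $w$ opposite the $C_{r,\ell}$-block of $u$. Then non-unique blocks face nearly aligned copies of themselves and can be dropped. The bad event implies $\sum_\ell \ed(w_\ell,R_\ell)<d'$ over unique $\ell$, where $R_\ell$ is $u_\ell$ extended or trimmed by $O(d')$ boundary bits. Only here does your idea work: enumerate the message pairs for unique $\ell$ (pairwise independence makes $w_\ell$ uniform given $u_\ell$) together with the boundary bits. The exponent is about $s\bigl((1-2\gamma')n_1+O(d')-n_1\bigr)+O\bigl(d'\log(tn_1/d')\bigr)$, which is negative because $d'=\gamma' n_1/10$ and $\gamma' s=\Theta(\log(1/\gamma'))$.

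(b) For every other offset, including whole-block shifts other than the code-aligned one, all nonzero blocks must be enumerated, and constraints must be harvested from \emph{every} block. Non-unique blocks are compared with blocks of other codes or with a nontrivial self-shift of themselves. Zero blocks of $w$ force the facing part of $u$ to be nearly zero. Processing the comparisons in order along the chains they form, each pins about $n_1$ fresh uniform bits, up to a loss of one block per chain. This gives a margin of order $\gamma' n_1$ per free block, hence at least $\gamma' s n_1$, over the $(1/2-\gamma')n_1$ bits of message entropy per free block. That margin must absorb the partially covered end block of $u$, the edit-ball size, and the enumeration of patterns and alignments.

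This case split and its bookkeeping are where the stated $t$, $s$ and $d'$ are actually used. Without it, your concluding per-configuration bound $2^{-\Omega(n_1)}$ is unsupported.
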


\begin{lemma}[{\cite[Lemma~14]{cheng2023linear}}]\label{lem:prop2-rate-decode}
Let $C_{\mathrm{out}}$ be an $[n,K,d]_q$ Reed--Solomon code of relative distance $\delta$, and let $\mathcal C_r=(C_{r,1},\dots,C_{r,n})$ be a family of binary inner codes, each of codeword length $n_1$ and rate $1/2-\gamma'$. Suppose that $\mathcal C_r$ satisfies \Cref{prop:insd} with parameters $(s,t,d')$. Then the resulting concatenated binary code $C(\cdot;r)$ has rate at least $1/2-\gamma$, relative distance $\Omega\bigl(\gamma^3/\log(1/\gamma)\bigr)$, and admits a polynomial-time decoder that corrects an $\eta$-fraction of edit errors, where $\eta=\Omega\bigl(\gamma^3/\log(1/\gamma)\bigr)$.
\end{lemma}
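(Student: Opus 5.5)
The plan is to prove the three claims separately, in the order rate, decoder, distance, with the decoder carrying essentially all the work. Throughout I take $\gamma' = \Theta(\gamma)$ and $\delta = \gamma/2$.

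\emph{Rate.} Each outer symbol in $\mathbb F_q$ is written as a $k_1$-bit inner message, so I require $k_1 \ge \log q = \Theta(\log\log n)$ and hence $n_1 = \Theta(\log q)$. The concatenated rate is then $(1-\delta)(1/2-\gamma')$. Choosing $\gamma'$ a small constant multiple of $\gamma$ makes this at least $1/2-\gamma$. With this choice the parameters of \Cref{lem:random-inner-property} are constants in $\gamma$:
\[
t=\Theta\bigl(\log(1/\gamma)/\gamma^2\bigr),\qquad s=\delta t/4,\qquad d'=\gamma' n_1/10 .
\]

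\emph{Decoder.} The decoder works window by window and then corrects symbol errors with the outer code.
\begin{enumerate}
\item Tile the outer indices into consecutive groups of $t$ blocks. For each group starting at $i$ and each candidate substring $w'$ of the received word of length about $t n_1$, brute-force over all $w \in C_{r,i}\circ\cdots\circ C_{r,i+t-1}$. There are $2^{k_1 t} = (\log n)^{O(t)} = \poly(n)$ such $w$, so this is polynomial time.
\item Use a dynamic program over cut points of the received word to choose a monotone assignment of substrings to groups that minimizes the total edit distance.
\item Read off the outer symbols from each chosen $w$, mark groups whose cost exceeds $d'/2$ as erasures, and run unique Reed--Solomon error/erasure decoding.
\end{enumerate}

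\emph{Correctness.} I charge the at most $\eta N$ edits to groups. A group is bad if its true image has received at least $d'/4$ edits; there are at most $4\eta N/d' = O(\eta n/\gamma')$ bad groups, and each contributes at most $t$ outer errors. For a good group, suppose the dynamic program picked a window $w'$ within edit distance $< d'/2$ of some candidate $w$ whose symbols disagree with the truth. Then $w'$ is a substring of the true image of at most $t+1$ consecutive blocks $u$, shifted by at most $d'/4$ edits. By the triangle inequality $\ed(w, \cdot) < d'$ against a substring of the genuine concatenation $u$. \Cref{prop:insd} then forces $w$ and $u$ to have fewer than $s$ unique blocks. So all but at most $s$ blocks of $w$ are either $0^{n_1}$ or coincide with the genuine block of the same inner code in $u$, and those blocks decode correctly. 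A good group therefore costs at most $s = \delta t/4$ symbol errors, which is a $\delta/4$ fraction overall.

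The total weighted error count is at most
\[
\frac{\delta n}{4} + O\!\left(\frac{\eta n t}{\gamma'}\right),
\]
which is below $\delta n/2$ provided $\eta = c\,\delta\gamma'/t = \Omega\bigl(\gamma^3/\log(1/\gamma)\bigr)$. This yields the decoding radius.

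\emph{Distance and main obstacle.} The relative distance $\Omega(\gamma^3/\log(1/\gamma))$ follows from unique decodability at radius $\eta$: two codewords within edit distance $\eta N$ of each other would both be decoded from the same received word. The step I expect to be hardest is the middle argument that the dynamic program's choice is consistent with the true alignment. One has to show that a good group cannot be spuriously matched to a window straddling the true boundaries in a way that corrupts more than $s$ symbols, while letting the window boundaries drift by the accumulated edits. Handling this drift is exactly why \Cref{prop:insd} is stated for $t+1$ blocks of $u$ against $t$ blocks of $w$. I would first try to make the charging rigorous by comparing the dynamic program's cost with the cost of the true partition, which is bounded by the number of edits.
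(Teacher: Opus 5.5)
The paper gives no proof of this lemma; it imports it from \cite{cheng2023linear}. Your argument therefore has to stand on its own, and its central step is missing.

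\textbf{The missing alignment argument.} Your notion of a good group counts the edits landing on the \emph{true} image of that group. But the correctness argument is applied to the window the dynamic program \emph{chose} for that group, and that window can lie anywhere in the received word. Nothing you prove places it near the true image. So two claims are unjustified:
\begin{itemize}
\item that ``$w'$ is within $d'/4$ edits of a substring of $t+1$ consecutive true blocks'';
\item that those blocks $u$ carry the indices $i,\dots,i+t-1$ of $w$.
\end{itemize}
Without index overlap, \Cref{prop:insd} only tells you that $w$ and $u$ each have fewer than $s$ nonzero blocks. That says nothing about the true symbols at positions $i,\dots,i+t-1$.

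Worse, $0^{n_1}$ lies in every (linear) inner code and is never unique. Hence the property gives no positional information at all over stretches where the Reed--Solomon codeword vanishes, and up to $n-d$ coordinates can vanish. You need a real argument that the chosen cut points track the true block boundaries. For example, you could show that cumulative drift of the cut points is paid for by DP cost, and that misplacing a group over a region with at least $s$ nonzero true blocks costs $\Omega(d')$. This is the substance of the lemma, not a technicality.

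\textbf{Zero blocks are miscounted.} Even for a correctly aligned window, a block $w_{i'}=0^{n_1}$ is not unique, yet it decodes to the symbol $0$. That is wrong whenever the true block is nonzero. You must also use that $u$ has fewer than $s$ unique blocks. Every index with $w_{i'}\neq u_{i'}$ makes one of the two blocks unique, so an aligned group can carry up to $2s-2$ errors. Summed over groups this is nearly $\delta n/2$, leaving almost no slack in the unique-decoding budget $2e+(\text{erasures})<\delta n$. The constants must be redone, for instance by invoking \Cref{lem:random-inner-property} with a smaller $\delta$.

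\textbf{The final arithmetic is off by a factor of $\gamma$.} You have $t=10\log(1/\gamma')/(\gamma'\delta)$, and the rate forces $\delta,\gamma'=\Theta(\gamma)$. Then
\[
\eta=c\,\delta\gamma'/t=\Theta\bigl(\delta^2\gamma'^2/\log(1/\gamma')\bigr)=\Theta\bigl(\gamma^4/\log(1/\gamma)\bigr),
\]
not $\Omega(\gamma^3/\log(1/\gamma))$. Your distance bound is derived from this radius, so it inherits the same loss. The waste comes from charging all $t$ symbols of a bad group to $\Theta(d')$ edits, which loses a factor $t/s=4/\delta$. To reach the stated bound you need a charging in which $\Theta(d')$ edits cause only $O(s)$ symbol errors; note that $\delta\,(d'/n_1)/s=\Theta(\gamma^3/\log(1/\gamma))$.

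\textbf{What is fine.} The rate calculation and the polynomial running time are correct. One small correction there: a length-$n$ Reed--Solomon code needs $q\ge n$, so the brute force ranges over $q^t=n^{O(t)}$ candidates, not $(\log n)^{O(t)}$. This is still polynomial because $t$ is a constant.
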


In particular, when $r$ is chosen uniformly at random, the resulting code is decodable with probability at least $1-1/\poly(n)$ over the choice of the inner codes. This is already sufficient for existential arguments and derandomization-based constructions. However, it is not strong enough for our later pseudorandom coding application, where we need the decoding guarantee to hold with overwhelmingly high probability over the encoding randomness. To overcome this issue, we introduce an additional layer of concatenation. The next subsection explains how this extra layer amplifies the robustness guarantee while preserving the desired rate.
\subsection{Amplification of Robustness}
We now explain how to amplify the success probability of the seeded decoding guarantee from $1-1/\poly(n)$ (or even from an arbitrary fixed constant bounded away from $1$) to $1-\negl(n)$. The main idea is to introduce one additional layer of concatenation. For clarity and greater generality, we present this amplification step over an arbitrary alphabet $\Sigma$ of size $q$. Our eventual application is to the binary setting, which is then obtained as a direct specialization.

Before stating the main amplification theorem, we first prove a lemma that captures the key rejection property needed in the analysis.
\begin{lemma}\label{lem:random-reject-robust}
Let $\Sigma$ be an alphabet of size $q\ge 2$, and let $C\subseteq \Sigma^n$ be a code with minimum edit distance larger than $2\delta n$. Fix constants $b,r_1\in(0,1)$, and suppose that
\begin{equation}\label{eq:param-cond-1}
2r_1 \le 1-2b
\end{equation}
and
\begin{equation}\label{eq:param-cond-2}
2H_2(r_1\delta)+r_1\delta \log_2 q < b\delta \log_2 q.
\end{equation}
Let $\dec$ be the threshold decoder that outputs a codeword $c\in C$ iff
$\ed(y,c)\le r_1\delta n$,
and otherwise outputs $\bot$. Fix an arbitrary word $x\in \Sigma^n$, fix a set $S\subseteq [n]$ of size
\[
|S|=m:=\lfloor b\delta n\rfloor,
\]
and let $Y\in \Sigma^n$ be obtained from $x$ by replacing the symbols in $S$ independently and uniformly from $\Sigma$.

Then, with probability at least $1-2^{-\Omega(n)}$, the following stronger statement holds:
for every word $Y'$ satisfying $\ed(Y,Y')\le r_1\delta n$,
we have $\dec(Y')=\bot$.
That is,
\[
\Pr\Bigl[\exists Y' \text{ with } \ed(Y,Y')\le r_1\delta n \text{ and } \dec(Y')\neq \bot\Bigr]
\le 2^{-\Omega(n)}.
\]
\end{lemma}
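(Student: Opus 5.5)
Since $\dec$ is a threshold decoder, the bad event is that some codeword $c \in C$ satisfies $\ed(Y', c) \le r_1\delta n$ for some $Y'$ with $\ed(Y, Y') \le r_1\delta n$. The plan has three steps. First, reduce to a statement about $Y$ alone. Second, show that at most one codeword can be relevant. Third, bound the probability that the random symbols land close to that single codeword.

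\textbf{Step 1.} By the triangle inequality for edit distance, the bad event implies $\ed(Y, c) \le 2r_1\delta n$ for some $c \in C$. So it suffices to show
\[
\Pr\bigl[\exists c \in C : \ed(Y, c) \le 2r_1\delta n\bigr] \le 2^{-\Omega(n)}.
\]

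\textbf{Step 2.} Reduce to a single codeword, using that $C$ has minimum edit distance larger than $2\delta n$. Since $Y$ agrees with $x$ outside $S$, and changing the $m$ symbols in $S$ costs at most $2m$ insertion/deletion operations, we have $\ed(x, Y) \le 2m \le 2b\delta n$. So any relevant $c$ satisfies
\[
\ed(x, c) \le 2b\delta n + 2r_1\delta n \le \delta n
\]
whenever $2(b + r_1) \le 1$, which follows from \eqref{eq:param-cond-1} (together with $b \le 1/2$, up to adjusting constants). Two distinct codewords both within $\delta n$ of $x$ would be at edit distance at most $2\delta n$, a contradiction. Hence at most one codeword $c^*$, determined by $x$ alone, can be relevant. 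This is fixed before the randomness of $Y$ is sampled.

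\textbf{Step 3.} Bound $\Pr[\ed(Y, c^*) \le 2r_1\delta n]$ by a counting/union-bound argument.
\begin{itemize}
\item An alignment of $Y$ with $c^*$ using at most $2r_1\delta n$ insertions and deletions is specified by the positions of deleted symbols of $Y$ and inserted symbols of $c^*$. There are at most $\binom{n}{r_1\delta n}^2 \approx 2^{2H_2(r_1\delta)n}$ such patterns.
\item Given a pattern, every non-deleted symbol of $Y$ must equal a determined symbol of $c^*$. At most $r_1\delta n$ positions of $S$ are deleted, so at least $m - r_1\delta n$ of the independent uniform symbols in $S$ are forced to specific values. This has probability at most $q^{-(m - r_1\delta n)}$.
\item The union bound gives
\[
2^{2H_2(r_1\delta)n} \cdot q^{-(b\delta - r_1\delta)n + 1} = 2^{-\Omega(n)},
\]
by exactly \eqref{eq:param-cond-2}.
\end{itemize}

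The main obstacle is making Step 3 rigorous. Deleted symbols of $Y$ may be inserted symbols of $c^*$ only in a balanced way, so the alignment must be parametrized correctly, with both strings of length $n$ and the deletion count at most $r_1\delta n$ on each side. The matched positions must also map injectively, so that forcing them really fixes independent coordinates.

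Step 2 also needs care. If \eqref{eq:param-cond-1} is meant to give $2b + 2r_1 \le 1$ only loosely, I would instead handle multiple candidate codewords by noting that the candidate set is determined by $x$ and has size at most $2^{o(n)}$, or simply use the unique-decoding radius argument with the constant adjusted. Either way, the final bound comes from applying the single-codeword estimate with a union bound.
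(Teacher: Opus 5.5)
Your proposal is correct and follows essentially the same route as the paper. The triangle inequality reduces to $\ed(Y,c)\le 2r_1\delta n$. The condition $2r_1\le 1-2b$ plus the minimum distance then leaves at most one relevant codeword (the paper phrases this as: every word agreeing with $x$ off $S$ lies in the $\delta n$-ball of any relevant codeword, and these balls are disjoint). The final bound compares $\sum_{j\le r_1\delta n}\binom{n}{j}^2 q^{j}$ with $q^{m}$ via the entropy condition; the paper writes this as $|\mathrm{Ball}_E(c,2r_1\delta)|/q^{m}$ rather than as your union bound over matchings, with identical exponents.

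Note that $2r_1\le 1-2b$ is literally $2(b+r_1)\le 1$. So your Step 2 goes through exactly as written, with no adjustment of constants and no need for the fallback.
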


\begin{proof}
Let $B := \{y \in\Sigma^n: y_i = x_i$ for all $i\notin S \}$. Then $Y$ is uniform over $B$, and $|B| = q^m$. Define 
\[
\mathrm{Ball}_E(c,r)
:=
\{y\in\Sigma^n : \ed(c,y)\le rn\}.
\]
Let
\[
C_1:=\{c\in C: B\cap \mathrm{Ball}_E(c,r_1\delta)\neq \emptyset\}.
\]
\begin{claim}
\[
\bigl|\mathrm{Ball}_E(c,r)\bigr|
\le
\poly(n)\cdot 2^{ n\left(2H_2(r/2)+\frac r2\log_2 q\right)}
\]
\end{claim}
\begin{proof}
For a fixed $j$, the number of strings obtained from $c$ by deleting $j$ positions is at most $\binom{n}{j}$, and each resulting string has length $n-j$. From any fixed string $z \in \Sigma^{n-j}$, the number of length-$n$ strings obtained by inserting $j$ symbols is at most $\binom{n}{j}q^j$. Summing over all $0\le j\le \lfloor rn/2\rfloor$ yields
\[
\begin{aligned}
\bigl|\mathrm{Ball}_E(c,r)\bigr|& \le \sum_{j=0}^{\lfloor rn/2\rfloor}\binom{n}{j}^2 q^j\\
&\le \poly(n)\cdot 2^{ n\left(2H_2(r/2)+\frac r2\log_2 q\right)}.\\
\end{aligned}
\]
\end{proof}

\begin{claim}
Fix a codeword $c\in C$. Suppose there exists a word $c'\in B$ such that $\ed(c,c')\le r_1\delta n$. Then
\[
\bigl|B\cap \mathrm{Ball}_E(c,\delta)\bigr| \ge \sum_{i=0}^{\lfloor b\delta n\rfloor}\binom{m}{i}(q-1)^i\geq 2^{b\delta n\log_2 q-O(1)}.
\]
\end{claim}
\begin{proof}
By Equation~\eqref{eq:param-cond-1}, every word in $B$ obtained from $c'$ by changing at most $\lfloor b\delta n\rfloor$ free coordinates lies in $\mathrm{Ball}_E(c,\delta)$.
\end{proof}
Therefore
\[
\frac{|\mathrm{Ball}_E(c,2r_1\delta)|} {\bigl|B\cap \mathrm{Ball}_E(c,\delta)\bigr|} \le \poly(n)\cdot
2^{n\left(2H_2(r_1\delta)+r_1\delta\log_2 q-b\delta\log_2 q\right)+O(1)}.
\]
By Equation~\eqref{eq:param-cond-2}, the right-hand side is at most $2^{-\Omega(n)}$.

Now, if $c_1\neq c_2$ are distinct codewords, then
\[
\bigl(B\cap \mathrm{Ball}_E(c_1,\delta)\bigr) \cap \bigl(B\cap \mathrm{Ball}_E(c_2,\delta)\bigr) =\emptyset,
\]
Therefore
\[
|B| \ge \sum_{c\in C_1}\bigl|B\cap \mathrm{Ball}_E(c,\delta)\bigr|,
\]
and thus
\[
\begin{aligned}
\Pr\bigl[\exists c\in C:\ \ed(Y,c)\le 2r_1\delta n\bigr] &\le \frac{\sum_{c\in C_1}|B\cap \mathrm{Ball}_E(c,2r_1\delta)|}{|B|} \\&\le \frac{\sum_{c\in C_1}|\mathrm{Ball}_E(c,2r_1\delta)|}{\sum_{c\in C_1}\bigl|B\cap \mathrm{Ball}_E(c,\delta)\bigr|} \le 2^{-\Omega(n)}.
\end{aligned}
\]
Fix such a $Y$, and let $Y'$ be any word with $\ed(Y,Y')\le r_1\delta n$. Then for every $c\in C$, by the triangle inequality, $\ed(Y',c)\ge \ed(Y,c)-\ed(Y,Y')>2r_1\delta n-r_1\delta n=r_1\delta n$.
Thus $Y'$ is outside the decoding radius of every codeword, and therefore $\dec(Y')=\bot$.

This proves that, with probability at least $1-2^{-\Omega(n)}$ over $Y$, every further edit corruption of size at most $r_1\delta n$ is still rejected by the decoder.
\end{proof}

We are now ready to state the main amplification theorem. It shows that any seeded insertion-deletion code satisfying the properties above can be boosted, via one additional layer of concatenation, so that the decoding guarantee holds with overwhelming probability over the seed.

\begin{theorem}\label{thm:amplification}
Let $\Sigma$ be an alphabet of size $q\ge 2$. Suppose we have an encoding map $C:\Sigma^K\times \Sigma^t\to \Sigma^N$, such that the following hold.
\begin{enumerate}
    \item For every message $m\in\Sigma^K$,
    \[
    C(m;U_t)=U_N.
    \]
    \item For every seed $r\in\Sigma^t$, the code $\{C(m;r):m\in\Sigma^K\}\subseteq \Sigma^N$    has minimum edit distance larger than $2\delta N$.
    \item There exists a polynomial-time decoder $\dec_{\mathrm{in}}$ and a constant $r_1\in(0,1)$ such that, with probability at least $1-\rho$ over $r\sim U_t$, the following holds simultaneously for every message $m\in\Sigma^K$ and every word $z\in\Sigma^N$:
    \[
    \dec_{\mathrm{in}}(r,z)=m \quad\Longleftrightarrow\quad \ed\bigl(z,C(m;r)\bigr)\le r_1\delta N.
    \]
    \item There exists $b\in(0,1)$ such that
    \begin{equation}\label{eq:base-param-1}
    2r_1\le 1-2b,
    \end{equation}
    \begin{equation}\label{eq:base-param-2}
    2H_2(r_1\delta)+r_1\delta\log_2 q<b\delta\log_2 q,
    \end{equation}
    and
    \begin{equation}\label{eq:base-param-3}
    3b+\frac{3}{2}r_1<\frac12.
    \end{equation}
\end{enumerate}
Let $E_{\mathrm{out}}:(\Sigma^K)^{M'}\to (\Sigma^K)^{N'}$ be an outer code of relative Hamming distance $\Delta_{\mathrm{out}}>0$, equipped with a polynomial-time error-erasure decoder. Let
$\tau>0$ satisfy
$\rho+\frac{6\tau}{r_1}<\Delta_{\mathrm{out}}$.
Assume also that $N =  \omega(\log N')$. Then there is a concatenated encoding map
\[
C':(\Sigma^K)^{M'}\times (\Sigma^t)^{N'}\to \Sigma^{N'N}
\]
such that:

\begin{enumerate}
    \item For every message $m'\in(\Sigma^K)^{M'}$,
    \[
    C'(m';U_{N't})=U_{N'N}.
    \]

    \item There exists a polynomial-time decoder $\dec_{\mathrm{out}}$ such that, for every message $m'\in(\Sigma^K)^{M'}$, with probability at least $1-2^{-\Omega(N')}$ over $r'\sim U_{N't}$, the following holds: for every received word $y$,
    \[
    \ed\bigl(y,C'(m';r')\bigr)\le \tau\delta N'N
    \quad\Longrightarrow\quad
    \dec_{\mathrm{out}}(r',y)=m'.
    \]
\end{enumerate}

\end{theorem}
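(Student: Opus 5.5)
We will build $C'$ by plain concatenation with independent seeds. Given $m'\in(\Sigma^K)^{M'}$, compute $E_{\mathrm{out}}(m')=(a_1,\dots,a_{N'})\in(\Sigma^K)^{N'}$. Parse $r'=(r_1',\dots,r_{N'}')$ into $N'$ independent seeds, and output $C'(m';r')=C(a_1;r_1')\circ\cdots\circ C(a_{N'};r_{N'}')$. Uniformity (item 1) is then immediate: by hypothesis 1 each block is uniform on $\Sigma^N$ for every fixed $a_j$, and the seeds are independent, so the output is $U_{N'N}$.

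\textbf{Decoder.} The decoder mimics the Schulman--Zuckerman / Guruswami-style block recovery. For each $j\in[N']$ it scans the received word $y$ over all windows $y_{[l:l']}$ with length in $[(1-r_1\delta)N,(1+r_1\delta)N]$. It then runs $\dec_{\mathrm{in}}(r_j',\cdot)$ on them. If some window is accepted, it records the symbol $\hat a_j$ output; if two distinct values are produced, or none, it records an erasure. Finally it runs the error-erasure decoder of $E_{\mathrm{out}}$ on $(\hat a_1,\dots,\hat a_{N'})$. Scanning every $j$ against every window costs only $\poly(N'N)$.

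\textbf{Analysis.} As in the proof of \Cref{thm:const-reduction}, charge edits to blocks. Call block $j$ \emph{edit-good} if its image $y_j'$ has $\ed(y_j,y_j')\le r_1\delta N$. By Markov, at most $\tau\delta N'N/(r_1\delta N)=\tau N'/r_1$ blocks are not edit-good. Call $j$ \emph{seed-good} if the event of hypothesis 3 holds for $r_j'$. These events are independent with probability $\ge1-\rho$, so by Chernoff at most $(\rho+\tau/r_1)N'$ seeds are bad except with probability $2^{-\Omega(N')}$. Hypothesis 3 also covers a subtle point: since the event holds for every $m$ and $z$ simultaneously, the channel's dependence on the codeword does not matter. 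For an edit-good, seed-good block, the correct window $y_j'$ is accepted with output $a_j$. So the only remaining failures are false acceptances: a seed-good $j$ for which some other window $w$ is accepted with a wrong symbol, or with the right symbol from a misaligned window, which is harmless.

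\textbf{Main obstacle: ruling out false acceptances.} This is the step I expect to be hardest, and \Cref{lem:random-reject-robust} is designed for it. A window accepted by $\dec_{\mathrm{in}}(r_j',\cdot)$ must be within $r_1\delta N$ edits of $C(\cdot;r_j')$. Any window of length about $N$ overlaps at most two blocks of the original codeword, after mapping back through the channel's alignment.

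\begin{itemize}
\item If the window overlaps a block $k\ne j$ on at least a $b\delta$ fraction, condition on everything except those $\lfloor b\delta N\rfloor$ symbols of block $k$. By hypothesis 1, with $r_k'$ independent of $r_j'$, those symbols are uniform. The lemma, applied to the code $C(\cdot;r_j')$ (hypothesis 2 and \eqref{eq:base-param-1}, \eqref{eq:base-param-2}), gives rejection of every $Y'$ within $r_1\delta N$ edits, except with probability $2^{-\Omega(N)}$.
\item Union bound over $\poly(N'N)$ windows and pairs $(j,k)$. This is where $N=\omega(\log N')$ is needed.
\item The remaining case is a window mostly inside block $j$ itself but misaligned. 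By the minimum distance $>2\delta N$ and the triangle inequality, it can only decode to $a_j$ unless it deviates by more than $(1-2r_1-2b)\delta N$ edits. That forces the window into the non-edit-good budget.
\end{itemize}

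Constraint \eqref{eq:base-param-3} is what makes this bookkeeping close. It should give that each non-edit-good block causes at most a constant number ($\le 3$) of corrupted or erased outer symbols. The total is then $2\cdot\#\text{errors}+\#\text{erasures}\le(\rho+6\tau/r_1)N'<\Delta_{\mathrm{out}}N'$, so the outer decoder returns $m'$. I expect carefully pinning down the ``mostly inside block $j$'' case against adversarial window choices to need the most care.
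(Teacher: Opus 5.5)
Your encoder (plain concatenation with independent seeds) and the uniformity argument match the paper. Your decoder is different, though, and the step you flag as the main obstacle is a genuine gap rather than a detail.

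You test every index $j$ against every window of the whole received word $y$. Nothing then stops one corrupted stretch of $y$ from supplying falsely accepted windows for many indices, and your claim that each non-edit-good block costs at most $3$ outer symbols is false as stated. The guarantee is for every $y$ once $r'$ is fixed, so the adversary knows the seeds. It may insert $T$ fake codewords $C(m_1;r'_{j_1})\circ\cdots\circ C(m_T;r'_{j_T})$ inside a single block. All those edits are charged to that one block, yet $T$ outer positions become erased or wrong.

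The accounting therefore has to be per edit: every false acceptance must be paid for by its own \emph{disjoint} set of $\Omega(r_1\delta N)$ charged edits. A global scan gives no such disjointness. Overlapping windows inside one adversarial stretch can be accepted for different indices while sharing the same edits. The hypotheses say nothing about how independently seeded codes relate beyond uniformity. For instance, the expected number of common codewords of two such codes is $q^{2K-N}$, which is huge once the inner rate exceeds $1/2$, so even a single length-$N$ word accepted for several indices is not excluded. Constraint \eqref{eq:base-param-3} does not supply this packing bound. Any substitute cross-seed counting argument would be new work, with no guarantee of reaching the margin $\rho+6\tau/r_1<\Delta_{\mathrm{out}}$.

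The missing idea is the paper's sequential decoder, which makes edits non-reusable by construction. For $i=1,\dots,N'$ it takes the leftmost length-$N$ window $J_i$ of the \emph{current} word accepted by $\dec_{\mathrm{in}}(r_i,\cdot)$. It then deletes the middle of $J_i$, keeping only $L=(b\delta+r_1\delta/2)N$ symbols at each end, so later searches never see those symbols again.

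\Cref{claim:wrong-costs-edits} shows that a wrong acceptance forces at least $r_1\delta N/2$ charged edits into the deleted middle, giving at most $2\tau N'/r_1$ wrong steps. The reasoning is:
\begin{itemize}
\item at most $(1-\delta/2)N$ of the window's transmitted symbols can come from block $i$, by the distance hypothesis;
\item at most $b\delta N$ can come from other blocks, by \Cref{lem:random-reject-robust};
\item \eqref{eq:base-param-3} is exactly what makes these two bounds incompatible with fewer edits.
\end{itemize}
\Cref{claim:preserve-future-good} then shows that correct steps never delete symbols of later locally good blocks.

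A few smaller points also need fixing:
\begin{itemize}
\item Hypothesis~3 only covers $z\in\Sigma^N$, so your variable-length windows are not covered. Use length-$N$ windows and call block $j$ good when $e_j\le r_1\delta N/2$; then some length-$N$ window is within $2e_j$ of $x_j$.
\item A window of length about $N$ can span more than two blocks once deletions occur.
\item Since $y$ may depend on $r'$, the union bound behind random rejection must range over intervals of the transmitted word, not over windows of $y$. The lemma's ``every $Y'$'' clause then absorbs the adversary.
\end{itemize}
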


We now describe the sequential decoding procedure. For convenience, define $L:=\left(b\delta+\frac{r_1\delta}{2}\right)N$. The decoder maintains a current word $Y^{(i)}$, where initially $Y^{(1)}:=y$. For $i=1,2,\cdots,N'$, given the current word $Y^{(i)}$, the decoder proceeds as follows.

\begin{enumerate}
    \item Slide a window of length $N$ from left to right over $Y^{(i)}$. Equivalently, for each contiguous interval $J\subseteq Y^{(i)}$ of length exactly $N$, run the inner decoder $\dec_{\mathrm{in}}(r_i,J)$.

    \item If there exists such an interval $J\subseteq Y^{(i)}$ with $\dec_{\inn}(r_i,J)\neq \bot$, let $J_i$ be the leftmost such interval, and set $\widehat u_i:=\dec_{\inn}(r_i,J_i)$. We then declare block $i$ to be successfully decoded with value $\widehat u_i$.

    \item After a successful decoding, update the current word by keeping only the first $L$ symbols and the last $L$ symbols of $J_i$, deleting the middle part of $J_i$, and retaining all symbols outside $J_i$. The resulting word is denoted by $Y^{(i+1)}$.

    \item If no interval $J\subseteq Y^{(i)}$ of length $N$ satisfies $\dec_{\inn}(r_i,J)\neq \bot$, then set $\widehat u_i:=\bot$ and $Y^{(i+1)}:=Y^{(i)}$.
\end{enumerate}

At the end, this produces
\[
(\widehat u_1,\cdots,\widehat u_{N'})\in(\Sigma^K\cup\{\bot\})^{N'},
\]
which we feed to the outer decoder.

\begin{proof}
We state the condition with the stronger margin $\rho+\frac{6\tau}{r_1}<\Delta_{\mathrm{out}}$ to absorb the concentration slack and the union-bound losses appearing in the proof. Let $(u_1,\cdots,u_{N'})=E_{\mathrm{out}}(m')$, $x_i:=C(u_i;r_i)\in\Sigma^N$, and $x:=x_1\circ\cdots\circ x_{N'}=C'(m';r')$. Assume \[\ed(y,x)\le \tau\delta N'N.\] Fix an edit sequence from $x$ to $y$. Charge each edit to one of the $N'$ blocks, charging an insertion between two consecutive blocks to the earlier block. Let $e_i$ be the total number of edits charged to block $i$. Then
\[
\sum_{i=1}^{N'} e_i\le \tau\delta N'N.
\]
Call block $i$ \emph{locally good} if
\[
e_i\le \frac{r_1\delta N}{2}.
\]
Then the number of blocks that are not locally good is at most $\frac{\tau\delta N'N}{(r_1\delta N)/2} = \frac{2\tau}{r_1}N'$.
Hence at least $\left(1-\frac{2\tau}{r_1}\right)N'$ blocks are locally good. Call a seed $r_i$ \emph{good} if it satisfies item~(3). Since the $r_i$'s are independent and each is good with probability at least $1-\rho$, Hoeffding's inequality implies that, with probability at least $1-2^{-\Omega(N')}$, at most
$(\rho+\varepsilon)N'$ seeds are bad, where $\varepsilon>0$ is an arbitrarily small fixed constant. Fix such a realization of $r'$.

\begin{claim}[Good blocks admit a correct window]\label{claim:good-window}
Suppose block $i$ is locally good, and suppose the portion of the received word corresponding to block $i$ is still present in $Y^{(i)}$. If $r_i$ is good, then there exists a contiguous interval $J\subseteq Y^{(i)}$ of length exactly $N$ such that
\[
\dec_{\mathrm{in}}(r_i,J)=u_i.
\]
In particular, the $i$-th step of the sequential decoder has at least one successful window.
\end{claim}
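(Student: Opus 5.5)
The idea is to locate, inside $Y^{(i)}$, the image of block $x_i$, and then trim or extend it to a window of length exactly $N$ that stays within edit distance $r_1\delta N$ of $x_i$. Item (3) then applies directly: since $r_i$ is good, $\dec_{\mathrm{in}}(r_i,J)=u_i$ holds exactly when $\ed(J,C(u_i;r_i))\le r_1\delta N$.

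First, the fixed edit transcript from $x$ to $y$ induces a decomposition $y=y_1\circ\cdots\circ y_{N'}$. Here $y_i$ is the image of $x_i$, and because block $i$ is locally good, $\ed(x_i,y_i)=e_i\le r_1\delta N/2$. In particular $\bigl||y_i|-N\bigr|\le r_1\delta N/2$. The hypothesis that block $i$ is ``still present'' in $Y^{(i)}$ means that $y_i$ survives as a contiguous substring of $Y^{(i)}$. Earlier trimming steps only delete middle parts of earlier windows and retain their $L$-length ends, so the surrounding symbols of $y_i$ in $Y^{(i)}$ may differ from those in $y$, but $y_i$ itself is intact.

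Next I construct $J$, splitting into two cases.

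\textbf{Case $|y_i|\ge N$.} Take $J$ to be the first $N$ symbols of $y_i$. Deleting $|y_i|-N$ symbols costs at most that many edits, so $\ed(J,x_i)\le e_i+(|y_i|-N)\le r_1\delta N$.

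\textbf{Case $|y_i|<N$.} Extend $y_i$ by neighbouring symbols of $Y^{(i)}$ (to the right, or to the left if $y_i$ sits near the end) until the length reaches $N$. This adds at most $N-|y_i|\le e_i$ insertions, giving $\ed(J,x_i)\le 2e_i\le r_1\delta N$. For this I need $Y^{(i)}$ to contain at least $r_1\delta N/2$ symbols adjacent to $y_i$ on one side. That holds whenever $|Y^{(i)}|\ge N$, which I would argue from the fact that the total length is about $N'N$ and each step removes at most $N-2L$ symbols. If $N'=1$ and the word is short, the two cases still suffice via the first bound.

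In both cases item (3) yields $\dec_{\mathrm{in}}(r_i,J)=u_i\neq\bot$, so the $i$-th step has at least one successful window.

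The main obstacle is the boundary bookkeeping. The charging convention puts an insertion between blocks onto the earlier block, so $y_i$ must be defined consistently with that convention. I must also check that previous trimmings, which keep $L$ symbols on each side of earlier windows, never cut into $y_i$. The lemma sidesteps the latter by assuming it outright, so I would treat presence of $y_i$ as contiguous in $Y^{(i)}$ as the hypothesis and only handle the padding argument above carefully.
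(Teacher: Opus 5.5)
Your proof is correct and follows the same route as the paper, which simply asserts that some length-$N$ window $J\subseteq Y^{(i)}$ has $\ed(J,x_i)\le 2e_i\le r_1\delta N$ and then applies the exact-threshold property of a good seed; your trim/pad construction fills in that existence step (and you only get $\ed(x_i,y_i)\le e_i$, not equality, since the fixed transcript need not be optimal). The one blemish is your $N'=1$ aside: if $|Y^{(i)}|<N$ there is no length-$N$ window at all, so the claim needs $|Y^{(i)}|\ge N$, and your length count supplies this, since $|Y^{(i)}|\ge N'N(1-\tau\delta)-(N'-1)(N-2L)\ge N$ for $N'\ge 2$ because $\tau<r_1/6$ (a point the paper leaves implicit).
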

\begin{proof}
Since block $i$ is locally good, we have $e_i\le \frac{r_1\delta N}{2}$.
Hence there exists a contiguous interval $J\subseteq Y^{(i)}$ of length exactly $N$ such that $\ed(J,x_i)\le 2e_i\le r_1\delta N$.
Because $r_i$ is good, the exact threshold property of the inner decoder implies $\dec_{\mathrm{in}}(r_i,J)=u_i$.
Thus $J$ is a successful window for block $i$.
\end{proof}

\begin{claim}[Wrong successful decoding consumes many edits]\label{claim:wrong-costs-edits}
Suppose $r_i$ is good. If, during the decoding of block $i$, some length-$N$ interval $J\subseteq Y^{(i)}$ is decoded to an incorrect message, then the middle portion removed in Step~3 contains at least $\frac{r_1\delta N}{2}$
charged edits.
\end{claim}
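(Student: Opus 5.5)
We argue by contradiction, with the parameters $L=(b\delta+\tfrac{r_1}{2}\delta)N$ and conditions \eqref{eq:base-param-1}--\eqref{eq:base-param-3} doing the bookkeeping. Suppose the interval $J\subseteq Y^{(i)}$ of length $N$ satisfies $\dec_{\inn}(r_i,J)=u'\neq u_i$. Suppose also that the middle part $J^{\mathrm{mid}}$ of $J$, obtained by deleting its first $L$ and last $L$ symbols, carries fewer than $\tfrac{r_1\delta N}{2}$ charged edits. Since $r_i$ is good, the exact-threshold property in item~(3) gives $\ed(J,C(u';r_i))\le r_1\delta N$. Because $J^{\mathrm{mid}}$ contains few charged edits, the fixed edit transcript from $x$ to $y$ aligns it with a contiguous substring $w$ of the original codeword $x=x_1\circ\cdots\circ x_{N'}$. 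This alignment satisfies $\ed(J^{\mathrm{mid}},w)<\tfrac{r_1\delta N}{2}$ and $|w|\ge N-2L-\tfrac{r_1\delta N}{2}$. One point needs checking here: earlier decoding steps only delete middle parts of earlier windows and keep their $L$-prefixes and $L$-suffixes. So I would first verify that $J^{\mathrm{mid}}$, apart from possibly meeting such retained remnants in at most $O(L)$ symbols that I absorb into the slack, is a genuine contiguous piece of $y$, and hence inherits an alignment to a contiguous piece of $x$. Dropping the $2L$ boundary symbols costs at most $2L$ edits. Combining these gives
\[
\ed\bigl(C(u';r_i),\,w\bigr)\ \le\ r_1\delta N+2L+\tfrac{r_1\delta N}{2}\ =\ (2b+\tfrac{5}{2}r_1)\delta N
\]
(after extending $w$ to length $N$ where needed).

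I then split into two cases according to where $w$ lies in $x$.

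\textbf{Case 1: $w$ overlaps $x_i$ in at least $N-O(L+r_1\delta N)$ positions.} Then the block charge bound $e_i\le r_1\delta N/2$ of a locally good block, together with the triangle inequality, gives $\ed(C(u';r_i),C(u_i;r_i))<2\delta N$. Condition \eqref{eq:base-param-3}, $3b+\tfrac32 r_1<\tfrac12$, is exactly what is needed to keep the accumulated constant below $2$. This contradicts item~(2), the minimum edit distance of the seed-$r_i$ code.

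\textbf{Case 2: $w$ contains at least $b\delta N$ symbols coming from blocks $x_j$ with $j\neq i$.} These symbols are determined by the independent seeds $r_j$, and by item~(1) each $x_j$ is uniform and independent of $r_i$. Fixing everything except those $\lfloor b\delta N\rfloor$ free coordinates, I apply \Cref{lem:random-reject-robust} to the code $\{C(m;r_i)\}$ with the same $b,r_1$, using \eqref{eq:base-param-1} and \eqref{eq:base-param-2}. The lemma says that, except with probability $2^{-\Omega(N)}$, no word within $r_1\delta N$ of this partially random string is accepted. The margin $2L$ is chosen precisely so that $J$ stays within this radius of a string with $\ge b\delta N$ fresh random symbols, so $J$ would be rejected, contradicting the assumed wrong decoding.

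To make the claim hold deterministically for the fixed $r'$, I would add to the global good event (the one already fixed in the proof) the requirement that the rejection in Case~2 holds for all $i$ and all $\poly(N'N)$ window offsets. A union bound costs $\poly(N'N)\cdot 2^{-\Omega(N)}$, which is negligible since $N=\omega(\log N')$. Note that the wrong window may come from any other block, including earlier blocks, since their seeds are independent of $r_i$ as well.

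The main obstacle is the geometric bookkeeping in the first step. I must show that a low-edit middle portion really aligns to a single contiguous $w$ that lands in one of the two cases with the stated constants, despite the retained $L$-remnants left by earlier steps. I would handle this by charging each remnant to the step that created it and verifying that the total slack, $2L+\tfrac{3}{2}r_1\delta N$, is what the inequality \eqref{eq:base-param-3} budgets for.
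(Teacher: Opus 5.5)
Your Case~1/Case~2 split is the same dichotomy the paper uses. The paper argues that among the at least $N-(2b+\tfrac32 r_1)\delta N$ transmitted symbols in the removed middle, at most $(1-\tfrac{\delta}{2})N$ can come from $x_i$: otherwise $\ed(J,x_i)<\delta N$, and radius $r_1\delta N$ together with minimum distance $>2\delta N$ forces correct decoding. It also argues that at most $b\delta N$ can come from other blocks, since otherwise \cref{lem:random-reject-robust} rejects $J$. Then \eqref{eq:base-param-3} makes the two bounds incompatible.

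The step that fails is the one you flag as the main obstacle: aligning $J^{\mathrm{mid}}$ with a single contiguous substring $w$ of $x$, and absorbing remnants of earlier steps into the slack $2L+\tfrac32 r_1\delta N$. Remnants are not small compared with a window. Take no edits and all previous steps correct. Then $Y^{(i)}$ starts with the $2L$-symbol remnants of blocks $1,\dots,i-1$. A length-$N$ window laid over the last $\approx N/(2L)$ of them has a middle made entirely of pieces of many different earlier blocks. That middle is not a contiguous piece of $y$, so no such $w$ exists. No charging of remnants to the steps that created them fits this into your slack, yet such windows are exactly among those the claim must rule out. For the same reason, your union bound over window offsets of $y$ does not cover the windows of $Y^{(i)}$.

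The fix is to drop contiguity and count the transmitted symbols of $J^{\mathrm{mid}}$ by origin, which is what the paper does. Since $Y^{(i)}$ is a subsequence of $y$ and the edit transcript is monotone, the symbols of $J$ that originate in $x_i$ form a common subsequence of $J$ and $x_i$. So suppose fewer than $b\delta N$ transmitted middle symbols come from other blocks. Then more than $N-(3b+\tfrac32 r_1)\delta N>(1-\tfrac{\delta}{2})N$ come from $x_i$. This gives $\ed(J,x_i)<\delta N$, hence $\ed(C(u';r_i),x_i)<(1+r_1)\delta N<2\delta N$, contradicting item~(2). Otherwise, apply \cref{lem:random-reject-robust} to $J$ itself, with $\lfloor b\delta N\rfloor$ of the other-block symbols as the free coordinates. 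These are uniform relative to $r_i$ by item~(1), whether or not they are contiguous.

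Two smaller corrections. First, Case~1 must not use $e_i\le r_1\delta N/2$. The claim assumes only that $r_i$ is good, not that block $i$ is locally good, and the bound is not needed anyway. Second, the $2L$ margin does not place $J$ within the lemma's radius of anything, since $2L=(2b+r_1)\delta N$ already exceeds $r_1\delta N$. In this claim $L$ enters only through the length $N-2L$ of the middle.
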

\begin{proof}
Suppose otherwise that fewer than $\frac{r_1\delta N}{2}$ charged edits occur in the removed middle interval. Since the removed middle interval has length $N-2L = N-(2b\delta+r_1\delta)N$,
it follows that at least
\[
N-2L-\frac{r_1\delta N}{2}
=
N-\left(2b+\frac{3r_1}{2}\right)\delta N
\]
symbols in this interval come from the transmitted codeword. We bound separately the number of such symbols coming from the true $i$-th block and from other blocks.

First, at most $\left(1-\frac{\delta}{2}\right)N$
of these symbols can come from the true $i$-th block. Indeed, if more than $\left(1-\frac{\delta}{2}\right)N$ symbols came from the $i$-th block, then the interval would have edit distance strictly less than $\delta N$ from $x_i$. Since $r_i$ is good and the code for seed $r_i$ has minimum edit distance greater than $2\delta N$, an interval within distance $<\delta N$ of $x_i$ cannot decode to a wrong message. Since the seeds $r_1,\dots,r_{N'}$ are independent, conditioning on all previous decoding outcomes and on the current seed $r_i$ does not change the fact that any symbols of $J$ originating from blocks other than $i$ are independent of $r_i$; moreover, by the exact-uniformity property of the inner code, those symbols are distributed as uniform random symbols relative to the code determined by $r_i$.

Second, except with probability $2^{-\Omega(N)}$, at most $b\delta N$ symbols in the interval can come from other blocks. Indeed, if more than $b\delta N$ symbols came from other blocks, then the interval would contain at least $b\delta N$ symbols that are independent random symbols relative to the key $r_i$. By \cref{lem:random-reject-robust}, such an interval is decoded to $\bot$ except with probability $2^{-\Omega(N)}$.

Therefore, except with probability $2^{-\Omega(N)}$, the total number of transmitted symbols in the removed interval is at most $\left(1-\frac{\delta}{2}\right)N+b\delta N$.
Comparing this with the lower bound above yields
\[
N-\left(2b+\frac{3r_1}{2}\right)\delta N
\le
\left(1-\frac{\delta}{2}\right)N+b\delta N.
\]
Equivalently,
\[
3b+\frac{3}{2}r_1\ge \frac12,
\]
contradicting \eqref{eq:base-param-3}. This proves the claim.
\end{proof}

\begin{claim}[Successful steps preserve future locally good blocks]\label{claim:preserve-future-good}
Suppose that, during the decoding of block $i$, the decoder finds a successful length-$N$ interval $J_i$, i.e., $\dec_{\inn}(r_i,J_i)\neq \bot$.
Then, except with probability $2^{-\Omega(N)}$, Step~3 deletes no symbol belonging to any later locally good block.
\end{claim}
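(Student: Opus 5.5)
We argue by contradiction, following the same length-counting template as \Cref{claim:wrong-costs-edits}. Suppose that Step~3 of iteration $i$ deletes at least one symbol belonging to some later locally good block $j>i$. The deleted middle part of $J_i$ is a contiguous window of length $N-2L=N-(2b+r_1)\delta N$. Since the retained prefix and suffix of $J_i$ each have length $L=(b\delta+\frac{r_1\delta}{2})N$, a symbol of block $j$ sitting in the deleted middle means that $J_i$ extends at least $L$ symbols into the region occupied by block $j$'s received image, or at least that it overlaps block $j$ deeply. The goal is to show that this forces $J_i$ to contain more than $b\delta N$ symbols that do not originate from block $i$. We then argue that this event has probability $2^{-\Omega(N)}$.

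First I would split into two cases according to whether the portion of $J_i$ coming from blocks other than $i$ has more than $b\delta N$ transmitted symbols. If it does, we argue as in the second bullet of the proof of \Cref{claim:wrong-costs-edits}. The seeds $r_1,\dots,r_{N'}$ are independent, and by item~1 (exact uniformity) each other block is distributed as uniform symbols independent of $r_i$. Conditioning on the channel's fixed edit pattern and on all earlier decoding outcomes, we may apply \Cref{lem:random-reject-robust}. Take $x=C(u_i;r_i)$ restricted appropriately, take $S$ to be (a subset of size $\lfloor b\delta N\rfloor$ of) the foreign coordinates, and view $J_i$ as an edit-perturbation of a string $Y$ of the form in the lemma with $\ed\le r_1\delta N$. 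This forces $\dec_{\inn}(r_i,J_i)=\bot$ except with probability $2^{-\Omega(N)}$, contradicting success.

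Otherwise at most $b\delta N$ symbols of $J_i$ come from other blocks. A symbol of a later block $j$ lies in the deleted middle, yet all foreign symbols of $J_i$ number at most $b\delta N$. Suppose first that there are few insertions locally, so the foreign symbols form a contiguous tail. Then the suffix of $J_i$ of length $L>b\delta N$ would consist of at least $L-b\delta N=\frac{r_1\delta}{2}N$ non-transmitted (inserted) symbols, plus block-$i$ symbols positioned after block-$j$ symbols. The latter is impossible because the edit transcript is monotone. So the block-$j$ symbol must be followed in the suffix only by inserted symbols, or by foreign symbols whose total is bounded. Hence the region after it has more than $\frac{r_1\delta N}{2}$ charged edits, all charged to block $j$ or to blocks between $i$ and $j$. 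Since block $j$ is locally good, it has at most $\frac{r_1\delta N}{2}$ edits. The intermediate blocks lie between $i$ and $j$, so by monotonicity they are entirely inside $J_i$ and count as foreign symbols, which bounds their number.

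The main obstacle is making this monotone-alignment bookkeeping rigorous when the window $J_i$ may start before block $i$'s image, and previous deletions from Step~3 in earlier iterations have shifted things. I would handle this by tracking the transcript restricted to $Y^{(i)}$ and charging all inserted symbols explicitly; the inequality $L>b\delta N+\frac{r_1\delta N}{2}$ is exactly what the definition of $L$ provides. A union bound over the $\le N'$ windows and iterations keeps the failure probability at $2^{-\Omega(N)}\cdot\poly(N')=2^{-\Omega(N)}$, using $N=\omega(\log N')$.
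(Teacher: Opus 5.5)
Your overall strategy matches the paper's. You use \Cref{lem:random-reject-robust} to show that a successful window $J_i$ contains at most $b\delta N$ transmitted symbols from blocks other than $i$. You then use the slack $L=b\delta N+\frac{r_1\delta N}{2}$ to argue that everything foreign sits in the retained prefix or suffix. The paper phrases this as $|x_1|,|x_3|\le L$. Your Case~1 is at the paper's level of rigor (the windows per iteration number $O(N'N)$ rather than $\le N'$, but that is still polynomial). Case~2, however, contains a step that fails as written.

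\textbf{The failing step.} After the block-$j$ symbol at position $p$, the suffix $J_i[p+1..N]$ contains at least about $\frac{r_1\delta N}{2}$ inserted symbols. You conclude these are charged to block $j$ or to blocks between $i$ and $j$, contradicting local goodness of $j$. This is backwards.
\begin{itemize}
\item By monotonicity, blocks between $i$ and $j$ lie \emph{before} $p$.
\item Insertions after $p$ are charged to block $j$ or to later blocks $j+1,j+2,\dots$.
\item Those later blocks need not be locally good, so they can absorb arbitrarily many insertions.
\end{itemize}
Concretely, let $p$ be the last surviving symbol of block $j$, and let block $j+1$ carry a long run of insertions. Then the suffix can consist entirely of insertions charged to $j+1$, and block $j$'s budget is untouched. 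So suffix-only counting yields no contradiction. Your sub-case assumption that the foreign symbols form a contiguous tail is also unjustified, and its complement is never treated.

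\textbf{The fix.} Count on whichever side of $p$ lies inside block $j$'s image.
\begin{itemize}
\item Let $I_j$ be the stretch of $Y^{(i)}$ from the first to the last surviving symbol of block $j$. Assume it is intact, i.e.\ not already cut by an earlier step; such blocks are charged to earlier wrong steps in the paper's accounting.
\item Every insertion inside $I_j$ is charged to $j$. So $I_j$ has at most $\frac{r_1\delta N}{2}$ insertions and at least $N-\frac{r_1\delta N}{2}$ block-$j$ symbols.
\item Since $I_j$ is an interval containing $p$, and $L+1\le p\le N-L$, one of three things holds: $I_j\subseteq J_i$, or $I_j\supseteq J_i[1..p]$, or $I_j\supseteq J_i[p..N]$.
\item In the first case, $J_i$ contains at least $N-\frac{r_1\delta N}{2}>b\delta N$ block-$j$ symbols.
\item In the other two cases, it contains at least $(L+1)-\frac{r_1\delta N}{2}=b\delta N+1$ of them.
\end{itemize}
Either way you are back in Case~1, so the whole claim reduces to random rejection plus this two-sided count.

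This two-sided count is also what the paper's step ``discard at most $\frac{r_1\delta N}{2}$ inserted symbols'' needs. That bound is only valid for insertions charged to a locally good block. Your framing around a specific later locally good block $j$ is the right one; it just needs the count done on both sides of $p$.
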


\begin{proof}
Write
\[
J_i=x_1\circ x_2\circ x_3,
\]
where $x_2$ is the portion of $J_i$ coming from the true $i$-th block, while $x_1$ and $x_3$ come from neighboring blocks (possibly empty).

We claim that, except with probability $2^{-\Omega(N)}$,
$|x_1|\le L$ and $|x_3|\le L$.
Indeed, suppose $|x_1|>L$. Then, after discarding at most $r_1\delta N/2$ inserted symbols, there are still at least
$L-\frac{r_1\delta N}{2}=b\delta N$
symbols in $x_1$ coming from outside the $i$-th block. These symbols are independent random symbols relative to the key $r_i$. By \cref{lem:random-reject-robust}, any length-$N$ interval containing such a set of symbols is decoded to $\bot$ except with probability $2^{-\Omega(N)}$, contradicting the assumption that $J_i$ is successful. Thus $|x_1|\le L$ except with probability $2^{-\Omega(N)}$. The same argument applies to $x_3$.

Now consider the case where $x_2$ does not come from the true $i$-th block. Then the same random-rejection argument applies directly to $x_2$: if $x_2$ came from a locally good block different from $i$, then $J_i$ would contain at least $b\delta N$ symbols that are independent random symbols relative to $r_i$, and hence $J_i$ would decode to $\bot$ except with probability $2^{-\Omega(N)}$. Therefore, except with negligible probability, such an $x_2$ cannot come from a locally good block. By the previous claim, any successful but incorrect decoding step already consumes at least $\frac{r_1\delta N}{2}$ charged edits in the removed interval, so this case cannot destroy more than the budget already accounted for.

Therefore, except with probability $2^{-\Omega(N)}$, the middle part removed in Step~3 lies entirely inside the current block, and no symbol belonging to any later locally good block is deleted.
\end{proof}

By \cref{claim:wrong-costs-edits}, each wrong successful decoding step consumes at least $\frac{r_1\delta N}{2}$ charged edits. Since the total charged edit budget is at most $\tau\delta N'N$, the number of such steps is at most
\[
\frac{\tau\delta N'N}{(r_1\delta N)/2}
=
\frac{2\tau}{r_1}N'.
\]
The only exceptional events arise from applications of \Cref{lem:random-reject-robust}. Therefore, except with probability $2^{-\Omega(N')}$, the number of wrong successful decoding steps is at most $\frac{2\tau}{r_1}N'$. Indeed, there are at most $N'$ stages, and since $N=\omega(\log N')$, a union bound over the failure probabilities $2^{-\Omega(N)}$ contributes at most $2^{-\Omega(N')}$. Therefore, except with probability $2^{-\Omega(N')}$, at least
\[
\left(1-\rho-\varepsilon-\frac{4\tau}{r_1}\right)N'
\]
outer positions are decoded correctly. This is because a position can fail to contribute a correct outer symbol for one of the following reasons:
\begin{itemize}
    \item the block is not locally good, which happens for at most $\frac{2\tau}{r_1}N'$ positions;
    \item the seed $r_i$ is bad, which happens for at most $(\rho+\varepsilon)N'$ positions except with probability $2^{-\Omega(N')}$;
    \item the block is destroyed by an earlier wrong successful step, which happens for at most $\frac{2\tau}{r_1}N'$ positions.
\end{itemize}
By \cref{claim:good-window}, every remaining position has a correct successful window, and by \cref{claim:preserve-future-good}, correct successful steps do not destroy later locally good blocks except with probability $2^{-\Omega(N')}$. Therefore at least
\[
\left(1-\rho-\varepsilon-\frac{4\tau}{r_1}\right)N'
\]
positions are decoded correctly.

Therefore, except with probability $2^{-\Omega(N')}$, the number of incorrect or erased outer positions is at most $(\rho+\varepsilon+\frac{4\tau}{r_1})N'$.
Choosing $\varepsilon>0$ small enough so that $\rho+\varepsilon+\frac{4\tau}{r_1}<\Delta_{\mathrm{out}}$,
the outer decoder recovers $(u_1,\cdots,u_{N'})=E_{\mathrm{out}}(m')$,
and hence recovers $m'$.

Finally, for every fixed $m'$, the blocks
$C(u_1;U_t),\cdots,C(u_{N'};U_t)$
are independent uniform strings in $\Sigma^N$. Therefore their concatenation is uniform in $\Sigma^{N'N}$, that is, $C'(m';U_{N't})=U_{N'N}$.
This completes the proof.
\end{proof}

\begin{corollary}\label{cor:binary-payload}
For every sufficiently large $N$, there exists an encoding map
\[
C:[M]\times\{0,1\}^r\to\{0,1\}^N
\]
with $r=O(N\log N)$ and rate $1/2-\gamma$ such that:
\begin{enumerate}
    \item for every message $m\in[M]$,
    \[
    C(m;U_r)=U_N;
    \]
    \item there exists a polynomial-time decoder $\dec$ such that, with probability at least $1-\negl(N)$ over $s\sim U_r$, the map $m\mapsto C(m;s)$ has relative edit distance $\Omega\Bigl(\frac{\gamma^3}{\log(1/\gamma)}\Bigr)$,
    and moreover, for every message $m\in[M]$ and every received word $y$,
    \[
    \ed\bigl(y,C(m;s)\bigr)
    \le
    \Omega\Bigl(\frac{\gamma^3}{\log(1/\gamma)}\Bigr)\cdot N
    \quad\Longrightarrow\quad
    \dec(s,y)=m.
    \]
\end{enumerate}
\end{corollary}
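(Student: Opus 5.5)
The plan is to instantiate \Cref{thm:amplification} with the seeded binary code of \Cref{subsec:base-bianry} as the inner map, and a Reed--Solomon (or any linear-time decodable constant-rate, high-rate) outer code $E_{\mathrm{out}}$ over the large alphabet $\{0,1\}^K$. Concretely, I would take the inner code $C(\cdot;r)$ to be the concatenation of an $[n,K,\delta n]$ RS code with $n$ independent random linear inner codes, of rate $1/2-\gamma/2$ and block length $n_{\mathrm{in}}=N_{\mathrm{in}}$. To get uniformity (item 1 of \Cref{thm:amplification}), I would append to the seed a uniform additive shift $z\in\{0,1\}^{N_{\mathrm{in}}}$ and output $C(m;r)\oplus z$. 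Shifting does not change edit-distance structure: decoding simply XORs the window with $z$ before running the decoder of \Cref{lem:prop2-rate-decode}. This makes $C(m;U_t)=U_{N_{\mathrm{in}}}$ exactly, with seed length $t=O(N_{\mathrm{in}}\log N_{\mathrm{in}})$, since each random linear inner code of length $\Theta(\log n)$ needs $O(\log^2 n)$ bits.

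Next I would verify hypotheses (2)--(4). By \Cref{lem:random-inner-property} and \Cref{lem:prop2-rate-decode}, with probability $1-\rho$, where $\rho=1/\poly(N_{\mathrm{in}})$, the code has relative edit distance $>2\delta$ and a decoder correcting $\eta=\Omega(\gamma^3/\log(1/\gamma))$ edits, with $\delta$ proportional to $\eta$. Hypothesis (2) must hold for every seed, not just good ones. I would enforce it, and the two-sided threshold property (3), as follows: wrap the decoder so that after it outputs a candidate $m$, it re-encodes, checks $\ed(z,C(m;r))\le r_1\delta N_{\mathrm{in}}$ explicitly, and otherwise outputs $\bot$. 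For bad seeds I would redefine $C(\cdot;r)$ to some fixed code of distance $>2\delta$, but that breaks exact uniformity. Instead I would observe that the proof of \Cref{thm:amplification} only invokes (2) for good seeds, and absorb bad seeds into $\rho$. Taking $r_1=\min(1,\eta/\delta)$, the checked decoder gives exactly the biconditional in (3) for good seeds. Then I choose $b$ small and $r_1$ small enough (shrinking the radius by a constant) that \eqref{eq:base-param-1}--\eqref{eq:base-param-3} hold for $q=2$. This is possible since $2H_2(r_1\delta)+r_1\delta<b\delta$ once $r_1\ll b/\log(1/(r_1\delta))$, and this only costs constants in the radius.

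For the outer layer, take $N'$ blocks with $N_{\mathrm{in}}=\omega(\log N')$; for instance $N'=N_{\mathrm{in}}^{C}$, so that $2^{-\Omega(N')}=\negl(N)$ with $N=N'N_{\mathrm{in}}$. Use a high-rate RS-type code over $\F_{2^K}$ (or a folded/AG code if $K$ is small relative to $\log N'$) with rate $1-\gamma/2$ and distance $\Delta_{\mathrm{out}}\approx\gamma/2$. Then pick $\tau$ with $\rho+6\tau/r_1<\Delta_{\mathrm{out}}$, i.e.\ $\tau=\Theta(r_1\gamma)$. The total rate is $(1/2-\gamma/2)(1-\gamma/2)\ge 1/2-\gamma$, and the tolerated edit fraction is $\tau\delta=\Omega(\gamma^3/\log(1/\gamma))$ after absorbing constants. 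The seed length is $N'\cdot O(N_{\mathrm{in}}\log N_{\mathrm{in}})=O(N\log N)$. The distance statement in item 2 follows because a decoder uniquely correcting radius $\tau\delta N$ forces pairwise distance $>\tau\delta N$ on the good event.

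The main obstacle I expect is hypothesis (2), which requires distance for \emph{every} seed while preserving exact uniformity. My first attempt is the observation above, that the amplification proof uses (2) only at good seeds, re-reading \Cref{claim:wrong-costs-edits} to confirm this. If that reading fails, I would fall back on resampling-free patching: on bad seeds, decoding simply returns $\bot$ (via the re-encoding check), which is harmless since bad seeds are counted among the $\rho N'$ erasures.
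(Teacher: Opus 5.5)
\paragraph{The gap: the XOR mask breaks edit decoding.} You are right that the base code needs a uniformity fix. With linear inner codes, $C(0;r)=0^{N_{\mathrm{in}}}$ for every $r$, so hypothesis (1) of \Cref{thm:amplification} fails for the code of \Cref{subsec:base-bianry}; the paper's proof simply takes the inner code's uniformity for granted. But your repair does not work. A global mask $z$ that the decoder strips by XORing the received window does not commute with insertions and deletions. For example, $x=(01)^k$ and $y=(10)^k$ satisfy $\ed(x,y)=2$. Yet with $z=(01)^k$ one gets $x\oplus z=0^{2k}$ and $y\oplus z=1^{2k}$, which are at edit distance $4k$. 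In general, a single deletion inside a window $J$ misaligns every later symbol of $J$ with $z$. So $J\oplus z$ looks uniform relative to $C(m;r)$ from that point on, and the decoder of \Cref{lem:prop2-rate-decode} fails. Your claim that shifting does not change the edit-distance structure is therefore false, with two consequences. First, the half of hypothesis (3) requiring every window within $r_1\delta N_{\mathrm{in}}$ of $C(m;r)\oplus z$ to decode to $m$ fails for your decoder. Second, \Cref{lem:random-inner-property,lem:prop2-rate-decode} concern the unmasked code, so they give you neither distance nor a decoder for $\{C(m;r)\oplus z\}$. The paper's overview flags exactly this obstacle to global masks under edits.

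\paragraph{What would fix it, and two smaller points.} The randomness has to live inside the code, not on top of it. Note that $C(m;r)\oplus z$ is exactly the concatenation of the outer code with random \emph{affine} inner codes $a\mapsto G_ia\oplus z_i$. Treat it that way: in the brute-force inner step, compare windows of the received word directly against the shifted inner codewords, never unmasking the received word. You would then need to re-prove \Cref{prop:insd} and the analogues of \Cref{lem:random-inner-property,lem:prop2-rate-decode} for random affine inner codes. This is plausible, since $G_ia\oplus z_i$ is uniform and pairwise independent over all $a$, so the zero block loses its special role, but it is not free.

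The smaller points are as follows.
\begin{itemize}
\item The re-encoding check only enforces the radius; it does not make bad seeds return $\bot$. Meanwhile \Cref{claim:preserve-future-good} invokes the rejection lemma, and hence hypothesis (2), at every successful step, bad seeds included. So either detect bad seeds explicitly, or charge each bad-seed step for the $O(1)$ later blocks it can damage. Detection is feasible because \Cref{prop:insd} is checkable in polynomial time: $t$ is constant and each inner code has $\poly(n)$ codewords.
\item The cited lemmas only certify hypothesis (3) for $r_1\delta\le\eta$, and the rate forces $\Delta_{\mathrm{out}}=O(\gamma)$. Hence the final radius is $\tau\delta<r_1\delta\Delta_{\mathrm{out}}/6=O(\gamma\eta)$. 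That extra factor of $\gamma$ cannot be absorbed into constants. The paper's step $\Omega(\tau\eta)=\Omega(\eta)$ makes the same slip.
\end{itemize}
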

\begin{proof}

We instantiate \Cref{thm:amplification} using the seeded binary code from \Cref{subsec:base-bianry} as the inner code.

Let
\[
C_{\mathrm{in}}:[K]\times\{0,1\}^{r_0}\to\{0,1\}^{N_0}
\]
be the inner binary encoding map, with codeword length $N_0$, message space $[K]$, seed length $r_0$, and decoding radius $\eta=\Omega\Bigl(\frac{\gamma^3}{\log(1/\gamma)}\Bigr)$.
By \Cref{lem:random-inner-property}, the bad-seed probability is $\rho=\frac{1}{\poly(N_0)}$.

We set inner-code distance parameter with $\delta:=\eta$. Choose an outer Reed--Solomon code of rate $1-\varepsilon$, where $\varepsilon>0$ is a sufficiently small constant, so that its relative Hamming distance satisfies $\Delta_{\mathrm{out}}=\varepsilon+o(1)$.
Next choose a constant $\tau>0$ such that $\rho+\frac{6\tau}{r_1}<\Delta_{\mathrm{out}}$. For example, for all sufficiently large $N'$, which is the length of Reed--Solomon code, it suffices to take $\tau=\frac{r_1\varepsilon}{12}$.

We now apply \Cref{thm:amplification}, which yields an amplified encoding map
\[
C:( [K])^{M'}\times (\{0,1\}^{r_0})^{N'}\to\{0,1\}^{N'N_0}
\]
with seed failure probability $2^{-\Omega(N')}$, exact uniformity under a uniform seed, and decoding radius
\[
\tau\eta N'N_0
=
\Omega\Bigl(\frac{\gamma^3}{\log(1/\gamma)}\Bigr)\cdot N'N_0.
\]

We now rename parameters by setting $M:=K^{M'}$, $r:=N'r_0$, and $N:=N'N_0$.
With this notation, the amplified code takes the form
\[
C:[M]\times\{0,1\}^r\to\{0,1\}^N,
\]
with rate $\frac{\log_2 M}{N} = \frac{M'\log_2 K}{N'N_0}$.
Since the outer Reed--Solomon code has rate $1-\varepsilon$ and the inner binary code has rate at least $1/2-\gamma$, we obtain
\[
\frac{\log_2 M}{N}\ge (1-\varepsilon)\Bigl(\frac12-\gamma\Bigr).
\]
By choosing $\varepsilon>0$ sufficiently small and absorbing constant-factor losses into $\gamma$, we may write the final rate as at least $1/2-\gamma$. Similarly, the theorem gives relative edit distance and decoding radius
\[
\Omega(\tau\eta)=\Omega(\eta)=\Omega\Bigl(\frac{\gamma^3}{\log(1/\gamma)}\Bigr),
\]
so the amplified code has the claimed asymptotic distance and correctable noise rate, up to constant factors.

It remains to verify the seed length bound and the uniformity statement. Since the seed of the amplified code consists of $N'$ independent inner seeds, each of length $r_0=O(N_0\log N_0)$, we have $r=N'r_0$. Therefore
\[
r=O(N'N_0\log N_0)=O(N\log N).
\]
Finally, exact uniformity under a uniform seed follows directly from the uniformity and independence of the inner code, and thus for every fixed message $m\in[M]$,
\[
C(m;U_r)=U_N.
\]
\end{proof}

\subsection{Binary Payload Code and Put Together}
We now package the binary construction above into the payload code (see \Cref{def:seeded_payload_code}) required by the generic reduction in \Cref{sec:generic_reduction_to_zero_bit}. Recall that the amplified binary seeded code provides an encoding map
$
C:[M]\times\{0,1\}^r\to\{0,1\}^n,
$
such that a uniformly random seed yields a uniformly random codeword, while decoding succeeds with overwhelming probability over the choice of the seed.

To fit the framework of \Cref{sec:generic_reduction_to_zero_bit}, we derive the seed of this code from two parts. A short seed
$
s\in\{0,1\}^d
$
is first expanded by a pseudorandom generator
$
\prg:\{0,1\}^d\to\{0,1\}^r,
$
and then combined with a public shifting vector
$
\shiftpk\in\{0,1\}^r.
$
We take $d=n^\beta$ for some constant $\beta>0$, which is sufficient for generating pseudorandom strings of length $r=O(n\log n)$. We therefore define the payload encoder by
\[
\eccload:[M]\times\{0,1\}^d\times\{0,1\}^r\to\{0,1\}^n,
\quad
\eccload(m,s,\shiftpk):=C\bigl(m,\prg(s)\oplus \shiftpk\bigr).
\]
The corresponding decoder is
\[
\eccload^{-1}:\{0,1\}^\ast\times\{0,1\}^d\times\{0,1\}^r\to [M]\cup\{\bot\},
\quad
\eccload^{-1}(y,s,\shiftpk):=\dec\bigl(\prg(s)\oplus \shiftpk,\; y\bigr),
\]
where $\dec$ is the decoder of the amplified binary seeded code.

The key point is that, when $\shiftpk$ is uniform, the effective seed
$
\prg(s)\oplus \shiftpk
$
is itself uniform for every fixed $s$. Hence the resulting payload codeword is exactly uniform. At the same time, once the short seed $s$ is recovered by the outer reduction, the decoder can reconstruct the effective seed and use it to decode the payload.

We now fix the parameters for the binary payload code. Given any target $\varepsilon>0$, we choose the underlying binary insertion-deletion code so that its rate is at least $1/2-\varepsilon$. By the construction above, the corresponding decoding radius remains $\Omega\left(\frac{\varepsilon^3}{\log(1/\varepsilon)}\right)$.
Indeed, this follows by taking the parameter $\gamma$ in \Cref{cor:binary-payload} to be a sufficiently small constant multiple of $\varepsilon$, and absorbing constant-factor losses into the $\Omega(\cdot)$ notation. The resulting encoding and decoding procedures are given in \Cref{alg:binary-payload}.

\begin{lemma}\label{lem:binary-payload-interface}
Suppose
\[
\eccload:[M]\times\{0,1\}^d\times\{0,1\}^r\to\{0,1\}^n
\]
is the payload encoding map defined above, where the underlying binary seeded code has rate at least $1/2-\varepsilon$ and decoding radius
\[
\delta=\Omega\left(\frac{\varepsilon^3}{\log(1/\varepsilon)}\right).
\]
Then $\eccload$ is a seeded payload code with rate $1 / 2 - \varepsilon$ that is robust to any $\delta$ fraction of edit errors.
\end{lemma}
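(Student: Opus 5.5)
We check the three properties of \Cref{def:seeded_payload_code} for $\eccload(m,s,\shiftpk)=C(m,\prg(s)\oplus\shiftpk)$, where $C$ is the amplified binary seeded code of \Cref{cor:binary-payload}. The rate claim is immediate: the output length is $n$ and the message space is the same $[M]$ as that of $C$, so the rate equals the rate of $C$, which is at least $1/2-\varepsilon$.

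\textbf{Uniformity.} Fix $m$ and $s$. If $\shiftpk\gets U_r$, then $\prg(s)\oplus\shiftpk$ is uniform on $\{0,1\}^r$, because XOR with a fixed string is a bijection. Item 1 of \Cref{cor:binary-payload} then gives $C(m;U_r)=U_n$, so the output is uniform.

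\textbf{Robustness.} Let $\shiftpk$ and $s$ be independent and uniform. The effective seed $\prg(s)\oplus\shiftpk$ is exactly uniform: condition on $s$, and $\shiftpk$ is still uniform. Item 2 of \Cref{cor:binary-payload} then says that, with probability $1-\negl(n)$ over this seed, every $y'$ with $\ed(y',C(m;\cdot))\le\delta n$ satisfies $\dec(\cdot,y')=m$. Since $\eccload^{-1}(y',s,\shiftpk)$ is defined to be exactly $\dec(\prg(s)\oplus\shiftpk,y')$, this is the required statement. Two bookkeeping points remain.
\begin{itemize}
\item The failure probability must be $\negl(\lambda)$. Take $n=\poly(\lambda)$ with $n\ge\lambda$; then $2^{-\Omega(N')}$, with $N'$ a polynomial in $n$, is negligible in $\lambda$.
\item The correctable radius must be at least $\delta$. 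Take $\delta$ to be the $\Omega(\varepsilon^3/\log(1/\varepsilon))$ radius guaranteed by \Cref{cor:binary-payload}, with $\gamma$ a constant multiple of $\varepsilon$.
\end{itemize}
Note that robustness needs no computational assumption here: it holds statistically because the shift makes the effective seed truly uniform.

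\textbf{Pseudorandomness.} This is the only step that uses the PRG. The adversary sees $\shiftpk$, and each query to the real oracle returns $C(m_j,\prg(s_j)\oplus\shiftpk)$ with a fresh $s_j$. I would use a hybrid over the $Q=\poly(\lambda)$ queries. In hybrid $j$, the first $j$ queries use truly uniform seeds $u_j\gets U_r$ in place of $\prg(s_j)$. Adjacent hybrids are indistinguishable by a reduction $\sB$ to \Cref{lem:crypto_PRG}:
\begin{itemize}
\item $\sB$ samples $\shiftpk$ itself.
\item At the $j$-th query, $\sB$ embeds its challenge $z$ as $C(m_j,z\oplus\shiftpk)$.
\item For the other queries, $\sB$ simulates uniform seeds or PRG outputs itself, which it can do because $C$ is polynomial-time computable.
\end{itemize}
In the final hybrid every seed is $u_j\oplus\shiftpk$ with $u_j$ fresh and uniform. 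Each such seed is uniform and independent of $\shiftpk$ and of the other queries, so by item 1 of \Cref{cor:binary-payload} each answer is an independent $U_n$ sample. That is exactly the oracle $\sU$.

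The main technical care point is the PRG stretch. We need $\prg:\{0,1\}^d\to\{0,1\}^r$ with $d=n^\beta$ and $r=O(n\log n)$, which is polynomial in $d$, so \Cref{lem:crypto_PRG} applies with security parameter $d$. Since $d=n^\beta$ and $n\ge\lambda$, we have $\negl(d)=\negl(\lambda)$, and summing over $Q$ hybrids keeps the total advantage negligible.
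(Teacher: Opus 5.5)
Your proposal is correct and follows the paper's route. Uniformity and robustness are read off from \Cref{cor:binary-payload}, using that $\prg(s)\oplus\shiftpk$ is exactly uniform whenever $\shiftpk$ is, and pseudorandomness comes from PRG security plus a hybrid over the queries. You spell out the query-by-query reduction and the security-parameter bookkeeping ($d=n^\beta$, $n\ge\lambda$) more carefully than the paper, which leaves these to ``a standard hybrid argument.''
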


\begin{proof}
The robustness and uniformity statements follow directly from \Cref{cor:binary-payload}. For pseudorandomness, observe that for every fixed public shifting vector $\shiftpk$, the effective seed $\prg(s)\oplus \shiftpk$ is computationally indistinguishable from uniform by the pseudorandomness of $\prg$. Since the map
\[
(s,\shiftpk)\mapsto \eccload(m,s,\shiftpk)=C\bigl(m,\prg(s)\oplus\shiftpk\bigr)
\]
is efficiently computable, computational indistinguishability is preserved under this encoding procedure. Thus the output distribution of $\eccload(\cdot,U_d,\shiftpk)$ is computationally indistinguishable from uniform. A standard hybrid argument, analogous to that used in the proof of \Cref{thm:const-reduction}, completes the proof.
\end{proof}

\begin{breakablealgorithm}
\caption{Binary Payload Code with Rate Close to $1/2$}\label{alg:binary-payload}
\begin{algorithmic}[1]
    \Require Security parameter $\lambda$, 
    \Statex
    \hspace{\algorithmicindent}Message $m\in[M]$, seed $s\in\{0,1\}^{d}$, public shifting vector $\shiftpk\in\{0,1\}^{r}$
    \Statex\hspace{\algorithmicindent}Pseudorandom generator $\prg:\{0,1\}^{d}\to\{0,1\}^{r}$
    \Statex \hspace{\algorithmicindent}Outer Reed--Solomon code $E_{\mathrm{out}}:(\{0,1\}^{K})^{M'}\to(\{0,1\}^{K})^{N'}$ with decoder $\dec_{\mathrm{RS}}$
    \Statex \hspace{\algorithmicindent}Base binary seeded code $C_{\mathrm{base}}:\{0,1\}^{K}\times\{0,1\}^{t_0}\to\{0,1\}^{N_0}$ with decoder $\dec_{\mathrm{base}}$
    \Statex \hspace{\algorithmicindent}Constant $L$

    \Statex

    \Function{$\eccload$}{$m,s,\shiftpk$}
        \State Compute the effective seed
        $r \gets \prg(s)\oplus \shiftpk$
        \State Parse $m=(m_1,\dots,m_{M'})\in(\{0,1\}^{K})^{M'}$
        \State Parse $r=(r_1,\dots,r_{N'})\in(\{0,1\}^{t_0})^{N'}$
        \State Compute the outer codeword
        $(u_1,\dots,u_{N'})\gets E_{\mathrm{out}}(m)\in(\{0,1\}^{K})^{N'}$
        \For{$i=1$ to $N'$}
            \State $x_i\gets C_{\mathrm{base}}(u_i;r_i)\in\{0,1\}^{N_0}$
        \EndFor
        \State \Return
        $x_1\circ x_2\circ\cdots\circ x_{N'}\in\{0,1\}^{N'N_0}$
    \EndFunction

    \Statex

    \Function{$\eccload^{-1}$}{$y,s,\shiftpk$}
        \State Compute the effective seed
        $r \gets \prg(s)\oplus \shiftpk$
        \State Parse $r=(r_1,\dots,r_{N'})\in(\{0,1\}^{t_0})^{N'}$
        \State Initialize $Y^{(1)}\gets y$

        \For{$i=1$ to $N'$}
            \State $\widehat{u}_i\gets\bot$
            \For{every contiguous interval $J\subseteq Y^{(i)}$ of length exactly $N_0$}
                \If{$\dec_{\mathrm{base}}(r_i,J)\neq\bot$}
                    \State Let $J_i$ be the leftmost such interval\Comment{Choose the first successful window}
                    \State $\widehat{u}_i\gets \dec_{\mathrm{base}}(r_i,J_i)$
                    \State \textbf{break}
                \EndIf
            \EndFor

            \If{$\widehat{u}_i\neq\bot$}
                \State Write $J_i=J_i^{\mathrm{left}}\circ J_i^{\mathrm{mid}}\circ J_i^{\mathrm{right}}$
                \Statex \hspace{\algorithmicindent}where $|J_i^{\mathrm{left}}|=|J_i^{\mathrm{right}}|=L$
                \State Delete the middle block $J_i^{\mathrm{mid}}$ from $Y^{(i)}$
                \State Denote the resulting word by $Y^{(i+1)}$
            \Else
                \State $Y^{(i+1)}\gets Y^{(i)}$\Comment{No successful window found for this block}
            \EndIf
        \EndFor

        \State Run the outer decoder on $(\widehat{u}_1,\dots,\widehat{u}_{N'})\in(\{0,1\}^{K}\cup\{\bot\})^{N'}$
        \State \Return $\dec_{\mathrm{RS}}(\widehat{u}_1,\dots,\widehat{u}_{N'})$
    \EndFunction
\end{algorithmic}
\end{breakablealgorithm}

\paragraph{Instantiation from the binary payload code.}

We now combine the binary payload code constructed above with the generic reduction in \Cref{alg:const-rate}. Using the guarantees provided by \Cref{cor:const-reduction-fix-seed-code}, we obtain the following binary pseudorandom code.

\begin{theorem}[Binary PRC with rate close to $1/2$]\label{thm:body-binary-reduction}
For every sufficiently small constant $\varepsilon>0$, the following hold.

\begin{enumerate}
    \item
    Let $p\in(0,1)$ be any constant. If there exists a zero-bit public-key (resp., secret-key) binary PRC that is robust against every $p$-bounded edit channel, then there exists a public-key (resp., secret-key) binary PRC with rate $1/2-\varepsilon$ that is robust against every $p'$-bounded edit channel, where
    \[
    p'=\Omega\left(\min\left\{p,\frac{\varepsilon^3}{\log(1/\varepsilon)}\right\}\right).
    \]

    \item
    If there exists a zero-bit public-key (resp., secret-key) binary PRC that is robust against every sublinear polynomial edit channel, then there exists a public-key (resp., secret-key) binary PRC with rate $1/2-\varepsilon$ that is robust against every sublinear polynomial edit channel.
\end{enumerate}
\end{theorem}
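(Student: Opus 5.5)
The plan is to instantiate the generic reduction of \Cref{thm:const-reduction} (via \Cref{cor:const-reduction-fix-seed-code}) over $\Sigma=\{0,1\}$, using the binary seeded payload code of \Cref{lem:binary-payload-interface} as $\eccload$. Given the target $\varepsilon$, we first pick the payload code with rate $1/2-\varepsilon/2$ and decoding radius $\delta=\Omega(\varepsilon^3/\log(1/\varepsilon))$; \Cref{lem:binary-payload-interface} gives robustness, uniformity and pseudorandomness. The PRG seed length is $d=n^{\beta}$. To match the hypothesis $d'=\ell=\Theta(n^{\gamma})$ with $\gamma<1/4$, we take $\beta<1/4$ and note that $\eccseed$ has constant rate. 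The zero-bit PRC is run with security parameter chosen so that its codeword length is $\ell=\Theta(n^{\gamma})$. Because $\lambda$ is polynomially related to $n$, negligible in $\ell$ is still negligible in $\lambda$.

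Next we check uniformity of the zero-bit PRC in the sense of \Cref{def:prc_uniformity}. For an arbitrary black-box zero-bit PRC this may fail. The fix is a public one-time pad: put a uniform mask $e\in\{0,1\}^{\ell}$ in each $\pk_j$ and output $\encz(\cdot)\oplus e$. Decoding XORs $e$ back, which commutes with edits only if we apply it after alignment. That is problematic, so instead we observe that in both corollaries we actually instantiate with \Cref{alg: cgk}, which is uniform: $a'_t=a_t\oplus e_t$ is uniform, so $y_t$ is uniform by \Cref{lem: CGK_dec_props}. For a general black box, we would state uniformity as an assumption inherited from the construction.

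The rate of the resulting scheme is $(1/2-\varepsilon/2)-o(1)\ge 1/2-\varepsilon$, since $2d'\ell=O(n^{2\gamma})=o(n)$.

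The robustness bound is $p(N)\le(1-\varepsilon')\min\{\delta,\tfrac14 p_0(\ell)\}$.

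\emph{Item 1}, with constant $p_0=p$, gives $p'=\Omega(\min\{p,\varepsilon^3/\log(1/\varepsilon)\})$ directly.

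\emph{Item 2}, with $p_0(\ell)=c/\ell^{\gamma_0}=c/n^{\gamma\gamma_0}$, lets us take $p(N)=c'/N^{\gamma\gamma_0}$. This is again sublinear polynomial and is below $\delta$ for large $N$, since $N=\Theta(n)$.

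The step I expect to be most delicate is the parameter bookkeeping in the second item. The zero-bit PRC's tolerance is measured relative to its own length $\ell=n^{\gamma}$, so the exponent degrades to $\gamma\gamma_0$. We must verify that this still defines a sublinear polynomial channel, and that the additive $O(N^{2\gamma})$ term from treating marker blocks as insertions stays below $\delta N$. We must also check that the decoders' enumeration over all substrings remains polynomial. Pseudorandomness and soundness transfer verbatim from \Cref{thm:const-reduction}, and the secret-key variant is identical.
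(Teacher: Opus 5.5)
Your proposal follows the same route as the paper, which obtains the theorem by plugging the binary seeded payload code of \Cref{lem:binary-payload-interface} into the generic reduction via \Cref{cor:const-reduction-fix-seed-code}, and then reading off the rate and the bound $p'$ from $\min\{\delta,\tfrac14 p_0(\ell)\}$. You are right that the zero-bit PRC must also satisfy the uniformity property; the paper's argument needs this too, implicitly through \Cref{thm:const-reduction}, and the CGK-based instantiation of \Cref{alg: cgk} used in the corollaries satisfies it.
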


Using \cref{thm: edit_PRC_zero_bit}, we get the following binary PRC under \Cref{asp:CG24_comb}.

\begin{corollary}\label{cor:body-binary-reduction}
Under \Cref{asp:CG24_comb}, for every security parameter $\lambda>0$ and every sufficiently small $\varepsilon>0$, there exists a binary public-key PRC with rate $1/2-\varepsilon$ that is robust against every sublinear polynomial edit channel.
\end{corollary}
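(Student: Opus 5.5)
The plan is to instantiate \Cref{cor:const-reduction-fix-seed-code} with two ingredients: the given binary zero-bit PRC as $\prcz$, and the binary seeded payload code of \Cref{lem:binary-payload-interface} as $\eccload$. The payload code has rate $1/2-\varepsilon$ and robustness $\delta=\Omega(\varepsilon^3/\log(1/\varepsilon))$. The corollary then yields a binary PRC with rate $1/2-\varepsilon-o(1)$. Absorbing the $o(1)$ term by running the argument with $\varepsilon/2$ in place of $\varepsilon$ gives rate $1/2-\varepsilon$. The resulting PRC is robust to every $p(\cdot)$-bounded edit channel with $p(N)\le(1-\varepsilon')\min\{\delta,\tfrac14\pz(\ell)\}$.

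\textbf{Checking the hypotheses.} Three hypotheses of \Cref{thm:const-reduction} need to be verified.
\begin{enumerate}
\item \emph{Uniformity of $\prcz$.} If the given zero-bit PRC lacks the uniformity property, I would enforce it by adding a uniform public one-time pad $e_j$ to each key. The encoder outputs $\encz(\cdot)\oplus e_j$, and the decoder XORs $e_j$ back before decoding. This change preserves robustness, soundness and pseudorandomness, since $e_j$ is public and independent of everything else. It is the same trick as the mask $e_t$ in \Cref{alg: cgk}; the zero-bit PRC of \Cref{alg: cgk} already has this property.
\item \emph{Seed length.} The seed code, the marker length and the payload seed must satisfy $d'=\ell=\Theta(n^{\gamma})$ with $\gamma<1/4$. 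Since \Cref{lem:binary-payload-interface} works with $d=n^\beta$ for any constant $\beta>0$, I would take $\beta$ small enough. The PRG then stretches $n^\beta$ bits to $r=O(n\log n)$ bits, which is allowed by \Cref{lem:crypto_PRG}.
\item \emph{Negligibility.} The payload block length $n$ must be polynomially related to $\lambda$, so that the failure probability $\negl(N)$ in \Cref{cor:binary-payload} is $\negl(\lambda)$.
\end{enumerate}

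\textbf{Item 1 (constant $p$).} Here $\pz(\ell)=p$, and the corollary gives $p'=(1-\varepsilon')\min\{\delta,p/4\}=\Omega(\min\{p,\varepsilon^3/\log(1/\varepsilon)\})$.

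\textbf{Item 2 (sublinear polynomial channels).} Here $\pz(\ell)=c_0/\ell^{\gamma_0}$ with $\ell=\Theta(n^{\gamma})$, so $\tfrac14\pz(\ell)=\Theta(N^{-\gamma\gamma_0})$. Any channel with $p(N)=c/N^{\gamma''}$, for $\gamma''\ge\gamma\gamma_0$ and $c$ small enough, satisfies the hypothesis eventually, because $\delta$ is a constant. The resulting channel class is still sublinear polynomial, though with a smaller exponent than the original one.

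\textbf{Main obstacle.} The step I expect to be hardest is exactly this exponent bookkeeping in item 2, together with making sure the constraints all close. The constraint $d'=\ell$ forces the zero-bit codeword length to be $\Theta(n^\gamma)$, and the zero-bit PRC must be instantiated at that length while still giving negligible error in $\lambda$. I would handle this by choosing the zero-bit PRC's own security parameter so that its codeword length equals $\ell$, and by setting $n=\poly(\lambda)$ large enough that $\ell=n^{\gamma}\ge\lambda$. The remaining steps are direct applications of the cited results.
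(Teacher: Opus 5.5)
Your proposal is correct and follows the paper's route. The paper obtains this corollary by plugging the zero-bit PRC of \Cref{thm: edit_PRC_zero_bit} (the construction of \Cref{alg: cgk}, which is uniform) into \Cref{cor:const-reduction-fix-seed-code} together with the binary payload code of \Cref{lem:binary-payload-interface}, i.e., item 2 of \Cref{thm:body-binary-reduction}; your bookkeeping ($d'=\ell=\Theta(n^\gamma)$, $n=\poly(\lambda)$, tolerated rate $\Theta(N^{-\gamma\gamma_0})$) only makes that instantiation explicit.

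One caveat: your fallback of XOR-ing a public pad onto the output of an arbitrary edit-robust zero-bit PRC would not preserve robustness. After a single insertion or deletion the pad is misaligned with the received word; in \Cref{alg: cgk} the pad $e_t$ works only because it is applied in Hamming space before $\emb^{-1}$. This does not affect the corollary, since the zero-bit PRC of \Cref{alg: cgk} is already uniform.
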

\bibliographystyle{alpha}
\bibliography{references}

@article{schulman2002asymptotically,
  title         = {Asymptotically good codes correcting insertions, deletions, and transpositions},
  author        = {Schulman, Leonard J and Zuckerman, David},
  journal       = {IEEE transactions on information theory},
  volume        = {45},
  number        = {7},
  pages         = {2552--2557},
  year          = {2002},
  publisher     = {IEEE}
}

@article{sakataFastErasureanderrorDecoding1998,
  title         = {Fast Erasure-and-Error Decoding of Algebraic Geometry Codes up to the {{Feng-Rao}} Bound},
  author        = {Sakata, S. and Leonard, D.A. and Jensen, H.E. and Hoholdt, T.},
  year          = 1998,
  month         = jul,
  journal       = {IEEE Transactions on Information Theory},
  volume        = {44},
  number        = {4},
  pages         = {1558--1564},
  issn          = {1557-9654},
  doi           = {10.1109/18.681332},
  urldate       = {2026-04-30}
}

@article{golowich2024edit,
  title         = {Edit distance robust watermarks via indexing pseudorandom codes},
  author        = {Golowich, Noah and Moitra, Ankur},
  journal       = {Advances in Neural Information Processing Systems},
  volume        = {37},
  pages         = {20645--20693},
  year          = {2024}
}

@article{christ2025improved,
  title         = {Improved Pseudorandom Codes from Permuted Puzzles},
  author        = {Christ, Miranda and Golowich, Noah and Gunn, Sam and Moitra, Ankur and Wichs, Daniel},
  journal       = {arXiv preprint arXiv:2512.08918},
  year          = {2025}
}

@inproceedings{CGK_embedding,
  author        = {Chakraborty, Diptarka and Goldenberg, Elazar and Kouck\'{y}, Michal},
  title         = {Streaming algorithms for embedding and computing edit distance in the low distance regime},
  year          = {2016},
  isbn          = {9781450341325},
  publisher     = {Association for Computing Machinery},
  address       = {New York, NY, USA},
  url           = {https://doi.org/10.1145/2897518.2897577},
  doi           = {10.1145/2897518.2897577},
  booktitle     = {Proceedings of the Forty-Eighth Annual ACM Symposium on Theory of Computing},
  pages         = {712–725},
  numpages      = {14},
  location      = {Cambridge, MA, USA},
  series        = {STOC '16}
}

@inproceedings{christPseudorandomErrorCorrectingCodes2024,
  title         = {Pseudorandom Error-Correcting Codes},
  author        = {Christ, Miranda and Gunn, Sam},
  booktitle     = {Annual International Cryptology Conference},
  pages         = {325--347},
  year          = {2024}
}

@article{cheng2023linear,
  title         = {Linear insertion deletion codes in the high-noise and high-rate regimes},
  author        = {Cheng, Kuan and Jin, Zhengzhong and Li, Xin and Wei, Zhide and Zheng, Yu},
  journal       = {arXiv preprint arXiv:2303.17370},
  year          = {2023}
}

@inproceedings{GG25_RANDOM,
  author        = {Ghentiyala, Surendra and Guruswami, Venkatesan},
  title         = {{New Constructions of Pseudorandom Codes}},
  booktitle     = {Approximation, Randomization, and Combinatorial Optimization. Algorithms and Techniques (APPROX/RANDOM 2025)},
  pages         = {54:1--54:22},
  series        = {Leibniz International Proceedings in Informatics (LIPIcs)},
  isbn          = {978-3-95977-397-3},
  issn          = {1868-8969},
  year          = {2025},
  volume        = {353},
  editor        = {Ene, Alina and Chattopadhyay, Eshan},
  publisher     = {Schloss Dagstuhl -- Leibniz-Zentrum f{\"u}r Informatik},
  address       = {Dagstuhl, Germany},
  url           = {https://drops.dagstuhl.de/entities/document/10.4230/LIPIcs.APPROX/RANDOM.2025.54},
  urn           = {urn:nbn:de:0030-drops-244202},
  doi           = {10.4230/LIPIcs.APPROX/RANDOM.2025.54}
}

@article{haeuplerSynchronizationStringsCodes2021,
  title         = {Synchronization {{Strings}}: {{Codes}} for {{Insertions}} and {{Deletions Approaching}} the {{Singleton Bound}}},
  shorttitle    = {Synchronization {{Strings}}},
  author        = {Haeupler, Bernhard and Shahrasbi, Amirbehshad},
  year          = 2021,
  month         = oct,
  journal       = {Journal of the ACM},
  volume        = {68},
  number        = {5},
  pages         = {1--39},
  issn          = {0004-5411, 1557-735X},
  doi           = {10.1145/3468265},
  urldate       = {2025-10-29},
  langid        = {english}
}

@misc{Aaronson_blog_AT_safety,
  author        = {Scott Aaronson},
  title         = {{My AI Safety Lecture for UT Effective Altruism}},
  journal       = {Shtetl-Optimized},
  month         = {November},
  year          = {2022},
  howpublished  = {\url{https://scottaaronson.blog/?p=6823}}
}

@inproceedings{CGZ24,
  author        = {Miranda Christ and Sam Gunn and Or Zamir},
  editor        = {Shipra Agrawal and Aaron Roth},
  title         = {Undetectable Watermarks for Language Models},
  booktitle     = {The Thirty Seventh Annual Conference on Learning Theory, June 30 - July 3, 2023, Edmonton, Canada},
  series        = {Proceedings of Machine Learning Research},
  volume        = {247},
  pages         = {1125--1139},
  publisher     = {{PMLR}},
  year          = {2024},
  url           = {https://proceedings.mlr.press/v247/christ24a.html},
  bibsource     = {dblp computer science bibliography, https://dblp.org}
}

@inproceedings{AAC+24_stoc25,
  author        = {Omar Alrabiah and Prabhanjan Ananth and Miranda Christ and Yevgeniy Dodis and Sam Gunn},
  editor        = {Michal Kouck{\'{y}} and Nikhil Bansal},
  title         = {Ideal Pseudorandom Codes},
  booktitle     = {Proceedings of the 57th Annual {ACM} Symposium on Theory of Computing, {STOC} 2025, Prague, Czechia, June 23-27, 2025},
  pages         = {1638--1647},
  publisher     = {{ACM}},
  year          = {2025},
  url           = {https://doi.org/10.1145/3717823.3718309},
  doi           = {10.1145/3717823.3718309},
  bibsource     = {dblp computer science bibliography, https://dblp.org}
}

@inproceedings{garg2025black,
  title         = {Black-box crypto is useless for pseudorandom codes},
  author        = {Garg, Sanjam and Gunn, Sam and Wang, Mingyuan},
  booktitle     = {Theory of Cryptography Conference},
  pages         = {205--224},
  year          = {2025},
  organization  = {Springer}
}

@inproceedings{dottling2025separating,
  title         = {Separating pseudorandom codes from local oracles},
  author        = {D{\"o}ttling, Nico and M{\"u}ller, Anne and Rajasree, Mahesh Sreekumar},
  booktitle     = {Theory of Cryptography Conference},
  pages         = {225--257},
  year          = {2025},
  organization  = {Springer}
}

@article{kuditipudi2024robust,
  title         = {Robust Distortion-free Watermarks for Language Models},
  author        = {Rohith Kuditipudi and John Thickstun and Tatsunori Hashimoto and Percy Liang},
  journal       = {Transactions on Machine Learning Research},
  issn          = {2835-8856},
  year          = {2024},
  url           = {https://openreview.net/forum?id=FpaCL1MO2C},
  note          = {}
}

@inproceedings{zhao2024provable,
  title         = {Provable Robust Watermarking for {AI}-Generated Text},
  author        = {Xuandong Zhao and Prabhanjan Vijendra Ananth and Lei Li and Yu-Xiang Wang},
  booktitle     = {The Twelfth International Conference on Learning Representations},
  year          = {2024},
  url           = {https://openreview.net/forum?id=SsmT8aO45L}
}

@inproceedings{kirchenbauer2023watermark,
  title         = {A watermark for large language models},
  author        = {Kirchenbauer, John and Geiping, Jonas and Wen, Yuxin and Katz, Jonathan and Miers, Ian and Goldstein, Tom},
  booktitle     = {International conference on machine learning},
  pages         = {17061--17084},
  year          = {2023},
  organization  = {PMLR}
}

@inproceedings{gunn2025an,
  title         = {An Undetectable Watermark for Generative Image Models},
  author        = {Sam Gunn and Xuandong Zhao and Dawn Song},
  booktitle     = {The Thirteenth International Conference on Learning Representations},
  year          = {2025},
  url           = {https://openreview.net/forum?id=jlhBFm7T2J}
}

@inproceedings{cohen2025watermarking,
  title         = {Watermarking language models for many adaptive users},
  author        = {Cohen, Aloni and Hoover, Alexander and Schoenbach, Gabe},
  booktitle     = {2025 IEEE Symposium on Security and Privacy (SP)},
  pages         = {2583--2601},
  year          = {2025},
  organization  = {IEEE}
}

@inproceedings{kirchenbauer2024on,
  title         = {On the Reliability of Watermarks for Large Language Models},
  author        = {John Kirchenbauer and Jonas Geiping and Yuxin Wen and Manli Shu and Khalid Saifullah and Kezhi Kong and Kasun Fernando and Aniruddha Saha and Micah Goldblum and Tom Goldstein},
  booktitle     = {The Twelfth International Conference on Learning Representations},
  year          = {2024},
  url           = {https://openreview.net/forum?id=DEJIDCmWOz}
}

@article{krishna2023paraphrasing,
  title         = {Paraphrasing evades detectors of ai-generated text, but retrieval is an effective defense},
  author        = {Krishna, Kalpesh and Song, Yixiao and Karpinska, Marzena and Wieting, John and Iyyer, Mohit},
  journal       = {Advances in neural information processing systems},
  volume        = {36},
  pages         = {27469--27500},
  year          = {2023}
}

@article{sadasivan2023can,
  title         = {Can AI-generated text be reliably detected?},
  author        = {Sadasivan, Vinu Sankar and Kumar, Aounon and Balasubramanian, Sriram and Wang, Wenxiao and Feizi, Soheil},
  journal       = {arXiv preprint arXiv:2303.11156},
  year          = {2023}
}

@inproceedings{qu2025provably,
  title         = {Provably robust multi-bit watermarking for $\{$AI-generated$\}$ text},
  author        = {Qu, Wenjie and Zheng, Wengrui and Tao, Tianyang and Yin, Dong and Jiang, Yanze and Tian, Zhihua and Zou, Wei and Jia, Jinyuan and Zhang, Jiaheng},
  booktitle     = {34th USENIX Security Symposium (USENIX Security 25)},
  pages         = {201--220},
  year          = {2025}
}

@article{kuhn2024measuring,
  title         = {Measuring the accuracy of automatic speech recognition solutions},
  author        = {Kuhn, Korbinian and Kersken, Verena and Reuter, Benedikt and Egger, Niklas and Zimmermann, Gottfried},
  journal       = {ACM Transactions on Accessible Computing},
  volume        = {16},
  number        = {4},
  pages         = {1--23},
  year          = {2024},
  publisher     = {ACM New York, NY}
}

@article{shu2024visual,
  title         = {Visual text meets low-level vision: A comprehensive survey on visual text processing},
  author        = {Shu, Yan and Zeng, Weichao and Li, Zhenhang and Zhao, Fangmin and Zhou, Yu},
  journal       = {arXiv preprint arXiv:2402.03082},
  year          = {2024}
}

@inproceedings{awasthi2019parallel,
  title         = {Parallel iterative edit models for local sequence transduction},
  author        = {Awasthi, Abhijeet and Sarawagi, Sunita and Goyal, Rasna and Ghosh, Sabyasachi and Piratla, Vihari},
  booktitle     = {Proceedings of the 2019 conference on empirical methods in natural language processing and the 9th international joint conference on natural language processing (EMNLP-IJCNLP)},
  pages         = {4260--4270},
  year          = {2019}
}

@article{fairoze2025publicly,
  title         = {Publicly-Detectable Watermarking for Language Models},
  author        = {Fairoze, Jaiden and Garg, Sanjam and Jha, Somesh and Mahloujifar, Saeed and Mahmoody, Mohammad and Wang, Mingyuan},
  journal       = {IACR Communications in Cryptology},
  volume        = {1},
  number        = {4},
  year          = {2025}
}

@inproceedings{guruswami2016efficiently,
  title         = {Efficiently decodable insertion/deletion codes for high-noise and high-rate regimes},
  author        = {Guruswami, Venkatesan and Li, Ray},
  booktitle     = {2016 IEEE International Symposium on Information Theory (ISIT)},
  pages         = {620--624},
  year          = {2016},
  organization  = {IEEE}
}

@article{guruswami2017deletion,
  title         = {Deletion codes in the high-noise and high-rate regimes},
  author        = {Guruswami, Venkatesan and Wang, Carol},
  journal       = {IEEE Transactions on Information Theory},
  volume        = {63},
  number        = {4},
  pages         = {1961--1970},
  year          = {2017},
  publisher     = {IEEE}
}

@inproceedings{cheng2018deterministic,
  title         = {Deterministic document exchange protocols, and almost optimal binary codes for edit errors},
  author        = {Cheng, Kuan and Jin, Zhengzhong and Li, Xin and Wu, Ke},
  booktitle     = {2018 IEEE 59th Annual Symposium on Foundations of Computer Science (FOCS)},
  pages         = {200--211},
  year          = {2018},
  organization  = {IEEE}
}

@inproceedings{zhao2025sok,
  title         = {Sok: Watermarking for ai-generated content},
  author        = {Zhao, Xuandong and Gunn, Sam and Christ, Miranda and Fairoze, Jaiden and Fabrega, Andres and Carlini, Nicholas and Garg, Sanjam and Hong, Sanghyun and Nasr, Milad and Tramer, Florian and others},
  booktitle     = {2025 IEEE Symposium on Security and Privacy (SP)},
  pages         = {2621--2639},
  year          = {2025},
  organization  = {IEEE}
}
\end{document}